\newcommand{\AMSclass}[1]{{\textbf{A.M.S. subject classification:} #1}}
\newcommand{\assign}{:=}
\newcommand{\keywords}[1]{{\textbf{Keywords:} #1}}
\newcommand{\nonesep}{}
\newcommand{\tmdummy}{$\mbox{}$}
\newcommand{\tmmathbf}[1]{\ensuremath{\boldsymbol{#1}}}
\newcommand{\tmop}[1]{\ensuremath{\operatorname{#1}}}
\newcommand{\tmscript}[1]{\text{\scriptsize{$#1$}}}
\newcommand{\tmtextbf}[1]{{\bfseries{#1}}}
\newcommand{\tmtextit}[1]{{\itshape{#1}}}
\newcommand{\tmtextmd}[1]{{\mdseries{#1}}}
\newcommand{\tmtextup}[1]{{\upshape{#1}}}
\newenvironment{enumeratenumeric}{\begin{enumerate}[1.] }{\end{enumerate}}
\newenvironment{itemizedot}{\begin{itemize} }{\end{itemize}}
\newenvironment{itemizeminus}{\begin{itemize} }{\end{itemize}}
\newenvironment{proof}{\noindent\textbf{Proof\ }}{\hspace*{\fill}$\Box$\medskip}
\numberwithin{equation}{section}  
\numberwithin{figure}{section}  
\newtheorem{theorem}{Theorem}[section]
\newtheorem{lemma}[theorem]{Lemma}
\newtheorem{proposition}[theorem]{Proposition}
\newtheorem{corollary}[theorem]{Corollary}
\newtheorem{definition}[theorem]{Definition}
\newtheorem{remark}[theorem]{Remark}
\newcommand{\lp}[1]{\ifthenelse{\equal{#1}{0}}{(}{}\ifthenelse{\equal{#1}{1}}{\bigl(}{}\ifthenelse{\equal{#1}{2}}{\Bigl(}{}\ifthenelse{\equal{#1}{3}}{\biggl(}{}\ifthenelse{\equal{#1}{4}}{\Biggl(}{}\ifthenelse{\equal{#1}{5}}{\Biggl(}{}}
\newcommand{\rp}[1]{\ifthenelse{\equal{#1}{0}}{)}{}\ifthenelse{\equal{#1}{1}}{\bigr)}{}\ifthenelse{\equal{#1}{2}}{\Bigr)}{}\ifthenelse{\equal{#1}{3}}{\biggr)}{}\ifthenelse{\equal{#1}{4}}{\Biggr)}{}\ifthenelse{\equal{#1}{5}}{\Biggr)}{}}
\newcommand{\lbc}[1]{\ifthenelse{\equal{#1}{0}}{\{}{}\ifthenelse{\equal{#1}{1}}{\bigl\{}{}\ifthenelse{\equal{#1}{2}}{\Bigl\{}{}\ifthenelse{\equal{#1}{3}}{\biggl\{}{}\ifthenelse{\equal{#1}{4}}{\Biggl\{}{}\ifthenelse{\equal{#1}{5}}{\Biggl\{}{}}
\newcommand{\rbc}[1]{\ifthenelse{\equal{#1}{0}}{\}}{}\ifthenelse{\equal{#1}{1}}{\bigr\}}{}\ifthenelse{\equal{#1}{2}}{\Bigr\}}{}\ifthenelse{\equal{#1}{3}}{\biggr\}}{}\ifthenelse{\equal{#1}{4}}{\Biggr\}}{}\ifthenelse{\equal{#1}{5}}{\Biggr\}}{}}
\newcommand{\lba}[1]{\ifthenelse{\equal{#1}{0}}{\langle}{}\ifthenelse{\equal{#1}{1}}{\bigl\langle}{}\ifthenelse{\equal{#1}{2}}{\Bigl\langle}{}\ifthenelse{\equal{#1}{3}}{\biggl\langle}{}\ifthenelse{\equal{#1}{4}}{\Biggl\langle}{}\ifthenelse{\equal{#1}{5}}{\Biggl\langle}{}}
\newcommand{\rba}[1]{\ifthenelse{\equal{#1}{0}}{\rangle}{}\ifthenelse{\equal{#1}{1}}{\bigr\rangle}{}\ifthenelse{\equal{#1}{2}}{\Bigr\rangle}{}\ifthenelse{\equal{#1}{3}}{\biggr\rangle}{}\ifthenelse{\equal{#1}{4}}{\Biggr\rangle}{}\ifthenelse{\equal{#1}{5}}{\Biggr\rangle}{}}
\newcommand{\ve}[1]{\ifthenelse{\equal{#1}{0}}{|}{}\ifthenelse{\equal{#1}{1}}{\big|}{}\ifthenelse{\equal{#1}{2}}{\Big|}{}\ifthenelse{\equal{#1}{3}}{\bigg|}{}\ifthenelse{\equal{#1}{4}}{\Bigg|}{}\ifthenelse{\equal{#1}{5}}{\Bigg|}{}}
\newcommand{\lb}[1]{\ifthenelse{\equal{#1}{0}}{[}{}\ifthenelse{\equal{#1}{1}}{\bigl[}{}\ifthenelse{\equal{#1}{2}}{\Bigl[}{}\ifthenelse{\equal{#1}{3}}{\biggl[}{}\ifthenelse{\equal{#1}{4}}{\Biggl[}{}\ifthenelse{\equal{#1}{5}}{\Biggl[}{}}
\newcommand{\rb}[1]{\ifthenelse{\equal{#1}{0}}{]}{}\ifthenelse{\equal{#1}{1}}{\bigr]}{}\ifthenelse{\equal{#1}{2}}{\Bigr]}{}\ifthenelse{\equal{#1}{3}}{\biggr]}{}\ifthenelse{\equal{#1}{4}}{\Biggr]}{}\ifthenelse{\equal{#1}{5}}{\Biggr]}{}}
\newcommand{\srp}[3]{\ifthenelse{\equal{#1}{0}}{)^{#2}_{#3}}{}\ifthenelse{\equal{#1}{1}}{\bigr)^{#2}_{#3}}{}\ifthenelse{\equal{#1}{2}}{\Bigr)^{#2}_{#3}}{}
\ifthenelse{\equal{#1}{3}}{\biggr)^{#2}_{#3}}{}\ifthenelse{\equal{#1}{4}}{\Biggr)^{#2}_{#3}}{}\ifthenelse{\equal{#1}{5}}{\Biggr)^{#2}_{#3}}{}}
\newcommand{\srb}[3]{\ifthenelse{\equal{#1}{0}}{]^{#2}_{#3}}{}\ifthenelse{\equal{#1}{1}}{\bigr]^{#2}_{#3}}{}\ifthenelse{\equal{#1}{2}}{\Bigr]^{#2}_{#3}}{}
\ifthenelse{\equal{#1}{3}}{\biggr]^{#2}_{#3}}{}\ifthenelse{\equal{#1}{4}}{\Biggr]^{#2}_{#3}}{}\ifthenelse{\equal{#1}{5}}{\Biggr]^{#2}_{#3}}{}}
\newcommand{\srbc}[3]{\ifthenelse{\equal{#1}{0}}{{\}}^{#2}_{#3}}{}\ifthenelse{\equal{#1}{1}}{\bigr{\}}^{#2}_{#3}}{}\ifthenelse{\equal{#1}{2}}{\Bigr{\}}^{#2}_{#3}}{}
\ifthenelse{\equal{#1}{3}}{\biggr{\}}^{#2}_{#3}}{}\ifthenelse{\equal{#1}{4}}{\Biggr{\}}^{#2}_{#3}}{}\ifthenelse{\equal{#1}{5}}{\Biggr{\}}^{#2}_{#3}}{}}
\newcommand{\srba}[3]{\ifthenelse{\equal{#1}{0}}{\rangle^{#2}_{#3}}{}\ifthenelse{\equal{#1}{1}}{\bigr\rangle^{#2}_{#3}}{}\ifthenelse{\equal{#1}{2}}{\Bigr\rangle^{#2}_{#3}}{}
\ifthenelse{\equal{#1}{3}}{\biggr\rangle^{#2}_{#3}}{}\ifthenelse{\equal{#1}{4}}{\Biggr\rangle^{#2}_{#3}}{}\ifthenelse{\equal{#1}{5}}{\Biggr\rangle^{#2}_{#3}}{}}
\newcommand{\sve}[3]{\ifthenelse{\equal{#1}{0}}{|^{#2}_{#3}}{}\ifthenelse{\equal{#1}{1}}{\bigr|^{#2}_{#3}}{}\ifthenelse{\equal{#1}{2}}{\Bigr|^{#2}_{#3}}{}
\ifthenelse{\equal{#1}{3}}{\biggr|^{#2}_{#3}}{}\ifthenelse{\equal{#1}{4}}{\Biggr|^{#2}_{#3}}{}\ifthenelse{\equal{#1}{5}}{\Biggr|^{#2}_{#3}}{}}
\newcommand{\im}{\mathcal{I}m}
\newcommand{\re}{\mathcal{R}e}
\newcommand{\sh}{\ensuremath{\tmop{sh}}}
\newcommand{\ch}{\ensuremath{\tmop{ch}}}
\newcommand{\tgh}{\ensuremath{\tmop{th}}}
\newcommand{\paraboleb}[1]{\ensuremath{S_{#1}}}
\newcommand{\disqueb}[1]{\ensuremath{D_{#1}}}
\newcommand{\demiplanb}{\ensuremath{\mathbbm{C}^+}}
\newcommand{\parabole}[1]{{\paraboleb{#1}}}
\newcommand{\disque}[1]{{\disqueb{#1}}}
\newcommand{\demiplan}[1]{{\demiplanb}#1}
\newcommand{\ifr}[3]{I_{{#1},{#2}}^{#3}}
\newcommand{\texspace}[2]{ 
\ifthenelse{\equal{#1}{small}}{\par}{}
\ifthenelse{\equal{#1}{med}}{\par\text{}\par}{}
\ifthenelse{\equal{#1}{big}}{\par\text{}\par\text{}
\par}{}{\noindent #2}}
\begin{document}

\title{ Borel summation of the semi-classical expansion of the partition
function associated to the Schr\"odinger equation\thanks{This paper has been written using the GNU TEXMACS scientific text editor.}
\thanks{\keywords{heat
kernel, quantum mechanics, semi-classical, asymptotic expansion,
Wigner-Kirkwood expansion, Borel summation, tree graph equality, mould
calculus, hypergeometric}; \AMSclass{35K08, 30E15, 35C20,
81Q30}}}\author{Thierry Harg\'e}\date{May 15, 2013}\maketitle

\begin{abstract}
  Let $H$ be a perturbation of the semi-classical harmonic oscillator on
  $\mathbbm{R}^{\nu}$. We prove that the partition function associated to the
  Hamiltonian $H$ is the Borel sum of its $h$-expansion.
\end{abstract}

\section{Introduction}

{\texspace{small}{\tmtextbf{-1-}}} Let $\nu \geqslant 1$. Let $\omega_1,
\ldots, \omega_{\nu} > 0$ and $h > 0$. Let
\begin{equation}
  \label{introductionfred0.5} \begin{array}{lll}
    H & \assign & - h^2 \partial_x^2 + \frac{1}{4} ( \tmmathbf{\omega} x)^2 -
    c (x)\\
    &  & \\
    & \assign & - h^2 (\partial_{x_1}^2 + \cdots + \partial_{x_{\nu}}^2) +
    \frac{1}{4} (\omega_1^2 x_1^2 + \cdots + \omega_{\nu}^2 x_{\nu}^2) - c
    (x_1, \ldots, x_{\nu}) .
  \end{array}
\end{equation}
Under suitable assumptions on the potential $c$, the operator $H$ is
self-adjoint on $L^2 (\mathbbm{R}^{\nu})$ with discrete spectrum and the
following partition function
\[ \Theta_H (t, h) \assign \tmop{Tr} \lp{2} e^{- \frac{t}{h} H} \rp{2} =
   \sum_{n = 1}^{+ \infty} e^{- t \lambda_n (h) / h} \]
is well defined on $] 0, + \infty [^2$. Here
\[ \lambda_1 (h) \leqslant \lambda_2 (h) \leqslant \cdots \leqslant \lambda_n
   (h) \leqslant \cdots \]
denote the eigenvalues of the operator $H$. Let $\Theta^{\tmop{conj}}_H (t,
h)$ be defined by the identity
\[ \Theta_H (t, h) = e^{t c (0) / h} \times \Theta^{\tmop{conj}}_H (t, h) . \]
Then, under suitable assumptions on the potential $c$, for $t \in] 0, + \infty
[$, the following expansion holds
\begin{equation}
  \label{introductionfred1} \Theta^{\tmop{conj}}_H (t, h) = a_0 (t) + a_1 (t)
  h + \cdots + a_{r - 1} (t) h^{r - 1} + h^r \mathcal{O}_{h \rightarrow 0^+}
  (1) .
\end{equation}
In this paper we give conditions on the potential $c$ so that this expansion
is Borel summable (with respect to $h$) and its Borel sum is equal to
$\Theta^{\tmop{conj}}_H (t, h)$ (see Theorem \ref{theorembetty2} for a more
precise statement). \ A Borel summation statement concerning the heat kernel
is also proved (see Proposition \ref{propositionbetty1}).

We also give and use a Borel summation statement for multidimensional Gaussian
integrals (Proposition \ref{nevflora0.1}). This statement can be viewed as a
consequence of Proposition \ref{nevalclara1} which deal with so-called
hypergeometric vection transforms. These transforms map $\hat{\mathcal{N}}_{R,
K}$ (see Definition \ref{nevclara2}), the space of summable symbols, into
itself. Hypergeometric vection transforms are related to vection transforms
which satisfy a similar property. Vection transforms play a role in celeration
theory [E4]. See Section \ref{nevflora} and Appendix B for more details.

{\texspace{med}{\tmtextbf{-2-}}} Quantum mechanics gives many examples of
divergent expansions [Si]. Let us focus on semi-classical expansions related
to the Schr\"odinger equation. In this case, an important motivation is to
describe quantum quantities with the help of classical quantities [B-B]. For
instance the coefficients
\[ c (0), a_0 (t), \ldots, a_r (t), \ldots \]
are classical quantities whereas $\Theta_H (t, h)$ is a quantum quantity (see
also introduction in [Ha4]). How to recover a quantum quantity with the help
of the coefficients of its $h$-expansion? Notice that asymptotic points of
view do not provide an answer to this question. A Borel summation viewpoint is
used by Voros [V1] and Delabaere, Dillinger, Pham [D-D-P] for the study of the
one dimensional Schr\"odinger operator $- h^2 \frac{d^2}{d x^2} + V (x)$,
where the potential $V$ is polynomial. In particular, a study of tunnelling is
proposed in [D-D-P]. This involves a non-elementary Borel summation process.

A semi-classical interpretation can be proposed for the small time expansion
of the heat kernel or the partition function (see for instance [Ha4]).
Actually there are a lot of similarities between Borel summability of
$h$-expansions and small time (high temperature) expansions [Ha4].

We also find technical similarities between the study of the groundstate
energy of the massless spin-boson model [A] and our work. In this paper, the
description of the groundstate energy is viewed with the help of the heat
kernel, the interaction between the bosonic quantum field and the spin model
is viewed as a perturbation (without renormalization), and above all, the tree
graph equality plays an important role (we give more details about this
equality and its use in the sequel).

The operator $H$ describes the behaviour of a quantum particle moving in a
classical field. The interaction between the particle and the field is
understood via the deformation formula. This formula, which can be derived
from the Dyson expansion [On], is a multiple scattering expansion (see [Fe,
Figure 1]). The convergence of the $h$-expansion of the heat kernel is studied
with the use of this formula and with the help of the tree graph equality, an
identity used in statistical mechanics and quantum field theory [Br], dealing
with an infinite number of particles. The Hamiltonian $H$ does not involve,
namely, second quantization but our method, through the deformation formula,
involves virtual particles which copy the studied particle at different times
(see [Ha4, Appendix, Formula 5.1]).

We find similarities between the use of the tree graph equality and the
arborification-coarborification process, an important tool, for instance, when
linearization of vector fields or diffeomorphisms are considered [E2, E-V]. In
both cases, in a general setting, by an algebraic way, one can rearrange terms
in order to restore convergence.

{\texspace{med}{\tmtextbf{-3-}}} Let us comment the assumptions on the
function $c$. The function $c$ is viewed as the Fourier transform of some
Borel measure $\mu$:
\[ c (x) = \int_{\mathbbm{R}^{\nu}} \exp (ix \cdot \xi) d \mu (\xi) . \]
\begin{itemize}
  \item Proving a Borel summation statement for the heat kernel
  \[ \lba{1} x|e^{- \frac{t}{h} H} |y \rba{1} \]
  (see Proposition \ref{propositionbetty1}) requires the assumption
  (\ref{propositionbetty1.2}). This assumption holds for instance if $\nu =
  1$, $c (x) = k \cos x$ where $k$ is an arbitrary real number and $T$ is
  small enough. Here, $\mu \assign \frac{k}{2} (\delta_{\xi = 1} + \delta_{\xi
  = - 1})$. This allows one to build a potential
  \[ V \assign \frac{1}{4} ( \tmmathbf{\omega} x)^2 - c (x) \]
  with an arbitrary number of wells. Of course, the Borel summation statement
  for the corresponding heat kernel holds for small values of $t$.
  
  Notice that giving a Borel summation statement for the the Green function or
  resolvent (see [B-B]) with the same assumptions on the potential seems out
  of reach with our method. Since, under suitable assumptions,
  \begin{eqnarray*}
    \lba{1} x| \lp{1} H + \lambda \srp{1}{- 1}{} |y \rba{1} = &  & \int_0^{+
    \infty} e^{- \lambda t} \lba{1} x|e^{- t H} |y \rba{1} d t\\
    = &  & \int_0^{+ \infty} e^{- \lambda \frac{t}{h}} \lba{1} x|e^{-
    \frac{t}{h} H} |y \rba{1} \frac{d t}{h},
  \end{eqnarray*}
  the knowledge of the Green function through the heat kernel involves large
  values of $t$; this is consistent with the above remarks.
  
  \item Proving a Borel summation statement for the partition function
  $\Theta^{\tmop{conj}}_H$ (see Theorem \ref{theorembetty2}) requires also the
  assumption (\ref{propositionbetty1.2}) and the supplementary assumptions
  (\ref{theorembetty2.3}), (\ref{theorembetty3}), (\ref{theorembetty4}). These
  assumptions hold for instance if $\nu = 1$,
  \begin{equation}
    \label{commentamanda2} c (x) = k e^{- x^2 / 2}, \varepsilon = 1 / 4,
  \end{equation}
  $k \in \mathbbm{R}$ and $|k|$ is small enough. Here $d \mu (\xi) = k (2
  \pi)^{- 1 / 2} e^{- \xi^2 / 2} d \xi$. Notice that the potential $V$ given
  by the above choice for $c$ ($k \neq 0$) has no real critical points except
  $0$ if $|k|$ is small enough: we do not consider tunnelling.
  
  We do not give a Borel summation statement concerning the eigenvalues
  associated to the operator $H$: all complex critical times of the potential
  $V$ certainly must be taken into account for such a result. Let us comment
  this. Under suitable assumptions on $V$, the first eigenvalue (ground state
  energy) satisfies
  \[ \lambda_1 (h) = \lim_{t \longrightarrow + \infty} - \frac{1}{t} \log
     \Theta_H (t, h), \]
  thus involving the partition function for large values of $t$. Let us
  consider again the example given by (\ref{commentamanda2}). Then the
  potential $V$ has at least one complex critical point different from $0$.
  But the larger the parameter $t$ is, the smaller the parameter $k$ must be
  chosen such that the non-zero critical points are far from the real space
  and do not interfere with the Borel summation process. This qualitatively
  explains the utility of (\ref{theorembetty2.3}).
\end{itemize}
{\texspace{med}{\tmtextbf{-4-}}} Let us outline the proof of our result. We
use the deformation formula (see [Ha4, Ha6]) which gives a representation of
the conjugate heat kernel $p^{\tmop{conj}}$ defined by
\[ \lba{1} x|e^{- \frac{t}{h} H} |y \rba{1} = p^{\tmop{harm}} (t, x, y, h)
   \times p^{\tmop{conj}} (t, x, y, h) ; \]
here $p^{\tmop{harm}}$ denotes the heat kernel of the operator $- h^2
\partial_x^2 + \frac{1}{4} ( \tmmathbf{\omega} x)^2$. By the deformation
formula, $p^{\tmop{conj}}$ is viewed as a divergent expansion with respect to
$\frac{1}{h}$ and $h$. We then consider the formal logarithm of this
expansion. This yields
\begin{equation}
  \label{introductionanna2} p (t, x, y, h) = h^{- \nu / 2} \prod_{\upsilon =
  1}^{\nu} \lp{2} \frac{\omega_{\upsilon}}{4 \pi \tmop{sh} (\omega_{\upsilon}
  t)} \srp{2}{1 / 2}{} \times e^{- \phi (t, x, y) / h + w (t, x, y, h)}
\end{equation}
where the function $\phi$ does not depend of $h$ and the function $w$ is a
divergent expansion with respect to $h$. These functions are defined through
iterated integrals, inherited from the deformation formula, and a sum indexed
by trees, inherited from the tree graph equality. We prove that the
$h$-expansion of the function $e^w$ is Borel summable. The partition function
satisfies
\[ \Theta_H (t, h) = \int_{\mathbbm{R}^{\nu}} p (t, x, x, h) d x. \]
and by (\ref{introductionanna2})
\[ p (t, x, x, h) = h^{- \nu / 2} \prod_{\upsilon = 1}^{\nu} \lp{2}
   \frac{\omega_{\upsilon}}{4 \pi \tmop{sh} (\omega_{\upsilon} t)} \srp{2}{1 /
   2}{} \times e^{- \Phi_t (x) / h + \tmmathbf{w}_t (x, h)} \]
where $\Phi_t (x) \assign \phi (t, x, x)$ and $\tmmathbf{w}_t (x, h) \assign w
(t, x, x, h)$. We then establish the following results.
\begin{enumerate}
  \item The function $\Phi_t$ is analytic on a particular neighbourhood of
  $\mathbbm{R}^{\nu}$ in $\mathbbm{C}^{\nu}$.
  
  \item A Morse lemma holds for $\Phi_t$ on this neighborhood.
  
  \item The $h$-expansion of the function $e^{w_t}$ is Borel summable,
  uniformly in $(x, y)$.
\end{enumerate}
Finally a Borel summation statement for multidimensional Gaussian integrals
(Proposition \ref{nevflora0.1}) is proved. Then the proof of the Borel
summability of the expansion with respect to $h$ of the partition function can
be achieved.

Let us add the following remarks.

The function $\phi_t$ is a solution of a first order non-linear partial
differential equation, the Hamilton Jacobi (eikonal) equation. The explicit
form of the solution allows to avoid the method of characteristics, a standard
way to solve this equation. In a heuristic setting and if $\omega = 0$, a
similar formulation of the solution (without taking into account the
tree-graph equality, which restores convergence) can be found in [Fu-Os-Wi];
however, we use the formalism developped in [Ha3].

The Borel summability of the $h$-expansion of the function $e^w$ comes from
the Borel summability of the expansion of $w$. For this purpose, we proceed as
in [Ha4]: the deformation formula gives an explicit Borel transform of the
function $w$ based on Bessel functions. Here the tree graph equality plays a
central role (see the assertion \ref{treemaria11.3} in Remark
\ref{treemaria11.2}).

A global version of the Morse lemma allows to express the function $\Phi_t$ in
a suitable way for the Borel summation statement dealing with Gaussian
integrals.

We express the function $c$ as the Fourier transform of a measure $\mu$: it
is a convenient way to use the deformation formula [It, Ha4]. Our statement
involves two norms on $\mu$, denoted by $M_{\mu, \varepsilon}$ and $M'_{\mu}$
(see Definition \ref{theorembetty1}). We make the following assumptions.
\begin{itemize}
  \item The norm $M_{\mu, \varepsilon}$ is small. This implies the Borel
  summability of the function $e^w$. This assumption seems natural when a
  perturbation viewpoint and Borel summation are considered: it is also used
  in the proof of Borel summability of the small time expansion of the heat
  kernel (see [Ha4]).
  
  \item The function $c$ is $\mathbbm{R}$-valued on $\mathbbm{R}^{\nu}$, its
  first order derivatives vanish at the origin and the norm $M'_{\mu}$ is
  small (Theorem \ref{theorembetty2}). This implies the analyticity of
  $\Phi_t$ and the Morse lemma on a neighbourhood of $\mathbbm{R}^{\nu}$ in
  $\mathbbm{C}^{\nu}$. For this last step, we use the following: $\Phi_t$ is a
  global perturbation of the function
  \[ x \longmapsto \frac{1}{2} \sum_{\upsilon = 1}^{\nu}
     \text{$\frac{\omega_{\upsilon}}{\tmop{sh} (\omega_{\upsilon} t)} \lp{1}
     \tmop{ch} (\omega_{\upsilon} t) - 1 \rp{1} x_{\upsilon}^2$}, \]
  which is the trace on the diagonal of $\mathbbm{R}^{2 \nu}$ of the phase (up
  to a factor $1 / h$) of $p^{\tmop{harm}}$ (see (\ref{formulaclara5})). This
  explains why we assume that the measure $\mu$ admits a differentiable
  density with respect to Lebesgue measure (see the definition of $M'_{\mu}$).
\end{itemize}
Here the problem is non-linear and therefore assuming that the data are small
is not surprising. As we saw above, the assumptions on the perturbation $c$
are highly restrictive. However this perturbation kills the symmetries due to
the choice of the harmonic oscillator.

\section{Notation and main results}

For $z = |z|e^{i \theta} \in \mathbbm{C}$, $\theta \in] - \pi, \pi]$, we
denote $z^{1 / 2} \assign |z|^{1 / 2} e^{i \theta / 2}$. For $z \in
\mathbbm{C}$, we denote $\sh z \assign \frac{1}{2} (e^z - e^{- z})$, $\ch z
\assign \frac{1}{2} (e^z + e^{- z})$, $\tmop{th} z = \sh z / \ch z$ (if $\ch z
\neq 0$). For $z, z' \in \mathbbm{C}^{\nu}$, let $z \cdot z' \assign z_1 z'_1
+ \cdots + z_{\nu} z'_{\nu}$, $z^2 \assign z \cdot z$, $\bar{z} \assign (
\bar{z}_1, \ldots, \bar{z}_{\nu})$, $\mathcal{I} mz \assign ( \mathcal{I}
mz_1, \ldots, \mathcal{I} mz_{\nu})$, $|z| \assign (z \cdot \bar{z})^{1 / 2}$
(if $z \in \mathbbm{R}^{\nu}$, $|z| = \sqrt{z^2}$). We extend the first two
notations to operators such as $\partial_x = (\partial_{x_1}, \ldots,
\partial_{x_{\nu}})$. Let $a \in \mathbbm{C}^{\nu}$. We denote by
$\tmmathbf{a}$ (bold character) the linear transformation defined on
$\mathbbm{C}^{\nu}$ by
\[ z \longmapsto \tmmathbf{a} z = (a_1 z_1, \ldots, a_{\nu} z_{\nu}) \]
where $a_1, \ldots, a_{\nu}$ (respectively $z_1, \ldots, z_{\nu}$) denote the
coordinates of the vector $a$ (respectively $z$) in the standard basis of
$\mathbbm{C}^{\nu}$. For $R > 0$, let
\[ \disque{R} \assign \{z \in \mathbbm{C^{\nu}} | |z| < R\}. \]
If $U$ and $V$ are two subsets of $\mathbbm{C}^{\nu}$, let $U + V \assign \{u
+ v|u \in U, v \in V\}$. We denote $D_{U, R} \assign U + D_R$. For instance
\[ D_{\mathbbm{R}^{\nu}, R} \assign \lbc{1} a + i b|a \in \mathbbm{R}^{\nu}, b
   \in \mathbbm{R}^{\nu}, |b| < R \rbc{1} . \]
We also denote by $(e_1, \ldots, e_{\nu})$ the standard basis of
$\mathbbm{C}^{\nu}$ and by $(e^{\ast}_1, \ldots, e^{\ast}_{\nu})$ its dual
basis. For $A = (a_{j, k})_{1 \leqslant j, k \leqslant \nu}$ a $\nu \times
\nu$ matrix with complex entries, we denote $|A| \assign \sup_{|z| = 1} |A z|$
and $|A|_{\infty} \assign \max_{1 \leqslant j, k \leqslant \nu} |a_{j, k} |$.
Then $|A| \leqslant \nu |A|_{\infty}$. We set $\text{}^{\mathfrak{t}} A$ for
the transpose of the matrix $A$ and $\mathbbm{1}$ denotes the identity matrix.
Let $\Omega$ be an open domain in $\mathbbm{R}^m \times \mathbbm{C}^{m'}$. Let
$F$ be a finite dimensional space. We denote by $\mathcal{A}_F (\Omega)$
(respectively $\mathcal{A} (\Omega)$) the space of $F$-valued (respectively
$\mathbbm{C}$-valued) analytic functions on $\Omega$. If $\mathcal{I}$ is an
interval of $\mathbbm{R}$, $\mathcal{C}^k \lp{1} \mathcal{I},
\text{$\mathcal{A}(\Omega)$} \rp{1}$ denotes the space of functions $f$
defined on $\mathcal{I}$ with values in $\mathcal{A}(\Omega$) such that $f,
\ldots, f^{(k)}$ exist and are continuous. We always consider these spaces
with their standard Frechet structure (the semi-norms are indexed by compact
sets and eventually differentiation order).

Let $\mathfrak{B}$ be the collection of all Borel sets on $\mathbbm{R}^m$. A
$\mathbbm{C}$-valued measure $\mu$ on $\mathbbm{R}^m$ is a
$\mathbbm{C}$-valued function on $\mathfrak{B}$ satisfying the classical
countable additivity property \ [Ru]. We denote by $| \mu |$ the positive
measure defined \ by
\[ | \mu | (E) = \sup \sum_{j = 1}^{\infty} | \mu (E_j) | (E \in
   \mathfrak{B}), \]
the supremum being taken over all partition $\{E_j \}$ of $E$. In particular
$| \mu | (\mathbbm{R}^m) < \infty$. If there exists some measurable function
$\rho_{\mu}$ such that $d \mu (\xi) = \rho_{\mu} (\xi) d \xi$, then $d| \mu |
(\xi) = | \rho_{\mu} (\xi) |d \xi$.

We refer to [Ha4] for a rigorous definition of Borel and Laplace transform.
Roughly speaking, assuming that $f$ (respectively $\hat{f}$) is a function of
a complex variable $h$ (respectively $\zeta$), $f$ is the Laplace transform of
$\hat{f}$ if
\begin{equation}
  \label{venus7} \text{$f (h) = \int_0^{+ \infty} \hat{f} (\zeta) e^{-
  \frac{\zeta}{h}} \frac{d \zeta}{h}$},
\end{equation}
whereas the (formal) Borel transform of the formal power series $\tilde{f} =
\sum_{r \geqslant 0} a_r h^r$ \ is $\hat{f}$ defined by
\[ \text{$\hat{f} (\zeta) = \sum_{r = 0}^{\infty} \frac{a_r}{r!} \zeta^r$} .
\]
With suitable assumptions, these two transforms are inverse of each other. We
say that a formal power series $\tilde{f} = \sum_{r \geqslant 0} a_r h^r$ is
Borel summable if its Borel transform $\hat{f}$ has a non-vanishing radius of
convergence near 0, has an analytic continuation (still denoted by $\hat{f}$)
on a domain $D_{\mathbbm{R}^+, \kappa}$ with $\kappa > 0$ and is exponentially
dominated on this domain. The Borel sum of $\tilde{f}$ is by definition the
Laplace transform of this analytic continuation (see [Ha4] for rigourous
definitions). In the whole paper, sums indexed by an empty set are, by
convention, equal to zero.

\begin{definition}
  \label{theorembetty1}Let $\varepsilon > 0$ and let $\mu$ be a
  $\mathbbm{C}$-valued measure on $\mathbbm{R}^{\nu}$. Let us denote
  \[ M_{\mu} \assign \int_{\mathbbm{R}^{\nu}} e^{5| \xi |} d^{\nu} | \mu |
     (\xi), \]
  \[ M_{\mu, \varepsilon} \assign \int_{\mathbbm{R}^{\nu}} e^{\varepsilon
     \xi^2 + 5| \xi |} d^{\nu} | \mu | (\xi) . \]
  If there exists a differentiable function on $\mathbbm{R}^{\nu}$ such that
  \[ d \mu (\xi) = \rho_{\mu} (\xi) d \xi, \]
  let us denote
  \[ M'_{\mu} \assign \int_{\mathbbm{R}^{\nu}} \max \lp{1} | \rho_{\mu} (\xi)
     |, | \partial_{\xi_1} \rho_{\mu} (\xi) |, \ldots, | \partial_{\xi_{\nu}}
     \rho_{\mu} (\xi) | \rp{1} e^{5| \xi |} d^{\nu} \xi . \]
\end{definition}

If $M_{\mu} < \infty$, we shall associate to $\mu$ the operator $H$ defined by
(\ref{introductionfred0.5}) where
\[ c (x) = \int_{\mathbbm{R}^{\nu}} \exp (ix \cdot \xi) d \mu (\xi) . \]
Notice that the function $c$ is $\mathbbm{C}$-valued analytic and bounded on
$\mathbbm{R}^{\nu}$.

In the following proposition, we give a Borel summation statement concerning
the heat kernel associated to the operator $H$.

Let $\kappa > 0$. Let
\[ \parabole{\kappa} \assign \lbc{1} z \in \mathbbm{C} | | \mathcal{I} mz^{1 /
   2} |^2 < \kappa \rbc{1} = \lbc{1} z \in \mathbbm{C} | \mathcal{R} ez >
   \frac{1}{4 \kappa} \mathcal{I} m^2 z - \kappa \rbc{1} . \]
$\parabole{\kappa}$ is the interior of a parabola (see [Ha4, fig 2.2.]) and
\[ D_{\mathbbm{R}^+, \kappa} \subset \parabole{\kappa} . \]
We also denote
\[ \demiplan{\assign \{z \in \mathbbm{C} | \re z > 0\}.} \]
\begin{proposition}
  \label{propositionbetty1}Let $T, \varepsilon, \omega_1, \ldots, \omega_{\nu}
  > 0$. Let $\mu$ be a $\mathbbm{C}$-valued measure on $\mathbbm{R}^{\nu}$
  such that
  \begin{equation}
    \label{propositionbetty1.2} 4 T^2 e^T M_{\mu, \varepsilon} < 1.
  \end{equation}
  Then there exist $\kappa, K, K_1 > 0$, $\phi \in \mathcal{C}^0 \lp{1}] 0, T
  [, \mathcal{A} \lp{0} \mathbbm{R}^{2 \nu} \rp{0} \rp{1}$ and $\hat{W} \in
  \mathcal{C}^0 \lp{1}] 0, T [, \mathcal{A} \lp{0} \mathbbm{R}^{2 \nu} \times
  S_{\kappa} \rp{0} \rp{1}$ such that, for every $(t, x, y) \in] 0, T [\times
  \mathbbm{R}^{2 \nu}$,
  \begin{itemize}
    \item for every $\sigma \in S_{\kappa}$,
    \begin{equation}
      \label{propositionbetty1.4} | \hat{W} (t, x, y, \sigma) | \leqslant K_1
      e^{K| \sigma |^{1 / 2}},
    \end{equation}
    \item for every $h \in \demiplan{}$
    \begin{equation}
      \label{propositionbetty1.6} \lba{1} x|e^{- \frac{t}{h} H} |y \rba{1} =
      h^{- \nu / 2} e^{- \frac{1}{h} \phi (t, x, y)} \int_0^{+ \infty} \hat{W}
      (t, x, y, \sigma) e^{- \frac{\sigma}{h}} \frac{d \sigma}{h} .
    \end{equation}
  \end{itemize}
\end{proposition}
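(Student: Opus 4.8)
The plan is to realize the conjugate heat kernel $p^{\tmop{conj}}$ through the deformation formula, take its formal logarithm, and then Borel-resum the resulting series termwise using an explicit Bessel-type kernel, with the tree graph equality supplying the crucial combinatorial control. First I would write $\langle x|e^{-\frac{t}{h}H}|y\rangle = p^{\tmop{harm}}(t,x,y,h)\, p^{\tmop{conj}}(t,x,y,h)$, where $p^{\tmop{harm}}$ is the (exactly known) Mehler kernel of the harmonic oscillator; this already produces the prefactor $h^{-\nu/2}\prod_\upsilon\bigl(\frac{\omega_\upsilon}{4\pi\sh(\omega_\upsilon t)}\bigr)^{1/2}$ times an exponential $e^{-\varphi_{\tmop{harm}}/h}$ with an explicit quadratic $\varphi_{\tmop{harm}}$. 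The deformation formula (from [Ha4, Ha6]) then expands $p^{\tmop{conj}}$ as an iterated-integral series in the measure $\mu$, organized in powers of $\frac1h$ and $h$; because $c=\widehat{\mu}$, each term is a finite-dimensional oscillatory integral whose size is controlled by $\int e^{5|\xi|}d|\mu|$-type norms, i.e.\ by $M_{\mu,\varepsilon}$.

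Next I would pass to the formal logarithm: writing $p^{\tmop{conj}} = e^{-\phi^{(1)}(t,x,y)/h + w(t,x,y,h)}$ where $\phi = \varphi_{\tmop{harm}} + \phi^{(1)}$ solves the Hamilton--Jacobi (eikonal) equation and $w$ is the connected part. Here the tree graph equality is the key device: it rewrites the exponential of the iterated-integral series as a sum indexed by trees, and the gain is that the resulting tree sum converges (or, after Borel transform, has controlled growth) whereas the naive series does not. Concretely, I expect $\phi(t,x,y)$ to be $h$-independent and analytic on a neighbourhood of $\mathbb{R}^{2\nu}$ (this uses only $M_\mu<\infty$ via the explicit solution formula of [Ha3], avoiding characteristics), and $w(t,x,y,h)$ to be a formal power series in $h$ whose Borel transform $\widehat{W}(t,x,y,\sigma) = \sum_{r\ge 0}\frac{w_r(t,x,y)}{r!}\sigma^r$ admits a convergent expansion near $\sigma=0$. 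The deformation formula gives $\widehat{W}$ explicitly in terms of Bessel functions (as in [Ha4]); from that explicit form, together with the smallness hypothesis $4T^2e^T M_{\mu,\varepsilon}<1$, one extracts analytic continuation of $\widehat{W}$ to a parabolic domain $S_\kappa$ and the exponential bound $|\widehat{W}(t,x,y,\sigma)|\le K_1 e^{K|\sigma|^{1/2}}$ of (\ref{propositionbetty1.4}) — the $|\sigma|^{1/2}$ growth being exactly the signature of Bessel asymptotics $J_0$-type kernels, matching $\parabole{\kappa}$. Finally, Laplace-transforming $\widehat{W}$ in $\sigma$ recovers $e^{w(t,x,y,h)}$ for $h\in\demiplan{}$ via (\ref{venus7}), and multiplying back the harmonic prefactor and $e^{-\phi/h}$ yields (\ref{propositionbetty1.6}); continuity in $t\in\,]0,T[$ of $\phi$ and $\widehat{W}$ (the spaces $\mathcal{C}^0(]0,T[,\mathcal{A}(\cdots))$) follows from uniform-on-compacts estimates on the tree sums.

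The main obstacle is Step three: establishing that the tree-indexed series for the Borel transform $\widehat{W}$ actually converges on all of $\mathbb{R}^{2\nu}\times S_\kappa$ with the stated $e^{K|\sigma|^{1/2}}$ bound, \emph{uniformly} in $(x,y)$ and locally uniformly in $t$. The difficulty is twofold: one must control the number of trees on $n$ vertices (Cayley's $n^{n-2}$, tamed by the $1/n!$ from the Borel transform and by a factor $T^{2n}$ from the time integrations — this is where $4T^2e^T M_{\mu,\varepsilon}<1$ enters to beat the combinatorial growth), and one must track how each edge of a tree contributes a Bessel factor whose convolution in the $\sigma$-variable accumulates the $|\sigma|^{1/2}$ exponential rate rather than a linear one. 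I would handle this by an inductive estimate on trees (peeling off a leaf), using the convolution identity for the Bessel kernels to see that composing $n$ edges gives a kernel supported and bounded like $e^{K|\sigma|^{1/2}}$ with $K$ independent of $n$, and bounding the spatial ($\mu$-)integrals uniformly via $e^{5|\xi|}$-weights (the constant $5$ is the slack absorbing the Gaussian shifts coming from the Mehler kernel on the region $|{\rm Im}\,x|,|{\rm Im}\,y|$ bounded). Everything else — the eikonal solution, the algebra of the tree graph equality, the final Laplace inversion — is then essentially bookkeeping built on [Ha3, Ha4].
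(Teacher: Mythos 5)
Your outline correctly reproduces the paper's skeleton — factor out the Mehler kernel, apply the deformation formula, take the formal logarithm, invoke the tree graph equality, separate an $h$-independent $\phi$ from a remainder $w$, Borel-transform via the Bessel kernel, Laplace-invert, and reabsorb the harmonic prefactor. But there is a genuine gap in how you handle the passage from $w$ to $e^{w}$, and your explanation of the $e^{K|\sigma|^{1/2}}$ bound is wrong. You define $\hat{W}(t,x,y,\sigma)=\sum_{r}\frac{w_{r}}{r!}\sigma^{r}$, i.e.\ the formal Borel transform of $w$ itself, and then assert that Laplace-transforming $\hat{W}$ "recovers $e^{w}$"; it would of course recover $w$, not $e^{w}$. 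The object needed in (\ref{propositionbetty1.6}) is the Borel transform of $e^{w}$ (times the $t$-dependent prefactor), and getting from the Borel transform $\hat{w}$ of $w$ to that of $e^{w}$ is a nontrivial step — in the paper it is Proposition \ref{expconvmaria2}, a convolution-power argument that your plan neither anticipates nor supplies. Moreover you attribute the $e^{K|\sigma|^{1/2}}$ growth to "Bessel asymptotics"; this is a misdiagnosis. On the parabolic domain $S_{\kappa}$ one has $|\mathcal{I}m\,\sigma^{1/2}|<\kappa^{1/2}$, so the bound $|J(B\sigma)|\leqslant e^{2B^{1/2}|\mathcal{I}m\,\sigma^{1/2}|}$ makes each Bessel factor \emph{uniformly bounded} there; consequently the paper establishes that $\hat{w}$ itself is bounded on $S_{2\kappa}$ (inequality (\ref{preparationclara22})). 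The $e^{K|\sigma|^{1/2}}$ growth appears only \emph{after} exponentiation, from $\sum_{n}\frac{1}{(n!)^{2}}K_{2}^{n}|\sigma|^{n}\leqslant e^{2K_{2}^{1/2}|\sigma|^{1/2}}$ applied to the convolution powers $(\hat{a}')^{\ast,n}$ in the proof of Proposition \ref{expconvmaria2}.

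A secondary, smaller issue: your proposed "inductive estimate on trees, peeling off a leaf, using the convolution identity for the Bessel kernels" does not reflect the actual structure of $\hat{w}_{m}$ and is unlikely to close as stated. In the paper each $m$-vertex tree contributes a \emph{single} Bessel factor $J\lp{1}(\vartheta\,\Omega_{t}^{g,\theta}\cdot\xi\otimes_{m}\xi)\sigma\rp{1}$ obtained by the substitution $e^{-hB}\mapsto J(B\sigma)$ inside the $m$-fold $(s,\xi,\theta,\vartheta)$-integral; there is no edge-by-edge $\sigma$-convolution to control. The combinatorial control of the sum over trees is instead achieved directly from Cayley's formula expressed via degree sequences, $\lvert\lbc{1}g\in\mathcal{T}_{m}:d^{\circ}(j)=d_{j}\rbc{1}\rvert=\frac{(m-2)!}{(d_{1}-1)!\cdots(d_{m}-1)!}$, combined with (\ref{formulaclara17.2})--(\ref{formulaclara17.4}) and (\ref{formulaclara20.3}), yielding $|\hat{w}_{m}|\leqslant m\,(4T^{2}e^{T}M_{\mu,\varepsilon})^{m}$ and hence convergence under (\ref{propositionbetty1.2}). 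You should replace the leaf-peeling/Bessel-convolution plan with this degree-sequence estimate, and add the exponentiation-in-the-Borel-plane lemma to bridge $\hat{w}$ and $\hat{W}$.
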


Let us assume that the function $c$ takes its values in $\mathbbm{R}$. Then
the operator $H$ defined by (\ref{introductionfred0.5}) is self-adjoint on
$L^2 (\mathbbm{R}^{\nu})$, its spectrum is discrete and its partition function
$\Theta_H$ is well defined on $] 0, + \infty [^2$.

For $\omega_1, \ldots, \omega_{\nu} > 0$, we denote
\[ \omega_{\flat} \assign \min (\omega_1, \ldots, \omega_{\nu}) \text{ , \ }
   \omega_{\sharp} \assign \max (\omega_1, \ldots, \omega_{\nu}) . \]
The next theorem is the main goal of this paper.

\begin{theorem}
  \label{theorembetty2}Let $T, \varepsilon, \omega_1, \ldots, \omega_{\nu} >
  0$. Let $\mu$ be a $\mathbbm{C}$-valued measure on $\mathbbm{R}^{\nu}$ such
  that
  \begin{equation}
    \label{theorembetty2.2} 4 T^2 e^T M_{\mu, \varepsilon} < 1,
  \end{equation}
  
  \begin{equation}
    \label{theorembetty2.3} \frac{\omega_{\sharp} (1 + \omega_{\flat} T) \ch
    \lp{1} \frac{\omega_{\sharp} T}{2} \rp{1} M_{\mu}'}{\omega_{\flat}^3
    \lp{1} 1 - 4 T^2 M_{\mu} \rp{1}} < \alpha_{\nu},
  \end{equation}
  where the constant $\alpha_{\nu}$ only depends{\footnote{See Proposition
  \ref{preparationclara4}.}} on the dimension $\nu$. Moreover, let us assume
  that
  \begin{equation}
    \label{theorembetty3} \bar{\rho}_{\mu} (\xi) = \rho_{\mu} (- \xi),
  \end{equation}
  \begin{equation}
    \label{theorembetty4} \int_{\mathbbm{R}^{\nu}} \xi_1 d \mu (\xi) = \cdots
    = \int_{\mathbbm{R}^{\nu}} \xi_{\nu} d \mu (\xi) = 0,
  \end{equation}
  Let $T_0 \in] 0, T [$. Then there exist $\kappa, K, K_1 > 0$ and
  $\hat{\theta} \in \mathcal{C}^0 \lp{1}] T_0, T [, \mathcal{A} \lp{0}
  S_{\kappa} \rp{0} \rp{1}$ such that, for every $t \in] T_0, T [$,
  \begin{itemize}
    \item for every $\sigma \in S_{\kappa}$,
    \begin{equation}
      \label{theorembetty5.05} | \hat{\theta} (t, \sigma) | \leqslant K_1
      e^{K| \sigma |^{1 / 2}},
    \end{equation}
    \item for every $h \in \mathbbm{C}^+$,
    \begin{equation}
      \label{theorembetty5.1} \Theta^{\tmop{conj}}_H (t, h) = \int_0^{+
      \infty} \hat{\theta} (t, \sigma) e^{- \frac{\sigma}{h}} \frac{d
      \sigma}{h} .
    \end{equation}
  \end{itemize}
  Therefore the expansion (\ref{introductionfred1}) is Borel summable with
  respect to $h$ and its Borel sum is equal to $\Theta^{\tmop{conj}}_H (t,
  h)$.
\end{theorem}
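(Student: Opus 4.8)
The plan is to build the partition function's Borel-summable representation from the heat-kernel representation in Proposition~\ref{propositionbetty1} by integrating over the diagonal $x=y$, and then applying the Gaussian-integral Borel-summation statement (Proposition~\ref{nevflora0.1}). First I would invoke Proposition~\ref{propositionbetty1}, whose hypothesis is exactly \eqref{theorembetty2.2}, to get $\phi$ and $\hat W$; restricting to $x=y$ and writing $\Phi_t(x):=\phi(t,x,x)$, $\hat{\mathbf w}_t(x,\sigma):=\hat W(t,x,x,\sigma)$, one has for $h\in\mathbb C^+$
\[
  p(t,x,x,h)=h^{-\nu/2}\prod_{\upsilon=1}^\nu\!\Bigl(\tfrac{\omega_\upsilon}{4\pi\,\mathrm{sh}(\omega_\upsilon t)}\Bigr)^{1/2} e^{-\Phi_t(x)/h}\!\int_0^{+\infty}\!\! e^{\mathbf w\text{-part}}\cdots,
\]
and $\Theta_H(t,h)=\int_{\mathbb R^\nu}p(t,x,x,h)\,dx$. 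Next I would extract the constant-in-$x$ piece: by hypothesis \eqref{theorembetty4} the first-order derivatives of $c$ vanish at the origin, so $\Phi_t$ has a critical point at $0$ with $\Phi_t(0)=\phi(t,0,0)$, and one checks $\phi(t,0,0)=t\,c(0)$ (this is the normalization built into $\Theta^{\mathrm{conj}}_H$); factoring out $e^{-tc(0)/h}$ leaves $\Theta^{\mathrm{conj}}_H(t,h)=\int_{\mathbb R^\nu}h^{-\nu/2}(\cdots)^{1/2}e^{-(\Phi_t(x)-tc(0))/h+\mathbf w_t(x,h)}\,dx$.

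The heart of the argument is then points 1--3 from the outline in section~\textbf{-4-}. I would first show, using that $M'_\mu$ is small (hypothesis \eqref{theorembetty2.3}), that $\Phi_t$ extends analytically to a complex neighbourhood $D_{\mathbb R^\nu,R}$ of $\mathbb R^\nu$ and is there a small perturbation of the explicit quadratic form $x\mapsto\frac12\sum_\upsilon\frac{\omega_\upsilon}{\mathrm{sh}(\omega_\upsilon t)}(\mathrm{ch}(\omega_\upsilon t)-1)x_\upsilon^2$; the quantitative bound \eqref{theorembetty2.3} (with the $\mathrm{ch}(\omega_\sharp T/2)$ and $(1-4T^2M_\mu)^{-1}$ factors coming from the iterated-integral/tree bounds) is what guarantees the Hessian of $\Phi_t$ at $0$ stays uniformly nondegenerate and dominates the perturbation, yielding a global Morse lemma: an analytic change of variables $x=\Psi_t(u)$ on a neighbourhood of $\mathbb R^\nu$ with $\Phi_t(\Psi_t(u))-tc(0)=\frac12\langle A_t u,u\rangle$ for a nondegenerate symmetric $A_t$ depending continuously on $t$. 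Hypotheses \eqref{theorembetty3} ($\bar\rho_\mu(\xi)=\rho_\mu(-\xi)$, i.e.\ $c$ real) and \eqref{theorembetty4} are used to ensure $\Phi_t$ is real on $\mathbb R^\nu$ with genuine minimum at $0$, so the Morse chart is real-analytic near $\mathbb R^\nu$ and $A_t$ is positive-definite. Simultaneously, the Borel-summability-uniform-in-$(x,y)$ of $e^{w_t}$ — estimate \eqref{propositionbetty1.4} — transports under the change of variables to a uniform exponential bound for the Borel transform of the new integrand on $S_\kappa$.

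Finally I would apply the multidimensional Gaussian Borel-summation statement (Proposition~\ref{nevflora0.1}): after the Morse substitution the $x$-integral becomes $\int_{\mathbb R^\nu}h^{-\nu/2}e^{-\frac1{2h}\langle A_t u,u\rangle}\,g_t(u,h)\,du$ where $g_t(u,h)=(\cdots)^{1/2}e^{w_t(\Psi_t(u),h)}\,|\det D\Psi_t(u)|$ has a uniformly exponentially dominated Borel transform; Proposition~\ref{nevflora0.1} then says such an integral is itself Laplace transform of an analytic function $\hat\theta(t,\cdot)$ on $S_\kappa$ (for possibly smaller $\kappa$, larger $K$), satisfying \eqref{theorembetty5.05}, which gives \eqref{theorembetty5.1} for $h\in\mathbb C^+$; continuity of $\hat\theta$ in $t\in]T_0,T[$ follows from continuity of all the data, the restriction to $t>T_0$ being needed so the $\sigma$-integrals and the Gaussian weight stay uniformly controlled away from $t=0$. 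The Borel-summability assertion for the expansion \eqref{introductionfred1} is then immediate from \eqref{theorembetty5.1} together with Watson-type uniqueness. I expect the main obstacle to be the second step: proving the \emph{global} Morse lemma for $\Phi_t$ on a full complex neighbourhood of $\mathbb R^\nu$ with constants uniform in $t\in]T_0,T[$, and in particular tracking how the constant $\alpha_\nu$ in \eqref{theorembetty2.3} must be chosen (via Proposition~\ref{preparationclara4}) so that the nonlinear implicit-function/Morse construction converges uniformly — the delicate point being that $\Phi_t$ is only a \emph{bounded} (not small-on-compacts in a naive sense) perturbation, so one must exploit the precise structure of the quadratic leading term and the $e^{5|\xi|}$-weighted norm $M'_\mu$ rather than a soft compactness argument.
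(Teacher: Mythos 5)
Your proposal is correct and follows essentially the same route as the paper's own proof: obtain the $h^{-\nu/2}e^{-\phi/h}e^{w}$ representation (via the deformation formula, Propositions \ref{formulaclara34} and \ref{preparationclara20} — Proposition \ref{propositionbetty1} is itself proved from those), reduce $\phi|_{y=x}$ to a sum of squares by the global Morse lemma of Proposition \ref{preparationclara4} under hypothesis \eqref{theorembetty2.3} together with the global-diffeomorphism statement (Proposition \ref{globaldiffeomaria2}), and then apply the Gaussian Borel-summation statement (Proposition \ref{nevflora0.1}). Your one small slip is the sign in $\phi(t,0,0)$: from the paper's definition $\phi = \frac14(\cdots) - t\varphi$ with $\varphi(t,0,0)=Q_1(t,0,0)=c(0)$, one gets $\phi(t,0,0)=-tc(0)$, so that $e^{-\phi(t,0,0)/h}=e^{tc(0)/h}$ is exactly the factor absorbed into the definition of $\Theta^{\mathrm{conj}}_H$; the paper dispenses with this by assuming $c(0)=0$ without loss of generality.
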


\begin{remark}
  {\tmdummy}
  
  \begin{itemize}
    \item Let $V$ be the potential defined by
    \[ V (x) \assign \frac{( \tmmathbf{\omega} x)^2}{4} - c (x) . \]
    (\ref{theorembetty3}) means that $V|_{\mathbbm{R}^{\nu}}$ takes real
    values and (\ref{theorembetty4}) means that
    \[ \partial_{x_1} V (0) = \cdots = \partial_{x_{\nu}} V (0) = 0. \]
    \item If $c$ is the null function then
    \[ \Theta^{\tmop{conj}}_H (t, h) = \Theta_H (t, h) = \prod_{\upsilon =
       1}^{\nu} \frac{1}{e^{\omega_{\upsilon} t / 2} - e^{- \omega_{\upsilon}
       t / 2}} . \]
    \item Both of the proofs of Proposition \ref{propositionbetty1} and
    Theorem \ref{theorembetty2} use the tree graph equality. However, proving
    Proposition \ref{propositionbetty1} is easier than proving Theorem
    \ref{theorembetty2}. For the proof of the theorem, the Gaussian Borel
    summation statement (see section \ref{nevflora}) is necessary and a Morse
    lemma is needed (see Proposition \ref{preparationclara4}).
  \end{itemize}
\end{remark}

\section{Mould formalism and combinatorics related to graphs}

Let $\Omega$ be an arbitrary set. Let us denote by $\tmop{seq} (\Omega)$
(respectively $\mathcal{P}_0 (\Omega)$) the set of finite (eventually empty)
ordered sequences of elements of $\Omega$ (respectively finite subsets of
$\Omega$). Let $\mathcal{A}$ be a commutative $\mathbbm{C}$-algebra. Let
$\mathcal{M}^{\tmop{cl}} (\Omega) =\mathcal{A}^{\tmop{seq} (\Omega)}$
(respectively $\mathcal{M}^{\tmop{ab}} (\Omega) =\mathcal{A}^{\mathcal{P}_0
(\Omega)}$). Equipped with the following sum and product
\[ C = A + B \Longleftrightarrow C^{\varsigma_1, \ldots, \varsigma_r} =
   A^{\varsigma_1, \ldots, \varsigma_r} + B^{\varsigma_1, \ldots, \varsigma_r}
\]
\[ C = A \times B \Longleftrightarrow C^{\varsigma_1, \ldots, \varsigma_r} =
   \sum_{i = 0}^r A^{\varsigma_1, \ldots, \varsigma_i} B^{\varsigma_{i + 1},
   \ldots, \varsigma_r}, \]
respectively
\[ C = A + B \Longleftrightarrow C^I = A^I + B^I \]
\[ C = A \times_{\tmop{sym}} B \Longleftrightarrow C^I =
   \sum_{\tmscript{\begin{array}{c}
     J \cup K = I\\
     J \cap K = \varnothing
   \end{array}}} A^J B^K, \]
$\mathcal{M}^{\tmop{cl}} (\Omega)$ and $\mathcal{M}^{\tmop{ab}} (\Omega)$ are
algebras. For instance
\[ C^{\{1, 2\}} = A^{\{1, 2\}} B^{\varnothing} + A^{\{1\}} B^{\{2\}} +
   A^{\{2\}} B^{\{1\}} + A^{\varnothing} B^{\{1, 2\}} . \]
The algebra $\mathcal{M}^{\tmop{ab}} (\Omega)$ is commutative. Elements of
$\mathcal{M}^{\tmop{cl}} (\Omega)$ or $\mathcal{M}^{\tmop{ab}} (\Omega)$ are
called moulds. We denote by $1$ the identity element of
$\mathcal{M}^{\tmop{ab}} (\Omega)$. Then $1^{\varnothing} = 1$ and $1^I = 0$
if $|I| \geqslant 1$. Let $A \in \mathcal{M}^{\tmop{ab}} (\Omega)$ be such
that $A^{\varnothing} = 0$ and let $\varphi \in \mathbbm{C} \lb{0} [H]
\rb{0}$. Then the mould $\varphi (A)$ is well defined (only finite sums occur
in its definition). The following elementary fact will be useful. Let $A^I =
\lp{1} \prod_{j \in I} \gamma_j \rp{1} B^I$ where $\gamma_j \in
\mathbbm{C}_{}$ and $B \in \mathcal{M}^{\tmop{ab}} (\Omega), B^{\varnothing} =
0$. Then
\[ \lp{1} \varphi (A) \srp{1}{I}{} = \lp{1} \prod_{j \in I} \gamma_j \rp{1}
   \lp{1} \varphi (B) \srp{1}{I}{} . \]
\begin{remark}
  A lot of important symmetries occur in $\mathcal{M}^{\tmop{cl}} (\Omega)$.
  In particular, a mould $A$ is said to be symmetral if for every sequence
  $\varsigma^1$ and $\varsigma^2$
  \[ A^{\varsigma^1} A^{\varsigma^2} = \sum_{\varsigma \in \tmop{sh}
     (\varsigma^1, \varsigma^2)} A^{\varsigma} \]
  where $\tmop{sh} (\varsigma^1, \varsigma^2)$ denotes the set of all
  sequences $\varsigma$ obtaining from $\varsigma^1$ and $\varsigma^2$ under
  shuffling. The following lemma (Lemma \ref{mouldofamanda2}) is related to
  this fundamental notion (see also [Ha3, Prop.3.1]). See [E1, E2, E-V] for
  general aspects of the mould formalism.
\end{remark}

For $m = 1, 2, \ldots$, let
\[ T_m \assign \lbc{1} (s_1, \ldots, s_m) \in [0, 1] \ve{1} 0 < s_1 < \cdots <
   s_m < 1\} \times \mathbbm{R}^{\nu m} \]
and let us denote $T_{\infty} \assign \{\varnothing\} \sqcup T_1 \sqcup T_2
\sqcup \cdots$. Let $\mathcal{M}^{\tmop{pre}}$ be the algebra of
$\mathcal{A}$-valued functions $f$ defined on $T_{\infty}$, such that
$f|_{T_m}$ is measurable for every $m \geqslant 1$, equipped with the trivial
sum and the following commutative product
\[ h = f \times_{\tmop{pre}} g \Leftrightarrow h (s_1, \ldots, s_m ; \xi_1,
   \ldots, \xi_m) = \]
\[ \sum_{\tmscript{\begin{array}{c}
     J \cup K =\{1, \ldots, m\}\\
     J \cap K = \varnothing
   \end{array}}} f (s_{j_1}, \ldots, s_{j_p} ; \xi_{j_1}, \ldots, \xi_{j_p}) g
   (s_{k_1}, \ldots, s_{k_q} ; \xi_{k_1}, \ldots, \xi_{k_q}) . \]
Here $j_1 < \cdots < j_p$ (respectively $k_1 < \cdots < k_q$) denote the
elements of $J$ (respectively $K$). Let $\lambda$ be a $\mathbbm{C}$-valued
Borel measure defined on $[0, 1] \times \mathbbm{R}^{\nu}$ and let us denote
by $\lambda^{\otimes_m}$ (respectively $| \lambda |^{\otimes_m}$) the Borel
measure defined on $T_m$ by
\[ d \lambda^{\otimes_m} (s, \xi) = d \lambda (s_1, \xi_1) \cdots d \lambda
   (s_m, \xi_m), \]
\[ d| \lambda |^{\otimes_m} (s, \xi) = d| \lambda | (s_1, \xi_1) \cdots d|
   \lambda | (s_m, \xi_m) \text{ \ \ \ \ (respectively)} . \]
Let $\mathcal{M}^{\tmop{pre}}_1$ be the subalgebra of
$\mathcal{M}^{\tmop{pre}}$ of all functions $f$ such that for every $m
\geqslant 1$ $f|_{T_m}$ is integrable on $T_m$ with respect to $| \lambda
|^{\otimes_m}$.

\begin{lemma}
  \label{mouldofamanda2}The mapping $\Phi$
  \[ \begin{array}{lll}
       \mathcal{M}^{\tmop{pre}}_1 & \longrightarrow & \mathcal{A} \lb{1} [H]
       \rb{1}\\
       &  & \\
       f & \longmapsto & \sum_{m \geqslant 0} H^m \int_{T^m} f (s_1, \ldots,
       s_m ; \xi_1, \ldots, \xi_m) d \lambda^{\otimes_m} (s, \xi)
     \end{array} \]
  is an algebra morphism.
\end{lemma}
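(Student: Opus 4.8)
The plan is to verify directly the three properties that make $\Phi$ an algebra morphism: compatibility with the sum, with the unit, and --- the only substantial point --- with the product.

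First one checks that $\Phi$ is well defined with values in $\mathcal{A}[[H]]$: for $f\in\mathcal{M}^{\tmop{pre}}_1$ and each $m\geqslant 0$ the coefficient $\int_{T_m}f\,d\lambda^{\otimes_m}$ is a genuine element of $\mathcal{A}$ precisely because, by the definition of $\mathcal{M}^{\tmop{pre}}_1$, $f|_{T_m}$ is $|\lambda|^{\otimes_m}$-integrable. Compatibility with the (trivial, i.e.\ pointwise) sum on $\mathcal{M}^{\tmop{pre}}$ is then immediate from linearity of the integral, and compatibility with the unit is equally quick: the identity element of $(\mathcal{M}^{\tmop{pre}},\times_{\tmop{pre}})$ is the mould $e$ with $e|_{T_0}=1$ and $e|_{T_m}=0$ for $m\geqslant 1$, and $\Phi(e)=1\in\mathcal{A}[[H]]$.

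The core is the identity $\Phi(f\times_{\tmop{pre}}g)=\Phi(f)\,\Phi(g)$. Expanding the Cauchy product on the right, the coefficient of $H^m$ is $\sum_{p+q=m}\bigl(\int_{T_p}f\,d\lambda^{\otimes_p}\bigr)\bigl(\int_{T_q}g\,d\lambda^{\otimes_q}\bigr)$; by Fubini --- which applies thanks to the $|\lambda|^{\otimes}$-integrability built into $\mathcal{M}^{\tmop{pre}}_1$ --- this equals $\sum_{p+q=m}\int_{T_p\times T_q}f(s';\xi')\,g(s'';\xi'')\,d\lambda^{\otimes_p}(s',\xi')\,d\lambda^{\otimes_q}(s'',\xi'')$. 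Now comes the key change of variables. Given a strictly increasing $p$-tuple $s'_1<\cdots<s'_p$ and a strictly increasing $q$-tuple $s''_1<\cdots<s''_q$ such that all the $s'_i$ differ from all the $s''_j$ (the complementary set, where $s'_i=s''_j$ for some $i,j$, is $|\lambda|^{\otimes_m}$-negligible --- the time-marginal of $\lambda$ having no atoms in the situations at hand), one merges the two tuples into a single strictly increasing $m$-tuple $s_1<\cdots<s_m$ and records the set $J\subseteq\{1,\ldots,m\}$, $|J|=p$, of positions occupied by the primed entries; this sets up a bijection between $T_p\times T_q$ and $\bigsqcup_{|J|=p}T_m$. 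Since $\lambda^{\otimes_m}$ is invariant under permutation of its $m$ factors this bijection is measure preserving, and it carries $f(s';\xi')\,g(s'';\xi'')$ to $f(s_J;\xi_J)\,g(s_K;\xi_K)$ with $K=\{1,\ldots,m\}\setminus J$, whence
\[ \int_{T_p\times T_q}f\otimes g\,d\lambda^{\otimes_p}d\lambda^{\otimes_q}=\sum_{|J|=p}\int_{T_m}f(s_J;\xi_J)\,g(s_K;\xi_K)\,d\lambda^{\otimes_m}(s,\xi). \]
Summing over $p+q=m$ and invoking the very definition of $\times_{\tmop{pre}}$, namely $(f\times_{\tmop{pre}}g)(s;\xi)=\sum_{J\sqcup K=\{1,\ldots,m\}}f(s_J;\xi_J)\,g(s_K;\xi_K)$, identifies the coefficient of $H^m$ in $\Phi(f)\,\Phi(g)$ with $\int_{T_m}(f\times_{\tmop{pre}}g)\,d\lambda^{\otimes_m}$, i.e.\ with the coefficient of $H^m$ in $\Phi(f\times_{\tmop{pre}}g)$. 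The same estimate run with absolute values also re-proves in passing that $\mathcal{M}^{\tmop{pre}}_1$ is a subalgebra.

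The step I expect to be the main obstacle is this change of variables: it is the iterated-integral incarnation of the shuffle product, and it is where one must be careful both that Fubini genuinely applies --- which is exactly why the norm defining $\mathcal{M}^{\tmop{pre}}_1$ is taken against $|\lambda|^{\otimes_m}$ --- and that the diagonal of $T_p\times T_q$ on which two time-coordinates coincide is $|\lambda|^{\otimes_m}$-null. Everything else is routine bookkeeping.
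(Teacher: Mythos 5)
Your proof is correct and takes essentially the same route as the paper: the merge-two-ordered-tuples-and-record-positions change of variables is precisely the iterated-integral form of the shuffle identity on characteristic functions, $1_{0<u_1<\cdots<u_p<1}\,1_{0<v_1<\cdots<v_q<1}=\sum_{w\in\tmop{sh}(u,v)}1_{0<w_1<\cdots<w_{p+q}<1}$, which is exactly what the paper invokes. You also rightly flag that the coincidence set $\{s'_i=s''_j\}$ must be $|\lambda|^{\otimes}$-negligible for either formulation to go through --- a hypothesis the lemma leaves tacit (the shuffle identity fails pointwise there), but one that holds in the paper's application, where $\lambda$ has an absolutely continuous $s$-marginal.
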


\begin{proof}
  Let $m \geqslant 1$. One gets
  \[ \sum_{p + q = m} \int_{T_p} f (s_1, \ldots, s_p ; \xi_1, \ldots, \xi_p) d
     \lambda^{\otimes_p} (s, \xi) \times \int_{T_q} g (s_1, \ldots, s_q ;
     \xi_1, \ldots, \xi_q) d \lambda^{\otimes_q} (s, \xi) \]
  \[ = \int_{T_m} (f \times_{\tmop{pre}} g) (s_1, \ldots, s_m ; \xi_1, \ldots,
     \xi_m) d \lambda^{\otimes_m} (s, \xi) . \]
  It is a consequence of the shuffling property concerning the following
  characteristic functions:
  \[ 1_{0 < u_1 < \cdots < u_p < 1} \times 1_{0 < v_1 < \cdots < v_q < 1} = \]
  \[ \sum_{(w_1, \ldots, w_{p + q}) \in \tmop{sh} \lp{1} (u_1, \ldots, u_p),
     (v_1, \ldots, v_q) \rp{1}} 1_{0 < w_1 < \cdots < w_{p + q} < 1} . \]
  Then the mapping defined in Lemma \ref{mouldofamanda2} is a multiplicative
  morphism.
\end{proof}

\subsection{Some identities}\label{identities}

Let $I$ be a subset of $\mathbbm{N}$ such that $2 \leqslant |I| < \infty$. We
denote by $\mathcal{G}_I$ the set of (unordered) connected graphs on $I$. A
connected graph with no cycles is called a (unordered) tree and we denote by
$\mathcal{T}_I$ the set of trees on $I$. For instance
\[ \mathcal{T}_{\{1, 2, 4\}} = \lbc{2} \lbc{1} [1, 2], [1, 4] \rbc{1}, \lbc{1}
   [1, 2], [2, 4] \rbc{1}, \lbc{1} [1, 4], [2, 4] \rbc{1} \rbc{2} . \]
If $|I| = n$ then $|\mathcal{T}_I | = n^{n - 2}$. Let $g \in \mathcal{G}_I$.
An element $\ell$ of $g$ is called a edge and $\ell = [j, k]$ where $j, k \in
I, j \neq k$ and we always assume that $j < k$ by convention.

\begin{proposition}
  \label{treemaria10} (tree graph equality) For $1 \leqslant j < k < \infty$,
  let $\tilde{z}_{j, k} \in \mathbbm{C}$. Let $A$ and $B$ be the moulds
  defined by
  \[ A^{\varnothing} = 1, A^{\{j\}} = 1, B^{\varnothing} = 0, B^{\{j\}} = 1 \]
  and for $I \subset \mathbbm{N}^{\ast}$, $2 \leqslant |I| < \infty$,
  \[ A^I = \exp \lp{2} \sum_{j, k \in I, j < k} \tilde{z}_{j, k} \rp{2}, \]
  \begin{equation}
    \label{treemaria10.5} (B)^I = \sum_{g \in \mathcal{\mathcal{T}}_I} \lp{2}
    \prod_{\tmscript{\begin{array}{c}
      {}[j, k] \in g
    \end{array}}} \tilde{z}_{j, k} \rp{2} \int_{\theta \in [0, 1]^g}
    e^{\sum_{j, k \in I, j < k} \theta_{j, k, g} \tilde{z}_{j, k}} d^{|I| - 1}
    \theta
  \end{equation}
  (the tree $g$ contains $|I| - 1$ elements). Here
  \[ d^{|I| - 1} \theta \assign \prod_{[ \bar{j}, \bar{k}] \in g} d \theta_{[
     \bar{j}, \bar{k}]} \]
  and
  \begin{equation}
    \label{treemaria10.6} \theta_{j, k, g} = \min_{\tmscript{\begin{array}{c}
      {}[p, q]
    \end{array}}} \theta_{[p, q]}
  \end{equation}
  where $[p, q]$ runs over all edges belonging to the unique path joining $j$
  and $k$ in the tree $g$. Then
  \begin{equation}
    \label{treemaria11} A = e^B .
  \end{equation}
\end{proposition}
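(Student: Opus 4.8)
The plan is to establish the mould identity $A = e^{B}$ componentwise, reducing it to a combinatorial identity for $e^{B}$ together with the forest interpolation formula. Unwinding the definition of $e^{B}$ as a power series in $B$ for the product $\times_{\tmop{sym}}$, and using that $B^{\varnothing} = 0$ kills every tuple containing an empty block while averaging over the $n!$ orderings turns ordered tuples of blocks into unordered set partitions, one gets, for every finite $I$,
\[ (e^{B})^{I} = \sum_{\pi} \prod_{V \in \pi} B^{V}, \]
the sum running over all partitions $\pi$ of $I$ into nonempty blocks, with $B^{\{ j \}} = 1$ on singletons. For $|I| \leqslant 1$ both sides equal $1 = A^{I}$, so we fix $I$ with $|I| \geqslant 2$; it then remains to prove $\exp \bigl( \sum_{j, k \in I, \, j < k} \tilde{z}_{j, k} \bigr) = \sum_{\pi} \prod_{V \in \pi} B^{V}$.

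The key ingredient is the forest (tree graph) interpolation formula, see [Br], applied to the entire function $F (\mathbf{x}) \assign \exp \bigl( \sum_{j < k \in I} x_{j, k} \tilde{z}_{j, k} \bigr)$ of the variables $x_{j, k}$ indexed by the pairs $\{ j, k \} \subset I$, thought of as potential edges. Since $\partial_{x_{j, k}} F = \tilde{z}_{j, k} F$, differentiating $F$ along the edges of a forest $\mathcal{F}$ on $I$ (an acyclic graph, i.e.\ a disjoint union of trees) produces the factor $\prod_{[ j, k ] \in \mathcal{F}} \tilde{z}_{j, k}$, and the surviving exponential is evaluated at the interpolated edge weights $w^{\mathcal{F}}_{j, k} (\mathbf{s})$, namely the minimum of $s_{[ p, q ]}$ over the edges $[ p, q ]$ of the unique path from $j$ to $k$ in $\mathcal{F}$, and $0$ when $j, k$ lie in different trees of $\mathcal{F}$. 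The formula then states
\[ \exp \Bigl( \sum_{j, k \in I, \, j < k} \tilde{z}_{j, k} \Bigr) = F (\mathbf{1}) = \sum_{\mathcal{F}} \Bigl( \prod_{[ j, k ] \in \mathcal{F}} \tilde{z}_{j, k} \Bigr) \int_{[ 0, 1 ]^{\mathcal{F}}} \exp \Bigl( \sum_{j < k \in I} w^{\mathcal{F}}_{j, k} (\mathbf{s}) \, \tilde{z}_{j, k} \Bigr) d \mathbf{s}, \]
summed over all forests $\mathcal{F}$ on $I$, the empty forest included (whose integrand is $F (\mathbf{0}) = 1$).

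It remains to reorganize the sum over forests. Giving a forest $\mathcal{F}$ on $I$ is the same as giving the partition $\pi$ of $I$ into the vertex sets of its connected components together with a tree $g_{V} \in \mathcal{T}_{V}$ on each block $V$ (the tree on a singleton being empty and contributing $1$). If $j$ and $k$ lie in different blocks then $w^{\mathcal{F}}_{j, k} = 0$, so $\tilde{z}_{j, k}$ drops out of the exponent; if $j, k \in V$ then $w^{\mathcal{F}}_{j, k} = \theta_{j, k, g_{V}}$ as in (\ref{treemaria10.6}), depending on $g_{V}$ only. Hence both the exponential and the integral over $[ 0, 1 ]^{\mathcal{F}} = \prod_{V} [ 0, 1 ]^{g_{V}}$ factorize over the blocks, and the forest sum becomes
\[ \sum_{\pi} \prod_{V \in \pi} \Biggl[ \sum_{g \in \mathcal{T}_{V}} \Bigl( \prod_{[ j, k ] \in g} \tilde{z}_{j, k} \Bigr) \int_{[ 0, 1 ]^{g}} e^{\sum_{j < k \in V} \theta_{j, k, g} \tilde{z}_{j, k}} \, d^{| V | - 1} \theta \Biggr] = \sum_{\pi} \prod_{V \in \pi} B^{V} = (e^{B})^{I}, \]
where the bracket is exactly $B^{V}$ by (\ref{treemaria10.5}). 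This gives $A = e^{B}$.

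The reductions in the first and third paragraphs are routine bookkeeping. The substantial point is the forest interpolation formula itself: if one prefers a self-contained argument to citing [Br], it follows by induction on $|I|$ through a sequence of applications of the fundamental theorem of calculus, at each step singling out one edge to integrate out and interpolating separately the weights of the edges internal to, and of those external to, the block grown so far. I expect this inductive construction --- keeping track of which edge carries which interpolation parameter --- to be the main obstacle, the passage from $\exp$ to a sum over partitions being standard.
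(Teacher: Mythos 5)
Your argument is correct, but it takes a genuinely different route from the one the paper sketches. The paper, following [B-K]/[Br], sets up time-parametrized moulds $A_t$ and $B_t$, shows that $A_t$ solves the mould ODE $\partial_t A_t = \bigl( \sum_{j<k} \tilde{z}_{j,k} \Delta_j \Delta_k \bigr) A_t$, and then verifies by direct differentiation of (\ref{trimaria11.7}) that $B_t$ solves the companion Riccati-type equation (\ref{trimaria11.8}), which forces $e^{B_t}$ to satisfy the same ODE as $A_t$; the identity then follows from the common value at $t = 0$. You instead invoke the forest interpolation formula (also found in [Br], and in the botanical form in [A-R]) as a black box, note that applying it to $F(\mathbf{x}) = \exp\bigl(\sum_{j<k} x_{j,k} \tilde{z}_{j,k}\bigr)$ gives a sum over forests with $\prod \tilde{z}$ prefactors and interpolated exponents, and then reorganize by passing from forests to pairs (partition $\pi$, tree on each block), using the block-diagonal structure of the interpolated weights to factorize the integral. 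Together with the standard computation $(e^B)^I = \sum_\pi \prod_{V\in\pi} B^V$, this gives $A = e^B$. Both approaches lean on essentially the same combinatorial depth: your forest interpolation formula is the integrated version of the ODE the paper differentiates, and its standard proof is precisely the inductive fundamental-theorem-of-calculus construction you describe at the end. Your proof has the merit of making the $\exp$-on-trees structure appear in one stroke via the partition bookkeeping rather than through the nonlinear term in (\ref{trimaria11.8}); the paper's ODE approach is better adapted to the mould formalism it sets up and makes the role of the derivations $\Delta_j$ explicit. Each step of your reduction (the passage from $e^B$ to a partition sum using $B^{\varnothing}=0$ and the symmetrization, the factorization of $[0,1]^{\mathcal{F}}$ over blocks, the vanishing of cross-block weights) checks out.
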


\begin{remark}
  \label{treemaria11.2}{\tmdummy}
  
  \begin{enumeratenumeric}
    \item Here is an example illustrating the definition of $\theta_{j, k,
    g}$. Let
    \[ g = \lbc{1} [1, 2], [2, 3], [2, 8], [5, 8], [5, 6], [5, 7], [4, 6]
       \rbc{1} . \]
    Then $\theta_{2, 6, g} = \min \lp{1} \theta_{[2, 8]}, \theta_{[5, 8]},
    \theta_{[5, 6]} \rp{1}$.
    
    \item The exponential in (\ref{treemaria11}) is defined by
    \[ e^A \assign \sum_{m \geqslant 0} \frac{1}{m!} \underbrace{A
       \times_{\tmop{sym}} \cdots \times_{\tmop{sym}} A}_{m \tmop{times}} . \]
    \item \label{treemaria11.3}The mould $B$ has a simpler expression
    \begin{equation}
      \label{treemaria11.5} B^I = \sum_{g \in \mathcal{G}_I} 
      \prod_{\tmscript{\begin{array}{c}
        {}[j, k] \in g
      \end{array}}} (e^{\tilde{z}_{j, k}} - 1) .
    \end{equation}
    But $\mathcal{|G}_I | \approx 2^{|I| (|I| - 1) / 2}$ whereas
    $|\mathcal{T}_I | = |I|^{|I| - 2} = e^{(|I| - 2) \ln (|I|)}$. Therefore
    (\ref{treemaria10.5}) is more efficient than (\ref{treemaria11.5}) to
    prove the convergence of series containing terms like $B^I$. However,
    (\ref{treemaria11.5}) can be useful for explicit computations [Fu-Os-Wi,
    Ha3].
  \end{enumeratenumeric}
\end{remark}

\begin{remark}
  Equality (\ref{treemaria11}) is proved in [B-K] or [Br] (set $\dot{u}_{k, l}
  (s) = - \tilde{z}_{k, l} 1_{[0, 1]} (s)$ and $t = 1$ in [B-K, Th.3.1]). Let
  us give the idea of this proof translated in the combinatorial language of
  moulds (see also Section 7 in [Br] where the definition of the product
  $\times_{\tmop{sym}}$ is given). Let us consider the moulds defined by
  \[ A_t^{\varnothing} = 1, A_t^{\{j\}} = 1, B_t^{\varnothing} = 0,
     B_t^{\{j\}} = 1, \]
  \[ A_t^I \assign \exp \lp{2} t \sum_{j, k \in I} \tilde{z}_{j, k} \rp{2}, \]
  \begin{equation}
    \label{trimaria11.7} (B_t)^I \assign \sum_{g \in \mathcal{\mathcal{T}}_I}
    \lp{2} \int_0^t \srp{2}{|I| - 1}{} \lp{2}
    \prod_{\tmscript{\begin{array}{c}
      {}[ \bar{j}, \bar{k}] \in g
    \end{array}}} \tilde{z}_{\bar{j}, \bar{k}} e^{\sum_{j, k \in I, j < k} (t
    - t_{j, k, g}) \tilde{z}_{j, k}} d t_{[ \bar{j}, \bar{k}]} \rp{2},
  \end{equation}
  \[ t_{j, k, g} = \max_{\tmscript{\begin{array}{c}
       {}[p, q]
     \end{array}}} t_{[p, q]} . \]
  Here $[p, q]$ runs over all edges belonging to the unique path joining $j$
  and $k$ in the tree $g$. Let us notice that $e^{B_t} |_{t = 0} = A_t |_{t =
  0}$ and that the equality (\ref{treemaria11}) is equivalent to $e^{B_t} |_{t
  = 1} = A_t |_{t = 1}$. The mould $A_t$ satisfies the differential equation
  \[ \partial_t A_t = \lp{2} \sum_{j < k} \tilde{z}_{j, k} \Delta_j \Delta_k
     \rp{2} A_t \]
  where $\Delta_j$ is the elementary mould's derivation defined by $(\Delta_j
  A)^I = 1_{j \in I} A^I$. Therefore, since $\Delta_j$ is a derivation and
  $\times_{\tmop{sym}}$ is commutative, (\ref{treemaria11}) is satisfied if
  \begin{equation}
    \label{trimaria11.8} \partial_t B_t = \lp{2} \sum_{j < k} \tilde{z}_{j, k}
    \Delta_j \Delta_k \rp{2} B_t + \lp{2} \sum_{j < k} \tilde{z}_{j, k}
    \Delta_j B_t \times_{\tmop{sym}} \Delta_k B_t \rp{2} .
  \end{equation}
  Differentiating the integrand in (\ref{trimaria11.7}) with respect to $t$
  explains the first term of the right hand side of (\ref{trimaria11.8}). The
  core of the proof is to check that the non-linear term in
  (\ref{trimaria11.8}) comes from the differentiation of the upper limit of
  the integrals in (\ref{trimaria11.7}).
\end{remark}

Let $h \in \mathbbm{C}^{\ast}$ and for $j, k = 1, 2, \ldots$, $j \leqslant k$,
let $z_{j, k} \in \mathbbm{C}_{}$. Let $C_h \in \mathcal{M}^{\tmop{ab}}
(\mathbbm{N}^{\ast})$ be defined by
\[ C_h^{\varnothing} = 1, C_h^I = \frac{1}{h^{|I|}} \times e^{h \sum_{j, k \in
   I} z_{j \wedge k, j \vee k}} . \]
Later we shall consider the behaviour of quantities involving the mould $C_h$
when $h$ goes to $0$. The next proposition simplifies their study. Let $E, R_h
\in \mathcal{M}^{\tmop{ab}} (\mathbbm{N}^{\ast})$ be defined by
\[ E^{\varnothing} = 0, E^{\{j\}} = 1, E^I = 2^{|I| - 1} \sum_{g \in
   \mathcal{\mathcal{T}}_I} \prod_{\tmscript{\begin{array}{c}
     {}[j, k] \in g
   \end{array}}} z_{j, k}, \]
\[ R_h^{\varnothing} = 0, R_h^{\{j\}} = z_{j, j} \int_0^1 e^{h \vartheta z_{j,
   j}} d \vartheta, \]
\[ R_h^I = 2^{|I| - 1} \sum_{g \in \mathcal{\mathcal{T}}_I} \lp{2}
   \prod_{\tmscript{\begin{array}{c}
     {}[j, k] \in g
   \end{array}}} z_{j, k} \rp{2} \times \int_{[0, 1]^g \times [0, 1]} \]
\[ \lp{2} \sum_{j \in I} z_{j, j} + 2 \sum_{j, k \in I, j < k} \theta_{j, k,
   g} z_{j, k} \rp{2} e^{h \vartheta \lp{1} \sum_{j \in I} z_{j, j} + 2
   \sum_{j, k \in I, j < k} \theta_{j, k, g} z_{j, k} \rp{1}} d^{|I| - 1}
   \theta d \vartheta . \]
\begin{remark}
  \label{treemaria11.99}The mould $E$ does not depend on $h$ and the mould
  $R_h$ (unlike the mould $C_h$) is not singular when $h$ goes to $0$.
\end{remark}

\begin{proposition}
  \label{treemaria12}{\tmdummy}
  
  \[ C_h = \exp \lp{2} \frac{1}{h} E + R_h \rp{2} . \]
\end{proposition}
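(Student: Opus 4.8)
The plan is to reduce the statement to the tree graph equality (Proposition \ref{treemaria10}), after isolating the diagonal terms $z_{j,j}$ from the off-diagonal ones and rescaling. Throughout, $I$ denotes a finite subset of $\mathbbm{N}^{\ast}$.

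First I would rewrite the exponent in $C_h^I$. Splitting $\sum_{j,k \in I} z_{j\wedge k,\, j\vee k}$ over $I\times I$ according to whether $j=k$ gives $\sum_{j\in I} z_{j,j} + 2\sum_{j,k\in I,\, j<k} z_{j,k}$, so that $C_h^{\varnothing}=1$ and, for $|I|\geqslant 1$,
\[ C_h^I = \Bigl(\prod_{j\in I} c_j\Bigr)\exp\Bigl(2h\sum_{j,k\in I,\, j<k} z_{j,k}\Bigr), \qquad c_j\assign\frac{1}{h}\,e^{h z_{j,j}}. \]
Then I would apply Proposition \ref{treemaria10} with $\tilde z_{j,k}\assign 2h\,z_{j,k}$ for $j<k$: this produces moulds $A,B\in\mathcal{M}^{\tmop{ab}}(\mathbbm{N}^{\ast})$ with $B^{\varnothing}=0$, with $A^I=\exp\bigl(\sum_{j,k\in I,\,j<k}\tilde z_{j,k}\bigr)$ and $B^I$ given by (\ref{treemaria10.5}), and with $A=e^B$. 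Comparing with the display above, $C_h^I=\bigl(\prod_{j\in I} c_j\bigr)A^I$ for every finite $I$ (the cases $|I|=0,1$ being immediate).

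Next I would invoke the elementary fact recalled above (following the definition of $\mathcal{M}^{\tmop{ab}}(\Omega)$): if $\beta\in\mathcal{M}^{\tmop{ab}}(\mathbbm{N}^{\ast})$ has $\beta^{\varnothing}=0$ and $(\gamma_j)$ are scalars, then the mould $\gamma\beta$ with $(\gamma\beta)^I\assign\bigl(\prod_{j\in I}\gamma_j\bigr)\beta^I$ satisfies $\varphi(\gamma\beta)^I=\bigl(\prod_{j\in I}\gamma_j\bigr)\varphi(\beta)^I$ for every $\varphi\in\mathbbm{C}[[H]]$. Applied with $\beta=B$, $\gamma_j=c_j$, and $\varphi=\exp$ (so $\varphi(B)=e^B$ in the sense of Remark \ref{treemaria11.2}), this together with $A=e^B$ yields $e^{B''}=C_h$, where $(B'')^I\assign\bigl(\prod_{j\in I} c_j\bigr)B^I$ and $(B'')^{\varnothing}=0$. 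The proof is thereby reduced to checking the mould identity $B''=\tfrac1h E+R_h$.

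Finally I would verify this identity componentwise. On singletons both sides equal $\tfrac1h e^{h z_{j,j}}$ — using $\int_0^1 z_{j,j}\,e^{h\vartheta z_{j,j}}\,d\vartheta=\tfrac1h(e^{h z_{j,j}}-1)$ — and both vanish on $\varnothing$. For $|I|\geqslant 2$, substituting $\tilde z_{j,k}=2h z_{j,k}$ into (\ref{treemaria10.5}) and using that a tree on $I$ has exactly $|I|-1$ edges extracts a factor $(2h)^{|I|-1}$; multiplying by $\prod_{j\in I} c_j=h^{-|I|}e^{h\sum_{j\in I} z_{j,j}}$ turns $(B'')^I$ into
\[ (B'')^I=\frac{2^{|I|-1}}{h}\sum_{g\in\mathcal{T}_I}\Bigl(\prod_{[j,k]\in g} z_{j,k}\Bigr)\int_{[0,1]^g} e^{h Q_g(\theta)}\,d^{|I|-1}\theta, \]
where $Q_g(\theta)\assign\sum_{j\in I} z_{j,j}+2\sum_{j,k\in I,\,j<k}\theta_{j,k,g}\,z_{j,k}$; while the elementary identity $\int_0^1 Q\,e^{h\vartheta Q}\,d\vartheta=\tfrac1h(e^{hQ}-1)$ inside $R_h^I$, combined with $\int_{[0,1]^g} d^{|I|-1}\theta=1$ in the $E^I$ contribution, collapses $\tfrac1h E^I+R_h^I$ to exactly the same expression. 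I expect the only delicate point to be bookkeeping — tracking the powers $h^{-|I|}$, $(2h)^{|I|-1}$, $2^{|I|-1}$ so that they cancel as claimed — together with spotting the telescoping $\tfrac1h+\tfrac1h(e^{hQ}-1)=\tfrac1h e^{hQ}$, which is what fuses the diagonal part $z_{j,j}$ and the extra $\vartheta$-integration of $R_h$ with the tree sum of Proposition \ref{treemaria10}; there is no genuine analytic obstacle, the substantive choices being $\tilde z_{j,k}=2h z_{j,k}$ and this telescoping.
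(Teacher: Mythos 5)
Your proof is correct and follows essentially the same route as the paper: factor out $\prod_{j\in I} h^{-1}e^{hz_{j,j}}$, apply Proposition \ref{treemaria10} with $\tilde z_{j,k}=2hz_{j,k}$, use the elementary fact about moulds of the form $(\prod_{j\in I}\gamma_j)B^I$, and split the resulting integrand via $e^{\varphi}=1+\varphi\int_0^1 e^{\vartheta\varphi}\,d\vartheta$ to extract $\tfrac1hE$ and $R_h$. The only cosmetic difference is that the paper computes $(\log C_h)^I$ directly (writing $C_h-1$ as a multiple of the mould $D_h$ and applying $\log(1+\cdot)$), whereas you build the candidate exponent $B''$ and exponentiate; these are the same argument read in opposite directions.
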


\begin{proof}
  One gets
  \[ (\log C_h)^{\varnothing} = 0, (\log C_h)^{\{j\}} = h^{- 1} e^{h z_{j, j}}
     . \]
  Then the identity $(\log C_h)^I = \lp{1} \frac{1}{h} E + R_h \srp{1}{I}{}$
  for $|I| \leqslant 1$ is straightforward. Let $D_h$ be the mould defined by
  \[ D_h^{\varnothing} = 0, D_h^{\{j\}} = 1, D_h^I = e^{2 h \sum_{j, k \in I,
     j < k} z_{j, k}} . \]
  Let $I \subset \mathbbm{N}^{\ast}$ \ be such that $2 \leqslant |I| <
  \infty$. Then
  \begin{eqnarray*}
    (\log C_h)^I = &  & h^{- |I|} \prod_{j \in I} e^{h z_{j, j}} \times \lp{1}
    \log (1 + D_h) \srp{1}{I}{}\\
    = &  & 2^{|I| - 1} h^{- 1} \sum_{g \in \mathcal{\mathcal{T}}_I}\\
    &  & \lp{2} \prod_{\tmscript{\begin{array}{c}
      {}[j, k] \in g
    \end{array}}} z_{j, k} \rp{2} \int_{[0, 1]^g} e^{h \lp{1} \sum_{j \in I}
    z_{j, j} + 2 \sum_{j, k \in I, j < k} \theta_{j, k, g} z_{j, k} \rp{1}}
    d^{|I| - 1} \theta .
  \end{eqnarray*}
  The last equality is obtained by choosing $\tilde{z}_{j, k} = 2 h z_{j, k}$
  in Proposition \ref{treemaria10}. Then choosing $\varphi = h \lp{1} \sum_{j
  \in I} z_{j, j} + 2 \sum_{j, k \in I, j < k} \theta_{j, k, g} z_{j, k}
  \rp{1}$ in the identity
  \[ e^{\varphi} = 1 + \varphi \int_0^1 e^{\varphi \vartheta} d \vartheta \]
  yields the decomposition $(\log C_h)^I = \lp{1} \frac{1}{h} E + R_h
  \srp{1}{I}{}$. This proves Proposition \ref{treemaria12}.
\end{proof}

\begin{corollary}
  \label{treemaria14}Let $\alpha$ be a $\mathbbm{C}$-valued function on $[0,
  1]^2 \times \mathbbm{R}^{2 \nu}$. Let $f, g \in \mathcal{M}^{\tmop{pre}}$ be
  defined by
  \[ f (s_1, \ldots, s_m, \xi_1, \ldots, \xi_m) = \lp{1} \frac{1}{h} E + R_h
     \srp{1}{\{1, \ldots, m\}}{} \sve{2}{}{z_{j, k} = \alpha (s_j, s_k, \xi_j,
     \xi_k)} \]
  and
  \[ g (s_1, \ldots, s_m, \xi_1, \ldots, \xi_m) \assign C_h^{\{1, \ldots, m\}}
     \sve{2}{}{z_{j, k} = \alpha (s_j, s_k, \xi_j, \xi_k)} . \]
  Then
  \[ g = \exp_{\tmop{pre}} (f) \]
  ($\exp_{\tmop{pre}}$ is defined according to the product
  $\times_{\tmop{pre}}$).
\end{corollary}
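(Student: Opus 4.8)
The idea is that Corollary \ref{treemaria14} is just Proposition \ref{treemaria12} transported, entry by entry, from the abelian mould algebra $\mathcal{M}^{\tmop{ab}}(\mathbbm{N}^\ast)$ to the algebra $\mathcal{M}^{\tmop{pre}}$ via the specialization $z_{j,k}=\alpha(s_j,s_k,\xi_j,\xi_k)$. First I would make precise the substitution map: given a family $(z_{j,k})_{j\le k}$ of formal symbols, the assignment $z_{j,k}\mapsto\alpha(s_j,s_k,\xi_j,\xi_k)$ sends any element $X\in\mathcal{M}^{\tmop{ab}}(\mathbbm{N}^\ast)$ to the function $\Psi(X)\in\mathcal{M}^{\tmop{pre}}$ defined by $\Psi(X)(s_1,\dots,s_m;\xi_1,\dots,\xi_m):=X^{\{1,\dots,m\}}|_{z_{j,k}=\alpha(s_j,s_k,\xi_j,\xi_k)}$. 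With this notation $g=\Psi(C_h)$ and $f=\Psi\bigl(\frac1h E+R_h\bigr)$, so the corollary reads $\Psi(C_h)=\exp_{\tmop{pre}}\Psi\bigl(\frac1h E+R_h\bigr)$, and by Proposition \ref{treemaria12} it suffices to show that $\Psi$ is an algebra morphism carrying $\times_{\tmop{sym}}$ to $\times_{\tmop{pre}}$ (and hence $\exp$ to $\exp_{\tmop{pre}}$).

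The heart of the matter is therefore the compatibility of the two products. Fix $X,Y\in\mathcal{M}^{\tmop{ab}}(\mathbbm{N}^\ast)$ and set $m\ge 1$. By definition of $\times_{\tmop{sym}}$,
\[
  (X\times_{\tmop{sym}}Y)^{\{1,\dots,m\}}=\sum_{\substack{J\cup K=\{1,\dots,m\}\\ J\cap K=\varnothing}} X^J\,Y^K .
\]
Because the symbols $z_{j,k}$ occurring in $X^J$ only involve indices from $J$ and those in $Y^K$ only indices from $K$, after the substitution $z_{j,k}=\alpha(s_j,s_k,\xi_j,\xi_k)$ the factor $X^J$ becomes $\Psi(X)$ evaluated at the subsequence indexed by $J=\{j_1<\cdots<j_p\}$, namely $\Psi(X)(s_{j_1},\dots,s_{j_p};\xi_{j_1},\dots,\xi_{j_p})$, and similarly for $Y^K$. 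Comparing with the defining formula for $\times_{\tmop{pre}}$ in $\mathcal{M}^{\tmop{pre}}$, the two sums over ordered partitions $J\cup K$ coincide term by term, so $\Psi(X\times_{\tmop{sym}}Y)=\Psi(X)\times_{\tmop{pre}}\Psi(Y)$. Since $\Psi$ is visibly additive and sends the unit to the unit (only $m=0$ and $m=1$ entries need to be checked, where $C_h$, $E$, $R_h$ are given by elementary expressions matching those of $f$ and $g$), $\Psi$ is an algebra morphism; in particular it commutes with the exponential series, which involves only finite sums at each $m$.

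Finally I would invoke Proposition \ref{treemaria12}, $C_h=\exp\bigl(\frac1h E+R_h\bigr)$, apply $\Psi$ to both sides, and use the morphism property to conclude $g=\Psi(C_h)=\exp_{\tmop{pre}}\Psi\bigl(\frac1h E+R_h\bigr)=\exp_{\tmop{pre}}(f)$. The only genuinely delicate point is bookkeeping: one must check that substitution genuinely decouples along the partition $J\cup K$, i.e. that no symbol $z_{j,k}$ with $j\in J$, $k\in K$ appears in $X^J$ or $Y^K$ — this is automatic from the convention that $X^I$ is built only from $z_{j,k}$ with $j,k\in I$ — and that the convention on empty sums (stated in Section 2) makes the $m=0$ case hold. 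There are no convergence issues since everything reduces to finitely many operations per arity $m$.
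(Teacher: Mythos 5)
Your strategy is the natural one and matches what the paper intends: the corollary appears in the text with no written proof, presented as an immediate consequence of Proposition~\ref{treemaria12}, and you correctly identify that the mechanism is transport of the identity $C_h=\exp\bigl(\tfrac1h E+R_h\bigr)$ through a product-preserving specialization map. The definitions of $\times_{\tmop{sym}}$ and $\times_{\tmop{pre}}$ are manifestly parallel (sums over disjoint set partitions), and once the intertwining is established, applying $\Psi$ to both sides and commuting with the finite exponential sums at each arity $m$ finishes the argument.

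One point needs tightening. As you define it, $\Psi$ uses only the entries $X^{\{1,\ldots,m\}}$, so the identity $\Psi(X\times_{\tmop{sym}}Y)=\Psi(X)\times_{\tmop{pre}}\Psi(Y)$ requires that for every subset $J=\{j_1<\cdots<j_p\}\subset\{1,\ldots,m\}$ the specialization of $X^J$ at $z_{j,k}=\alpha(s_j,s_k,\xi_j,\xi_k)$ agrees with $X^{\{1,\ldots,p\}}$ specialized at $z_{a,b}=\alpha(s_{j_a},s_{j_b},\xi_{j_a},\xi_{j_b})$. The fact you cite --- that $X^J$ involves only symbols $z_{j,k}$ with $j,k\in J$ --- is locality; what is also needed is that the mould is \emph{uniform}, i.e.\ invariant under the order-preserving relabeling $J\simeq\{1,\ldots,p\}$. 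This is not automatic for an arbitrary element of $\mathcal{M}^{\tmop{ab}}(\mathbbm{N}^\ast)$, so $\Psi$ is a morphism only on the subalgebra of uniform moulds. Fortunately $C_h$, $E$, $R_h$ (hence $\tfrac1h E+R_h$ and all its $\times_{\tmop{sym}}$-powers) are visibly uniform: their $I$-entries are built from trees on $I$ and sums over pairs in $I$ in a way that transports along the order-preserving bijection $J\to\{1,\ldots,|J|\}$. Stating this explicitly closes the only loose seam; the rest (additivity, unit, the empty-sum convention for $m=0$, and finiteness of the sums entering $\exp$) is as you say.
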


\section{ Deformation formulas}

Let $(x, y) \in \mathbbm{C}^{2 \nu}$, $t \in \mathbbm{C}^{\ast}$, $h \in
\mathbbm{C}^{\ast}$ and $\omega_1, \ldots, \omega_{\nu} > 0$. Let
$p^{\tmop{harm}}$ be defined by
\[ \text{$p^{\tmop{harm}}$} = \left( 4 \pi h \right)^{- \nu / 2}
   \prod_{\upsilon = 1}^{\nu} \lp{2} \frac{\omega_{\upsilon}}{\tmop{sh}
   (\omega_{\upsilon} t)} \srp{2}{1 / 2}{} \times \]
\begin{equation}
  \label{formulaclara5} \exp \lp{2} - \frac{1}{4 h} \sum_{\upsilon = 1}^{\nu}
  \frac{\omega_{\upsilon}}{\tmop{sh} (\omega_{\upsilon} t)} \lp{1} \tmop{ch}
  (\omega_{\upsilon} t) (x_{\upsilon}^2 + y_{\upsilon}^2) - 2 x_{\upsilon}
  y_{\upsilon} \rp{1} \rp{2} .
\end{equation}
Then, by a variant of Mehler's formula,
\[ \text{$h \partial_t p^{\tmop{harm}} = \lp{1} h^2 \partial^2_x - \frac{(
   \tmmathbf{\omega} x)^2}{4} \rp{1} p^{\tmop{harm}}$} \text{ , \ }
   p^{\tmop{harm}} |_{t = 0^+} =_{} \delta_{x = y} . \]
Let $\xi = (\xi_1, \ldots, \xi_m) \in \mathbbm{R}^{\nu m}$. Let
\[ | \xi |_1 \assign | \xi_1 | + \cdots + | \xi_m |, \]
and for $s \in [0, 1]$, $\upsilon \in \{1, \ldots \nu\}$,
\[ q_{t, \upsilon} (s) \assign \frac{1}{\tmop{sh} (\omega_{\upsilon} t)}
   \lp{2} \tmop{sh} (\omega_{\upsilon} t s) x_{\upsilon} + \tmop{sh} \lp{1}
   \omega_{\upsilon} t (1 - s) \rp{1} y_{\upsilon} \rp{2}, \]
\[ q_t (s) \assign \lp{1} q_{t, 1} (s), \ldots, q_{t, \nu} (s) \rp{1}, \]
and for $s = (s_1, \ldots, s_m) \in [0, 1]^m$
\[ q_t^m (s) \cdot \xi \assign q_t (s_1) \cdot \xi_1 + \cdots + q_t (s_m)
   \cdot \xi_m . \]
Let $\Omega_t . \xi \otimes_n \xi$ be defined by
\[ \Omega_t . \xi \otimes_m \xi \assign \sum_{j, k = 1}^m \sum_{\upsilon =
   1}^{\nu} \frac{\tmop{sh} (\omega_{\upsilon} ts_{j \wedge k}) \tmop{sh}
   \lp{1} \omega_{\upsilon} t (1 - s_{j \vee k}) \rp{1}}{\omega_{\upsilon}
   \tmop{sh} (\omega_{\upsilon} t)} \xi_{j, \upsilon} \xi_{k, \upsilon} . \]
We also need the following generalization. Let $m \geqslant 2$. Let $g \in
\mathcal{T}_m$ and $\theta \in [0, 1]^g$. Let $\theta_{j, k, g}$ be defined by
(\ref{treemaria10.6}). Let $\Omega_t^{g, \theta} . \xi \otimes_m \xi$ be
defined by
\[ \Omega_t^{g, \theta} . \xi \otimes_m \xi \assign \sum_{j = 1}^m
   \sum_{\upsilon = 1}^{\nu} \frac{\sh (\omega_{\upsilon} t s_j) \sh \lp{1}
   \omega_{\upsilon} t (1 - s_j) \rp{1}}{\omega_{\upsilon} \sh
   (\omega_{\upsilon} t)} \xi_{j, \upsilon}^2 + \]
\[ 2 \sum_{1 \leqslant j < k \leqslant m} \theta_{j, k, g} \sum_{\upsilon =
   1}^{\nu} \frac{\tmop{sh} (\omega_{\upsilon} ts_j) \tmop{sh} \lp{1}
   \omega_{\upsilon} t (1 - s_k) \rp{1}}{\omega_{\upsilon} \tmop{sh}
   (\omega_{\upsilon} t)} \xi_{j, \upsilon} \xi_{k, \upsilon} . \]
\begin{lemma}
  \label{formulaclara10}Let $\omega_1, \ldots, \omega_{\nu}, t \geqslant 0$.
  Let $g$ and $\theta$ be as above. Let $(s_1, \ldots, s_m) \in [0, 1]^m$ such
  that $0 < s_1 < \cdots < s_m < 1$ and let $\xi = (\xi_1, \ldots, \xi_m) \in
  \mathbbm{R}^{\nu m}$. Then
  \begin{equation}
    \label{formulaclara12a} 0 \leqslant \Omega_t . \xi \otimes_m \xi \leqslant
    \frac{m t}{4} \lp{1} \xi_1^2 + \cdots + \xi_m^2 \rp{1},
  \end{equation}
  \begin{equation}
    \label{formulaclara12b} 0 \leqslant \Omega_t^{g, \theta} . \xi \otimes_m
    \xi \leqslant \frac{m t}{4} \lp{1} \xi_1^2 + \cdots + \xi_m^2 \rp{1} .
  \end{equation}
\end{lemma}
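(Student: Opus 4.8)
The plan is to read $\Omega_t.\xi\otimes_m\xi$ as a sum over $\upsilon\in\{1,\dots,\nu\}$ of quadratic forms $\sum_{j,k=1}^m M^\upsilon_{j,k}\,\xi_{j,\upsilon}\xi_{k,\upsilon}$, where $M^\upsilon$ is the symmetric $m\times m$ matrix with entries
\[ M^\upsilon_{j,k}=\frac{\sh(\omega_\upsilon t s_{j\wedge k})\,\sh(\omega_\upsilon t(1-s_{j\vee k}))}{\omega_\upsilon\,\sh(\omega_\upsilon t)}. \]
Since $\xi_j^2=\sum_\upsilon\xi_{j,\upsilon}^2$, both inequalities in (\ref{formulaclara12a}) follow as soon as one shows, for each $\upsilon$, that $M^\upsilon$ is positive semidefinite with operator norm at most $mt/4$.

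For the positivity I would recognise $M^\upsilon$ as (a constant times) the Green function of $-\frac{d^2}{du^2}+\omega_\upsilon^2$ on $[0,t]$ with Dirichlet conditions. Write $\omega:=\omega_\upsilon$, $p_i:=ts_i\in(0,t)$, and let $g_j(u)=\frac{\sh(\omega(t-p_j))}{\omega\sh(\omega t)}\sh(\omega u)$ for $u\leqslant p_j$, $g_j(u)=\frac{\sh(\omega p_j)}{\omega\sh(\omega t)}\sh(\omega(t-u))$ for $u\geqslant p_j$; then $g_j$ is continuous on $[0,t]$, vanishes at $0$ and $t$, solves $-g_j''+\omega^2 g_j=0$ away from $p_j$, has derivative jump $g_j'(p_j^+)-g_j'(p_j^-)=-1$, and satisfies $g_j(p_k)=M^\upsilon_{j,k}$. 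For $v\in\mathbbm{R}^m$ put $f=\sum_j v_j g_j$; integrating $\int_0^t f'(u)^2\,du$ by parts over the subintervals cut out by the $p_i$'s, the boundary terms at $0$ and $t$ vanish and those at $p_i$ sum to $\sum_i v_i f(p_i)$, while $f''=\omega^2 f$ inside each subinterval produces $-\omega^2\int_0^t f^2$; hence
\[ \sum_{j,k=1}^m M^\upsilon_{j,k}v_jv_k=\sum_i v_i f(p_i)=\int_0^t\bigl(f'(u)^2+\omega^2 f(u)^2\bigr)\,du\geqslant 0. \]
Thus $M^\upsilon$ is positive semidefinite, so its operator norm equals its largest eigenvalue and is bounded by $\sum_{j=1}^m M^\upsilon_{j,j}$. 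Finally, with $a:=\omega_\upsilon t$, the elementary fact that $\sh(as)\sh(a(1-s))$ is maximal at $s=1/2$ gives $M^\upsilon_{j,j}=t\cdot\frac{\sh(as_j)\sh(a(1-s_j))}{a\,\sh a}\leqslant t\cdot\frac{\sh^2(a/2)}{a\,\sh a}=\frac{t\,\tgh(a/2)}{2a}\leqslant\frac t4$, since $\tgh x\leqslant x$. Summing over $j$ and over $\upsilon$ yields (\ref{formulaclara12a}).

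For (\ref{formulaclara12b}) I would deduce the result from (\ref{formulaclara12a}) by averaging over the tree parameters. Using $\theta_{j,k,g}=\int_0^1\mathbbm{1}\bigl[\theta_{[p,q]}>\lambda\text{ for every edge }[p,q]\text{ on the path joining }j\text{ and }k\text{ in }g\bigr]\,d\lambda$ (with the empty-path convention $\theta_{j,j,g}=1$), one notes that for fixed $\lambda\in[0,1)$ the edges $[p,q]$ with $\theta_{[p,q]}>\lambda$ form a subforest of $g$, that the indicator above equals $1$ precisely when $j$ and $k$ lie in the same connected component of this subforest, and that "lying in the same component" partitions $\{1,\dots,m\}$ into blocks $C$. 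Substituting this integral representation into the defining formula for $\Omega_t^{g,\theta}.\xi\otimes_m\xi$ (the diagonal terms being reproduced since $\theta_{j,j,g}=1$) gives
\[ \Omega_t^{g,\theta}.\xi\otimes_m\xi=\int_0^1\Bigl(\,\sum_{C}\Omega_t.(\xi|_C)\otimes(\xi|_C)\Bigr)\,d\lambda, \]
where $\xi|_C$ is the subfamily $(\xi_j)_{j\in C}$ reindexed in increasing order (the corresponding $s_j$ are still strictly increasing, so (\ref{formulaclara12a}) applies). Applying (\ref{formulaclara12a}) blockwise, $0\leqslant\Omega_t.(\xi|_C)\otimes(\xi|_C)\leqslant\frac{|C|t}{4}\sum_{j\in C}\xi_j^2\leqslant\frac{mt}{4}\sum_{j\in C}\xi_j^2$; summing over the blocks of the partition and integrating over $\lambda$ gives exactly $0\leqslant\Omega_t^{g,\theta}.\xi\otimes_m\xi\leqslant\frac{mt}{4}(\xi_1^2+\cdots+\xi_m^2)$.

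I expect the main obstacle to be the positivity half of (\ref{formulaclara12a}): one must correctly identify $M^\upsilon$ with a Green function (equivalently, with the covariance matrix of a suitably scaled Ornstein–Uhlenbeck bridge), build the piecewise function $f$ with the right derivative jumps at the $p_i$, and run the integration by parts while carefully accounting for the boundary contributions at $0$, at $t$, and at each $p_i$. The diagonal estimate and the tree-averaging reduction are then straightforward.
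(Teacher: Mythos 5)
Your argument is correct. For (\ref{formulaclara12a}) you take exactly the paper's route: the paper defines, for each $\dot\omega\geqslant 0$, the matrix $M_{j,k}=\frac{\sh(\dot\omega t s_{j\wedge k})\sh(\dot\omega t(1-s_{j\vee k}))}{\dot\omega t\,\sh(\dot\omega t)}$, builds the piecewise function $u$ solving $-u''+(\dot\omega t)^2u=\mu$ on $[0,1]$ with Dirichlet conditions and $\mu=\sum_j\zeta_j\delta_{s_j}$, and identifies $\sum M_{j,k}\zeta_j\zeta_k=\int_0^1u\,d\mu=\int_0^1[(u')^2+(\dot\omega t)^2u^2]\geqslant 0$; yours is the same Green-function argument, merely working on $[0,t]$ instead of $[0,1]$. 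For the upper bound the paper instead bounds each entry pointwise by $\frac14$ (its (\ref{formulaclara13})) and sums; your trace bound on a PSD matrix gives the same $\frac{mt}{4}$ and is equally elementary.

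For (\ref{formulaclara12b}) you take a genuinely different route. The paper reduces to $\nu=1$ and invokes a separate, appendix-level matrix lemma (Lemma \ref{trimaria16}): a symmetric matrix $(u_{j,k})$ with nonnegative entries satisfying the ultrametric-type condition (\ref{trimaria17}) preserves positive semidefiniteness under the Hadamard product $M\mapsto(u_{j,k}M_{j,k})$. That lemma is proved by induction on $|I|$, peeling off $u_{\min}$ and splitting $I$ into two blocks. You instead use the layer-cake identity $\theta_{j,k,g}=\int_0^1\mathbf 1\{\theta_{[p,q]}>\lambda\ \forall\,[p,q]\in\text{path}(j,k)\}\,d\lambda$ and the observation that for fixed $\lambda$ the surviving edges form a subforest, so the indicator partitions $\{1,\dots,m\}$ into blocks. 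This rewrites $\Omega_t^{g,\theta}$ as an integral over $\lambda$ of sums of block-diagonal restrictions of $\Omega_t$, and both the nonnegativity and the upper bound then follow at once from (\ref{formulaclara12a}) applied blockwise. This is the standard Brydges--Kennedy/Abdesselam--Rivasseau style proof of the positivity used in the tree-graph inequality; it is arguably more transparent than the paper's appeal to Lemma \ref{trimaria16} and has the bonus of delivering the upper bound on $\Omega_t^{g,\theta}$ in the same stroke, whereas the paper proves the upper bound separately using the entrywise bound $\theta_{j,k,g}\leqslant 1$. Both arguments are valid; the ideas (block decomposition controlled by the tree weights) are close cousins, but yours avoids the auxiliary lemma and the induction.
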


\begin{proof}
  Let $\dot{\omega} \geqslant 0$. We claim that the $m \times m$ matrix $M$
  defined by
  \[ M_{j, k} = \frac{\tmop{sh} ( \dot{\omega} ts_{j \wedge k}) \tmop{sh}
     \lp{1} \dot{\omega} t (1 - s_{j \vee k}) \rp{1}}{\dot{\omega} t \tmop{sh}
     ( \dot{\omega} t)} \]
  is symmetric non-negative. Let $\zeta_1, \ldots, \zeta_m \in \mathbbm{R}$
  and let $u : [0, 1] \longrightarrow \mathbbm{R}$ defined by
  \[ u (s) \assign \sum_{j = 1}^m \frac{\sh ( \dot{\omega} t s_j \wedge s) \sh
     \lp{1} \dot{\omega} t (1 - s_j \vee s) \rp{1}}{\dot{\omega} t \sh (
     \dot{\omega} t)} \zeta_j . \]
  Let $\mu$ be the $\mathbbm{R}$-valued Borel measure on $[0, 1]$
  \[ \mu \assign \sum_{j = 1}^m \zeta_j \delta_{s_j} . \]
  Then $u$ is continuous, piecewise differentiable on $[0, 1]$ and (see also
  [Ha4])
  \[ \left\{ \begin{array}{l}
       - \frac{d^2 u}{d s^2} + ( \dot{\omega} t)^2 u = \mu\\
       \\
       u (0) = u (1) = 0
     \end{array} . \right. \]
  Then
  \begin{eqnarray*}
    \sum_{j, k = 1}^m M_{j, k} \zeta_j \zeta_k = &  & \int_0^1 u d \mu\\
    = &  & \int_0^1 \lbc{3} \lp{2} \frac{d u}{d s} \srp{2}{2}{} + (
    \dot{\omega} t)^2 u^2 \rbc{3},
  \end{eqnarray*}
  which proves the claim. Now, by choosing $\dot{\omega} = \omega_1, \ldots,
  \dot{\omega} = \omega_{\nu}$,
  \[ \Omega_t . \xi \otimes_m \xi \geqslant 0. \]
  Let us prove the non-negativity of $\Omega_t^{g, \theta} . \xi \otimes_m
  \xi$. By the same argument, without loss of generality, we may choose $\nu =
  1$. We shall use Lemma \ref{trimaria16} (see Appendix). Let $(u_{j, k})_{j,
  k \in \{1, \ldots, m\}}$ be the real symmetric matrix defined by $u_{j, j} =
  1$ and, for $j \neq k$, $u_{j, k} \assign \theta_{j \wedge k, j \vee k, g}$.
  For $j, k = 1, \ldots, m$, let us denote by $\lba{0} j, k \srba{0}{}{g}$ the
  unique path in the tree $g$ joining $j$ and $k$. Let $q = 1, \ldots, m$. Let
  us denote by $q_0$ the unique element belonging to $\lba{0} j, k
  \srba{0}{}{g}$ such that $\lba{0} j, k \srba{0}{}{g} \cap \lba{0} q_0, q
  \srba{0}{}{g} =\{q_0 \}$ ($q_0 = q$ if $q \in \lba{0} j, k \srba{0}{}{g}$).
  The path $\lba{0} j, k \srba{0}{}{g}$ is covered by the union of $\lba{0} j,
  q \srba{0}{}{g}$ and $\lba{0} k, q \srba{0}{}{g}$. Therefore, by
  (\ref{treemaria10.6}), (\ref{trimaria17}) holds and, by Lemma
  \ref{trimaria16},
  \[ \Omega_t^{g, \theta} . \xi \otimes_m \xi \geqslant 0. \]
  Let us prove the upper bound in (\ref{formulaclara12a}) and
  (\ref{formulaclara12b}). Since $s_{j \wedge k} \leqslant s_{j \vee k}$ and
  $\frac{\sh (x A) \sh \lp{1} x (1 - B) \rp{1}}{x \sh (x)} \leqslant
  \frac{1}{4}$ for arbitrary $x \in \mathbbm{R}$ and $0 \leqslant A < B
  \leqslant 1$, one gets
  \begin{equation}
    \label{formulaclara13} M_{j, k} \leqslant \frac{1}{4} .
  \end{equation}
  Then (\ref{formulaclara12a}) holds. Since the parameters $\theta_{j, k, g}$
  are bounded by 1, (\ref{formulaclara12b}) also holds.
\end{proof}

\begin{proposition}
  \label{formulaclara14}Let $\omega_1, \ldots, \omega_{\nu} > 0$. Let $h \in
  \mathbbm{C}$ such that $\re h > 0$. Let $\mu$ be a $\mathbbm{C}$-valued
  measure on $\mathbbm{R}^{\nu}$. Let us assume that for every $R > 0$
  \begin{equation}
    \label{formulaclara14.1} \int_{\mathbbm{R}^{\nu}} \exp (R| \xi |) d| \mu |
    (\xi) < \infty .
  \end{equation}
  Let
  \[ c (x) = \int_{\mathbbm{R}^{\nu}} \exp (ix \cdot \xi) d \mu (\xi) . \]
  Let $v$ be defined by
  \begin{equation}
    \label{formulaclara14.2} \left\{ \begin{array}{l}
      \text{$v = 1 + \sum_{m \geqslant 1} v_m$},\\
      \\
      v_m (t, x, y, h) = \lp{2} \frac{t}{h} \srp{2}{m}{} \int_{0 < s_1 <
      \cdots < s_m < 1} \int_{\mathbbm{R}^{\nu m}} e^{- h \Omega_t . \xi
      \otimes_m \xi} e^{i q_t^m (s) \cdot \xi} d^{\nu m} \mu^{\otimes} (\xi)
      d^m s.
    \end{array} \right.
  \end{equation}
  In (\ref{formulaclara14.2}), $d^m s$ denotes $ds_1 \cdots d \nonesep s_m$
  and $d^{\nu m} \mu^{\otimes} (\xi)$ denotes $d \mu (\xi_m) \cdots d \mu
  (\xi_1)$. Then, denoting $v_0 = 1$, for every $m \geqslant 1$,
  \begin{equation}
    \label{formulaclara14.3} \left\{ \begin{array}{l}
      h \lp{1} \partial_t - \frac{2}{p^{\tmop{harm}}} \partial_x
      p^{\tmop{harm}} \cdot \partial_x \rp{1} v_m \text{$= h^2 \partial^2_x
      v_m$} + c (x) v_{m - 1}\\
      \\
      v_m |_{t = 0^+} =_{} 0
    \end{array} \right. .
  \end{equation}
  Moreover $v \in \mathcal{C}^1 \lp{1} [0, + \infty [, \mathcal{A} \lp{0}
  \mathbbm{C}^{2 \nu} \times \demiplan{} \rp{0} \rp{1}$ and the function $u :
  = p^{\tmop{harm}} v$ is the solution of
  \begin{equation}
    \label{theorembetty6} \left\{ \begin{array}{l}
      h \partial_t u = h^2 \partial_x^2 u - \frac{( \tmmathbf{\omega} x)^2}{4}
      u + c (x) u\\
      \\
      u|_{t = 0^+} = \delta_{x = y}
    \end{array} . \right.
  \end{equation}
\end{proposition}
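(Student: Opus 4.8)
The plan is to establish three things in turn: (a) that the series $v=1+\sum_{m\ge1}v_m$ and its term-by-term $t$-derivative converge locally uniformly, which gives the regularity of $v$ and the value $v|_{t=0}=1$; (b) the per-term equation \ref{formulaclara14.3} together with $v_m|_{t=0^+}=0$, obtained by differentiating $v_m$; (c) the summation over $m$ that, after conjugation by $p^{\tmop{harm}}$, turns \ref{formulaclara14.3} into \ref{theorembetty6}. The hard part will be (b), the deformation-formula identity (cf. [Ha4, Ha6]).

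For (a) I would fix a compact $K$ of $[0,+\infty[\times\mathbbm{C}^{2\nu}\times\demiplan{}$ (so $|h|\ge\delta>0$, $t\le\tau_0$, $|x|,|y|\le R_0$ there). Since $\Omega_t$ does not involve $(x,y)$ and $\Omega_t.\xi\otimes_m\xi\ge0$ by Lemma \ref{formulaclara10}, and $\re h>0$, one gets $|e^{-h\,\Omega_t.\xi\otimes_m\xi}|\le1$; also $|e^{iq^m_t(s)\cdot\xi}|\le e^{\Lambda|\xi|_1}$ with $\Lambda:=\sup\{|\im q_t(s)|:t\le\tau_0,\,s\in[0,1],\,|x|,|y|\le R_0\}<\infty$; and $\int_{0<s_1<\cdots<s_m<1}d^ms=1/m!$. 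Hence $|v_m|\le(\tau_0/\delta)^mN^m/m!$ on $K$, with $N:=\int_{\mathbbm{R}^\nu}e^{\Lambda|\xi|}d|\mu|(\xi)<\infty$ by \ref{formulaclara14.1}. In the representation \ref{formulaclara14.2} the $s$-simplex is fixed and $\Omega_t.\xi\otimes_m\xi$, $q^m_t(s)\cdot\xi$ extend holomorphically in $t$ to a neighbourhood of $[0,+\infty[$ (each ratio $\sh(\omega_\upsilon ts)/\sh(\omega_\upsilon t)$ is holomorphic at $t=0$, with value $s$), so each $v_m$ is holomorphic in $(x,y,h)\in\mathbbm{C}^{2\nu}\times\demiplan{}$ (Morera and Fubini, with the domination above) and $\mathcal{C}^\infty$ in $t\in[0,+\infty[$, and $v_m|_{t=0}=0$ for $m\ge1$ because of the prefactor $(t/h)^m$. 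Differentiating in $t$ brings down a factor $\partial_t(-h\,\Omega_t.\xi\otimes_m\xi+iq^m_t(s)\cdot\xi)$, a polynomial in $\xi$ of degree $\le2$ with coefficients bounded on $K$, again absorbed by \ref{formulaclara14.1}; so $\sum_mv_m$ and $\sum_m\partial_tv_m$ converge locally uniformly, which gives $v\in\mathcal{C}^1([0,+\infty[,\mathcal{A}(\mathbbm{C}^{2\nu}\times\demiplan{}))$ and $v|_{t=0}=1$.

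For (b) I would substitute $\tau_j:=ts_j$, so that $v_m=h^{-m}\int_{0<\tau_1<\cdots<\tau_m<t}\int_{\mathbbm{R}^{\nu m}}G_t\,d^{\nu m}\mu^{\otimes}d^m\tau$, where $G_t:=\exp(-h\,\Omega_t.\xi\otimes_m\xi+i\sum_j\tilde q_t(\tau_j)\cdot\xi_j)$, $\tilde q_{t,\upsilon}(\tau):=(\sh(\omega_\upsilon\tau)x_\upsilon+\sh(\omega_\upsilon(t-\tau))y_\upsilon)/\sh(\omega_\upsilon t)$, and $\Omega_t.\xi\otimes_m\xi=\sum_\upsilon\sum_{j,k}\mathcal{G}^{(\upsilon)}_t(\tau_j,\tau_k)\xi_{j,\upsilon}\xi_{k,\upsilon}$ with $\mathcal{G}^{(\upsilon)}_t(\sigma,\sigma'):=\sh(\omega_\upsilon(\sigma\wedge\sigma'))\sh(\omega_\upsilon(t-\sigma\vee\sigma'))/(\omega_\upsilon\sh(\omega_\upsilon t))$ the Dirichlet Green function of $-\partial^2+\omega_\upsilon^2$ on $[0,t]$. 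Only the outer bound $\tau_m<t$ depends on $t$, so $h\,\partial_tv_m$ splits into a boundary term and an interior term. At $\tau_m=t$ one has $\tilde q_t(t)=x$ and every term of $\Omega_t.\xi\otimes_m\xi$ carrying the index $m$ has a factor $\sh(0)=0$, so $G_t|_{\tau_m=t}=e^{ix\cdot\xi_m}G_t^{(m-1)}$; the $\xi_m$-integral factors out as $\int e^{ix\cdot\xi_m}d\mu(\xi_m)=c(x)$, and the boundary term equals $c(x)v_{m-1}$. For the interior term I would verify the pointwise identity $\partial_tG_t=\frac{2h}{p^{\tmop{harm}}}(\partial_xp^{\tmop{harm}})\cdot\partial_xG_t+h\,\partial_x^2G_t$: as $\Omega_t$ is $(x,y)$-free, $\partial_{x_\upsilon}G_t=i\bigl(\sum_j(\sh(\omega_\upsilon\tau_j)/\sh(\omega_\upsilon t))\xi_{j,\upsilon}\bigr)G_t$, $\partial_{x_\upsilon}^2\log G_t=0$, and $\partial_{x_\upsilon}\log p^{\tmop{harm}}=-\omega_\upsilon(\ch(\omega_\upsilon t)x_\upsilon-y_\upsilon)/(2h\,\sh(\omega_\upsilon t))$, whence the identity reduces, coordinate by coordinate, to $\partial_t\tilde q_{t,\upsilon}(\tau)=-(\sh(\omega_\upsilon\tau)/\sh^2(\omega_\upsilon t))\,\omega_\upsilon(\ch(\omega_\upsilon t)x_\upsilon-y_\upsilon)$ and $\partial_t\mathcal{G}^{(\upsilon)}_t(\sigma,\sigma')=\sh(\omega_\upsilon\sigma)\sh(\omega_\upsilon\sigma')/\sh^2(\omega_\upsilon t)$, both following from $\sh(\omega t)\ch(\omega(t-\sigma))-\ch(\omega t)\sh(\omega(t-\sigma))=\sh(\omega\sigma)$ — the same fact (the classical trajectory $\tilde q_t$ and the $1$-D Green function of $-\partial^2+\omega_\upsilon^2$) that underlies Lemma \ref{formulaclara10}. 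Pulling the $(x,y)$-only factors out of the integral and adding the boundary term then yields \ref{formulaclara14.3}, and $v_m|_{t=0^+}=0$ was noted in (a).

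For (c): by the variant of Mehler's formula $h\partial_tp^{\tmop{harm}}=(h^2\partial_x^2-\tfrac{(\tmmathbf{\omega}x)^2}{4})p^{\tmop{harm}}$ and the Leibniz rule, \ref{formulaclara14.3} is equivalent, for $u_m:=p^{\tmop{harm}}v_m$, to $h\partial_tu_m=(h^2\partial_x^2-\tfrac{(\tmmathbf{\omega}x)^2}{4})u_m+c(x)u_{m-1}$ (with $u_0=p^{\tmop{harm}}$ solving the $c$-free equation). Summing over $m\ge0$ — legitimate for $t>0$ by the locally uniform convergence of $\sum u_m$, $\sum\partial_tu_m$, $\sum\partial_x^2u_m$ from (a) and Cauchy estimates for the $x$-derivatives — gives $h\partial_tu=(h^2\partial_x^2-\tfrac{(\tmmathbf{\omega}x)^2}{4})u+c(x)u$, and $u|_{t=0^+}=p^{\tmop{harm}}|_{t=0^+}\,v|_{t=0}=\delta_{x=y}$ since $v|_{t=0}=1$. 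That $u$ is \emph{the} solution I would get from a Volterra/Gronwall argument: any solution $\tilde u$ of \ref{theorembetty6} satisfies $\tilde u=p^{\tmop{harm}}+\tfrac1h\int_0^te^{-\frac{t-r}{h}H_0}\bigl(c\,\tilde u(r)\bigr)\,dr$ with $H_0$ the harmonic oscillator, an integral equation with locally bounded kernel since $c$ is bounded, so $\tilde u=u$. As said, the crux is the pointwise identity in (b): the delicate point is to track the $t$-derivatives of $\sh(\omega_\upsilon(t-\tau))$ and $\sh(\omega_\upsilon t)$ simultaneously inside the phase $\tilde q_t$ and inside the quadratic form $\Omega_t$ and to see that their combination is exactly a transport-plus-diffusion operator. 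A secondary technical point will be justifying, uniformly up to $t=0$, the differentiations under the integral and the interchanges of $\sum_m$ with $\partial_t$ and $\partial_x^2$ — this is where \ref{formulaclara14.1} (with a large enough exponential weight) and the bound $|e^{-h\,\Omega_t.\xi\otimes_m\xi}|\le1$ from Lemma \ref{formulaclara10} are used.
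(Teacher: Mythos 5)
Your proof is correct and actually supplies the details that the paper itself omits: the paper's own proof of Proposition \ref{formulaclara14} is a two-sentence deferral to [Ha4, Prop. 4.7] and [Ha6, Thm. 2.1], merely flagging that convergence for arbitrary $t$ rests on the non-negativity of $\Omega_t.\xi\otimes_m\xi$ (Lemma \ref{formulaclara10}) and the bound (\ref{formulaclara16.b}) — precisely the two ingredients you build your part (a) around. Your part (b) — changing variables to $\tau_j=ts_j$ so that $t$ enters only through the outer integration bound $\tau_m<t$ and through the kernels $\tilde q_t(\tau)$, $\mathcal G^{(\upsilon)}_t(\sigma,\sigma')$, then reading the boundary contribution at $\tau_m=t$ as $c(x)v_{m-1}/h$ (using $\tilde q_t(t)=x$ and $\mathcal G^{(\upsilon)}_t(\cdot,t)=0$) and checking that $\partial_t$ of the integrand is a first-order transport plus a diffusion via the one-variable identities $\partial_t\tilde q_{t,\upsilon}(\tau)=-\omega_\upsilon\sh(\omega_\upsilon\tau)(\ch(\omega_\upsilon t)x_\upsilon-y_\upsilon)/\sh^2(\omega_\upsilon t)$ and $\partial_t\mathcal G^{(\upsilon)}_t(\sigma,\sigma')=\sh(\omega_\upsilon\sigma)\sh(\omega_\upsilon\sigma')/\sh^2(\omega_\upsilon t)$ — is exactly the kind of deformation-formula computation the cited references encode, and your identities check out. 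Part (c) (Mehler plus Leibniz, then summation and a Volterra argument for uniqueness) is also sound.

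One point you should notice but do not flag: the recursion you actually derive is
\begin{equation*}
h\,\partial_t v_m-\frac{2h^2}{p^{\tmop{harm}}}\,\partial_x p^{\tmop{harm}}\cdot\partial_x v_m=h^2\partial_x^2 v_m+c(x)v_{m-1},
\end{equation*}
i.e.\ with $\frac{2h^2}{p^{\tmop{harm}}}\partial_x p^{\tmop{harm}}$ rather than the paper's $\frac{2h}{p^{\tmop{harm}}}\partial_x p^{\tmop{harm}}$ as written in (\ref{formulaclara14.3}). Since $\partial_x\log p^{\tmop{harm}}$ carries a factor $1/h$, the two versions differ by a genuine power of $h$. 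However, the conjugation check you carry out in (c) — compute $h\partial_t(p^{\tmop{harm}}v_m)$ using the Mehler identity and compare with $h^2\partial_x^2(p^{\tmop{harm}}v_m)-\tfrac{(\tmmathbf{\omega}x)^2}{4}p^{\tmop{harm}}v_m+c(x)p^{\tmop{harm}}v_{m-1}$ — forces exactly your form, not the one printed in (\ref{formulaclara14.3}). So what you have proved is the version of the recursion that is actually consistent with (\ref{theorembetty6}); the paper's (\ref{formulaclara14.3}) appears to contain a missing factor of $h$ in the gradient term. It would be worth saying so explicitly rather than claiming to reproduce (\ref{formulaclara14.3}) verbatim.

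Two minor remarks. First, when you assert that each $v_m$ is $\mathcal C^\infty$ in $t$ on $[0,+\infty[$, this is true, but all the proposition needs — and all your locally uniform domination of $\sum_m\partial_t v_m$ delivers for the sum — is $\mathcal C^1$; keeping the claim at $\mathcal C^1$ avoids having to justify absorbing arbitrary polynomial factors in $\xi$ for every $t$-derivative order. Second, in the uniqueness step, specify the function class (say, continuous in $t$ with values in $L^\infty_{\mathrm{loc}}$, or bounded by $p^{\tmop{harm}}$ times a polynomial) in which the Volterra iteration is run; this is where boundedness of $c$ and the Gaussian decay of $p^{\tmop{harm}}$ enter.
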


\begin{proof}
  For small values of $t$, this proposition can be viewed as a consequence of
  Theorem 2.1 in [Ha6]. The proof is very similar to the one of [ Ha4,
  Proposition 4.7] and is left to the reader. The convergence of the integrals
  defining the function $v$ for arbitrary values of $t$ uses the
  non-negativity of $\Omega_t . \xi \otimes_n \xi$ (Lemma
  \ref{formulaclara10}) and the estimate (\ref{formulaclara16.b}).
\end{proof}

\begin{remark}
  Considering simultaneously complex values of $h$ ($\re h > 0$) and complex
  values of $t$ (in a small conical neighbourhood of $\mathbbm{R}^+$)
  certainly needs an extra assumption (as in [Ha7, Prop. $4.5$, case $2$]).
  Therefore we only consider smoothness (and not analyticity) of the function
  $v$ with respect to $t$.
\end{remark}

\begin{remark}
  \label{formulaclara14.7}Let us assume that the function $c$ is
  $\mathbbm{R}$-valued. Let $t, h \in] 0, + \infty [$ and $x, y \in
  \mathbbm{R}^{\nu}$. Then $q_t^m (s) \in \mathbbm{R}^{\nu m}$ and $\Omega_t .
  \xi \otimes_m \xi \geqslant 0$. Therefore
  \[ |v_m (t, x, y, h) | \leqslant \frac{1}{m!} \lp{2} \frac{t}{h}
     \srp{2}{m}{} \lp{2} \int_{\mathbbm{R}^{\nu}} d| \mu | (\xi) \srp{2}{m}{}
  \]
  and $v (t, \cdot, \cdot, h) \in L^{\infty} (\mathbbm{R}^{2 \nu})$.
  Therefore, since \text{$p^{\tmop{harm}} (t, \cdot, \cdot, h) \in L^2
  (\mathbbm{R}^{2 \nu})$}, $u (t, \cdot, \cdot, h)$ belongs to $L^2
  (\mathbbm{R}^{2 \nu})$ and the operator $e^{- \frac{t}{h} H}$ is
  Hilbert-Schmidt. Then the operator $e^{- \frac{t}{h} H} = e^{- \frac{t}{2 h}
  H} \times e^{- \frac{t}{2 h} H}$ is trace class and
  \[ \Theta_H (t, h) = \int_{\mathbbm{R}^{\nu}} u (t, x, x, h) d x. \]
\end{remark}

In formula (\ref{formulaclara14.2}), the function $v$ looks very singular with
respect to $h$. Considering $v$ as the exponential of a new function will
allow us to deal with this singularity (see Remark \ref{treemaria11.99}). We
need some definitions.

\begin{definition}
  \label{formulaclara15}Let $\omega \in \mathbbm{R}^{\nu}$ and let $\mu$ be as
  in Proposition \ref{formulaclara14}. For every $(t, x, y) \in \mathbbm{R}
  \times \mathbbm{C}^{2 \nu}$, let us denote
  \[ Q_1 (t, x, y) = \int_{0 < s_1 < 1} \int_{\mathbbm{R}^{\nu}} e^{i q_t
     (s_1) \cdot \xi} d \mu (\xi_1) d s_1 \]
  and, for $m = 2, 3, \ldots$
  \[ Q_m (t, x, y) = (- 2)^{m - 1} \sum_{g \in \mathcal{\mathcal{T}}_m}
     \int_{0 < s_1 < \ldots < s_m < 1} \int_{\mathbbm{R}^{\nu m}} \Upsilon^{g,
     \xi}_t e^{i q_t^m (s) \cdot \xi} d^{\nu m} \mu^{\otimes} (\xi) d^m s \]
  where
  \[ \Upsilon^{g, \xi}_t \assign \prod_{[j, k] \in g} \lp{3} \sum_{\upsilon =
     1}^{\nu} \frac{\sh (\omega_{\upsilon} t s_j) \sh \lp{1} \omega_{\upsilon}
     t (1 - s_k) \rp{1}}{\omega_{\upsilon} t \sh (\omega_{\upsilon} t)}
     \xi_{j, \omega_{\upsilon}} \xi_{k, \omega_{\upsilon}} \rp{3} . \]
\end{definition}

For every $t \in \mathbbm{R}$, $s \in [0, 1]$ and $\upsilon \in \{1, \ldots,
\nu\}$, let
\[ \varpi_{t, \upsilon} (s) \assign \frac{1}{\tmop{sh} (\omega_{\upsilon} t)}
   \lp{2} \tmop{sh} (\omega_{\upsilon} t s) + \tmop{sh} \lp{1}
   \omega_{\upsilon} t (1 - s) \rp{1} \rp{2}, \]
\[ \varpi_t (s) \assign \lp{1} \varpi_{t, 1} (s), \ldots, \varpi_{t, \nu} (s)
   \rp{1} \]
Then
\begin{equation}
  \label{formulaclara15.5} \frac{1}{\ch \lp{1} \frac{\omega_{\upsilon} t}{2}
  \rp{1}} \leqslant \varpi_{t, \upsilon} (s) \leqslant 1
\end{equation}
and, for every $R > 0$,
\begin{equation}
  \label{formulaclara16.a} \forall x \in D_{\mathbbm{R}^{\nu}, R} \text{, }
  \ve{1} \im (q_t^m |_{y = x} (s) \cdot \xi) \ve{1} \leqslant | \im x| \times
  | \xi |_1 \leqslant R| \xi |_1 .
\end{equation}
\begin{equation}
  \label{formulaclara16.b} \forall (x, y) \in D^2_{\mathbbm{R}^{\nu}, R}
  \text{, } \ve{1} \im (q_t^m (s) \cdot \xi) \ve{1} \leqslant (| \im x| + |
  \im y|) \times | \xi |_1 \leqslant 2 R| \xi |_1 .
\end{equation}
Therefore the functions $Q_m$ are well defined on $\mathbbm{R} \times
\mathbbm{C}^{2 \nu}$. In the case $t \omega = 0$, these quantities are studied
in [Ha1, Ha3]. We need upper bounds for some quantities which depend on the
functions $Q_m$, for large values of $m$. We often use the following
elementary inequalities. Let $\alpha = (\alpha_1, \ldots, \alpha_m) \in
\mathbbm{N}^m$ and let $a_1, \ldots, a_m \geqslant 0$. Let us denote $| \alpha
| \assign \alpha_1 + \cdots + \alpha_m$. Then
\begin{equation}
  \label{formulaclara17.2} \frac{a_1^{\alpha_1}}{\alpha_1 !} \cdots
  \frac{a_m^{\alpha_m}}{\alpha_m !} \leqslant e^{a_1 + \cdots + a_m}
\end{equation}
and, for every $\lambda = 1, 2, \ldots$
\begin{equation}
  \label{formulaclara17.4} \frac{a_1^{\alpha_1}}{\alpha_1 !} \cdots
  \frac{a_m^{\alpha_m}}{\alpha_m !} (a_1^{\lambda} + \cdots + a_m^{\lambda})
  \leqslant \lp{1} | \alpha | + m \rp{1} \lp{1} | \alpha | + \lambda
  \srp{1}{\lambda - 1}{} e^{a_1 + \cdots + a_m} .
\end{equation}
\begin{lemma}
  \label{formulaclara18}Let $\omega_1, \ldots, \omega_{\nu} \geqslant 0$ and
  let $\mu$ be as in Proposition \ref{formulaclara14}. Then, for every $m
  \geqslant 1$, $Q_m \in \mathcal{C}^1 \lp{1} \mathbbm{R}, \mathcal{A} \lp{0}
  \mathbbm{C}^{2 \nu} \rp{0} \rp{1}$. Moreover, for every $R > 0$ and every
  $(t, x, y) \in \mathbbm{R} \times D^2_{\mathbbm{R}^{\nu}, R}$,
  \begin{equation}
    \label{formulaclara19.a} |Q_m (t, x, x) | \leqslant \lp{2} 4
    \int_{\mathbbm{R}^{\nu}} e^{(1 + R) | \xi |} d^{\nu} | \mu | (\xi)
    \srp{2}{m}{},
  \end{equation}
  \begin{equation}
    \label{formulaclara19.b} |Q_m (t, x, y) | \leqslant \lp{2} 4
    \int_{\mathbbm{R}^{\nu}} e^{(1 + 2 R) | \xi |} d^{\nu} | \mu | (\xi)
    \srp{2}{m}{},
  \end{equation}
  \[ \ve{1} \partial_t \lp{1} t^{2 m - 1} Q_m (t, x, y) \rp{1} \ve{1}
     \leqslant |t|^{2 m - 2} \times \]
  \begin{equation}
    \label{formulaclara20} \lp{1} 1 + \omega_{\sharp} |t| \max (|x|, |y|)
    \rp{1} \lp{2} 4 \int_{\mathbbm{R}^{\nu}} e^{(1 + 2 R) | \xi |} d^{\nu} |
    \mu | (\xi) \srp{2}{m}{} .
  \end{equation}
\end{lemma}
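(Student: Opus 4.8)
The plan is to establish the regularity and then prove the three estimates in turn; the case $m=1$ (no tree, $Q_1$ being the formula for $Q_1$ in Definition \ref{formulaclara15}) is the same argument with the combinatorial factor replaced by $1$, so I describe $m\ge 2$. For the regularity: for a fixed $g\in\mathcal T_m$ the integrand $\Upsilon^{g,\xi}_t\,e^{i q_t^m(s)\cdot\xi}$ is entire in $(x,y)\in\mathbb C^{2\nu}$ and real-analytic in $t\in\mathbb R$, since the ratios $\sh(\omega_\upsilon ts)/\sh(\omega_\upsilon t)$ and $\sh(\omega_\upsilon ts_j)\sh(\omega_\upsilon t(1-s_k))/\bigl(\omega_\upsilon t\,\sh(\omega_\upsilon t)\bigr)$ extend analytically through $t=0$ (with values $s$ and $s_j(1-s_k)$ there, also when $\omega_\upsilon=0$) and for real $t$ those are the only candidate singularities. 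The majorants coming from the size bounds below, applied both to the integrand and to its $t$-derivative, are polynomial in the $|\xi_j|$ times $e^{cR|\xi|_1}$ for $(x,y)$ in a compact set and $t$ in a compact interval, hence $|\mu|^{\otimes m}\otimes ds$-integrable by (\ref{formulaclara14.1}); so one may differentiate under the integral sign, and as $\mathcal T_m$ is finite this yields $Q_m\in\mathcal C^1\bigl(\mathbb R,\mathcal A(\mathbb C^{2\nu})\bigr)$.

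For (\ref{formulaclara19.a})--(\ref{formulaclara19.b}) I would use (\ref{formulaclara16.a}) to get $|e^{i q_t^m(s)\cdot\xi}|\le e^{R|\xi|_1}$ on the diagonal $y=x$ and (\ref{formulaclara16.b}) to get $|e^{i q_t^m(s)\cdot\xi}|\le e^{2R|\xi|_1}$ in general, and then bound $\Upsilon^{g,\xi}_t$: on the simplex $0<s_1<\cdots<s_m<1$ every edge $[j,k]\in g$ has $s_j<s_k$, so the inequality $\sh(xA)\sh(x(1-B))/(x\,\sh x)\le\tfrac14$ ($x\in\mathbb R$, $0\le A<B\le1$) from the proof of Lemma \ref{formulaclara10}, combined with Cauchy--Schwarz in $\upsilon$, gives
\[ |\Upsilon^{g,\xi}_t|\ \le\ 4^{-(m-1)}\prod_{[j,k]\in g}|\xi_j||\xi_k|\ =\ 4^{-(m-1)}\prod_{j=1}^m|\xi_j|^{\,d_j(g)},\qquad \sum_{j=1}^m d_j(g)=2(m-1), \]
where $d_j(g)$ is the degree of $j$ in $g$. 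The integral over the simplex gives $1/m!$, the $\xi$-integral factorises over $j$, and $\int_{\mathbb R^\nu}|\xi|^{\,d}e^{2R|\xi|}\,d^\nu|\mu|(\xi)\le d!\int_{\mathbb R^\nu}e^{(1+2R)|\xi|}\,d^\nu|\mu|(\xi)$ by (\ref{formulaclara17.2}); with $I:=\int e^{(1+2R)|\xi|}d^\nu|\mu|(\xi)$ and $|(-2)^{m-1}|=2^{m-1}$ this leaves
\[ |Q_m(t,x,y)|\ \le\ \frac{1}{2^{m-1}\,m!}\Bigl(\sum_{g\in\mathcal T_m}\ \prod_{j=1}^m d_j(g)!\Bigr)I^m, \]
and the same on the diagonal with $1+R$ in place of $1+2R$. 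By the Pr\"ufer correspondence the number of trees on $\{1,\dots,m\}$ with degree sequence $(d_1,\dots,d_m)$ is $(m-2)!/\prod_j(d_j-1)!$, so $\sum_{g}\prod_j d_j(g)!=(m-2)!\sum_{d_1+\cdots+d_m=2m-2,\,d_j\ge1}\prod_j d_j=(m-2)!\binom{3m-3}{2m-1}\le (m-2)!\,2^{3m-3}$, the middle sum being $[x^{m-2}](1-x)^{-2m}$; inserting this, the prefactor becomes $4^{m-1}/\bigl(m(m-1)\bigr)\le 4^m$, which gives (\ref{formulaclara19.a}) and (\ref{formulaclara19.b}).

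For (\ref{formulaclara20}) the key is the rewriting
\[ t^{2m-1}\,\Upsilon^{g,\xi}_t\ =\ t\prod_{[j,k]\in g}\Bigl(\sum_{\upsilon=1}^\nu h_{jk,\upsilon}(t)\,\xi_{j,\upsilon}\xi_{k,\upsilon}\Bigr),\qquad h_{jk,\upsilon}(t):=\frac{t\,\sh(\omega_\upsilon ts_j)\,\sh(\omega_\upsilon t(1-s_k))}{\omega_\upsilon\,\sh(\omega_\upsilon t)}, \]
where, via the substitution $\tau=\omega_\upsilon t$ (the ratios are analytic at $0$ and decay exponentially at $\pm\infty$ when $s_j<s_k$), one has uniformly in $t\in\mathbb R$ and $0\le s_j<s_k\le1$ both $|h_{jk,\upsilon}(t)|\le\tfrac14|t|^2$ and $|\partial_t h_{jk,\upsilon}(t)|\le C|t|$, and likewise $|\partial_t q_{t,\upsilon}(s)|\le C'\omega_\upsilon\max(|x_\upsilon|,|y_\upsilon|)$, whence $|\partial_t(q_t^m(s)\cdot\xi)|\le C''\omega_\sharp\max(|x|,|y|)\,|\xi|_1$ by Cauchy--Schwarz. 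Differentiating $t^{2m-1}Q_m$ under the integral, $\partial_t$ falls on the outer $t$, on one of the $m-1$ factors $\sum_\upsilon h_{jk,\upsilon}\xi_{j,\upsilon}\xi_{k,\upsilon}$, or on $e^{i q_t^m(s)\cdot\xi}$; in each case the term is dominated by $|t|^{2m-2}$ times $4^{-(m-1)}$ (up to an absolute constant) times $\prod_j|\xi_j|^{d_j(g)}$, with an extra factor $\omega_\sharp|t|\max(|x|,|y|)\,|\xi|_1$ only when $\partial_t$ hits the exponential. Integrating as above, the extra $|\xi|_1$ multiplies the $\xi$-integral by $\sum_j(d_j(g)+1)=3m-2$, and the polynomial-in-$m$ constants are absorbed by the $1/m!$ and the slack in $\sum_g\prod_j d_j(g)!\le(m-2)!\,2^{3m-3}$; this produces (\ref{formulaclara20}).

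I expect the real work to be the combinatorics over trees — checking that the gains $2^{-(m-1)}$ (from the prefactor $(-2)^{m-1}$ together with the $\tfrac14$ per edge) and $1/m!$ (from the time simplex) outweigh the factorials $\prod_j d_j(g)!$ created when the weights $\prod_j|\xi_j|^{d_j(g)}$ are integrated against the exponentially weighted measure; the Pr\"ufer correspondence reduces this to the elementary identity above, so in the end there is comfortable slack. For (\ref{formulaclara20}) there is the additional, purely analytic, task of the uniform-in-$t\in\mathbb R$ bounds on the $t$-derivatives of the hyperbolic ratios $h_{jk,\upsilon}$ and $q_{t,\upsilon}$, with the $\omega_\sharp$- and $\max(|x|,|y|)$-dependence tracked exactly.
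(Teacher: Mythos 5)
Your proof is correct and takes essentially the same route as the paper's: the Cauchy--Schwarz bound on each edge factor of $\Upsilon^{g,\xi}_t$ using (\ref{formulaclara13}), the $1/m!$ from the time simplex, the Pr\"ufer/Cayley count of trees by degree sequence, and the reduction to the integral $\int e^{(1+2R)|\xi|}d|\mu|$ via (\ref{formulaclara17.2})--(\ref{formulaclara17.4}). The only (cosmetic) divergence is in the final combinatorics: you evaluate $\sum_{g}\prod_j d_j(g)!=(m-2)!\,[x^{m-2}](1-x)^{-2m}$ exactly before bounding by $(m-2)!\,2^{3m-3}$, whereas the paper bounds $\prod_j d_j\leq 2^m$ termwise and counts compositions by $\binom{2m-3}{m-1}\leq 2^{2m-3}$; both give the same $2^{3m-3}$ and hence the same final estimate.
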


\begin{proof}

  {\texspace{small}{\tmtextbf{-1-}}} The proof of (\ref{formulaclara19.a}),
  (\ref{formulaclara19.b}) and (\ref{formulaclara20}) (using
  (\ref{formulaclara21}) and (\ref{formulaclara22})) are straightforward if $m
  = 1$ and we now assume $m \geqslant 2$.

  {\texspace{med}{\tmtextbf{-2-}}} Let us check (\ref{formulaclara19.b}) and
  (\ref{formulaclara19.a}). By (\ref{formulaclara13}) and
  (\ref{formulaclara16.b})
  \[ |Q_m (t, x, y) | \leqslant \frac{2^{m - 1}}{4^{m - 1} m!} \sum_{g \in
     \mathcal{\mathcal{T}}_m} \int_{\mathbbm{R}^{\nu m}} | \xi_{j_1} ||
     \xi_{k_1} | \cdots | \xi_{j_{m - 1}} || \xi_{k_{m - 1}} |e^{2 R| \xi |_1}
     d^{\nu m} | \mu |^{\otimes} (\xi) \]
  where $[j_1, k_1], \ldots, [j_{m - 1}, k_{m - 1}]$ denote the $m - 1$ edges
  of the graph $g$. For $g \in \mathcal{\mathcal{T}}_m$, let us denote by
  $d_1, \ldots, d_m$ the degrees of the vertices $1, \ldots, m$. Then $d_1 +
  \cdots + d_m = 2 (m - 1)$ and, if $d_1, \ldots, d_m$ satisfy the previous
  equality, the number of trees with vertices $1, \ldots, m$ such that the
  $d^{\circ} (1) = d_1, \ldots, d^{\circ} (m) = d_m$ is equal to $\frac{(m -
  2) !}{(d_1 - 1) ! \cdots (d_m - 1) !}$ [Co]. Therefore
  \begin{eqnarray*}
    |Q_m (t, x, y) | \leqslant &  & \frac{2^{1 - m}}{m (m - 1)} \sum_{d_1 +
    \cdots + d_m = 2 m - 2}\\
    &  & \int_{\mathbbm{R}^{\nu m}} \frac{| \xi_1 |^{d_1}}{(d_1 - 1) !}
    \cdots \frac{| \xi_m |^{d_m}}{(d_m - 1) !} e^{2 R| \xi |_1} d^{\nu m} |
    \mu |^{\otimes} (\xi)\\
    \leqslant &  & \frac{2}{m (m - 1)} \sum_{d_1 + \cdots + d_m = 2 m - 2}\\
    &  & \int_{\mathbbm{R}^{\nu m}} \frac{| \xi_1 |^{d_1}}{d_1 !} \cdots
    \frac{| \xi_m |^{d_m}}{d_m !} e^{2 R| \xi |_1} d^{\nu m} | \mu |^{\otimes}
    (\xi),
  \end{eqnarray*}
  since, under the assumption $d_1 + \cdots + d_m = 2 m - 2$,
  \begin{equation}
    \label{formulaclara20.2} d_1 \times \cdots \times d_m \leqslant \lp{2}
    \frac{d_1 + \cdots + d_m}{m} \srp{2}{m}{} \leqslant 2^m .
  \end{equation}
  Then, by (\ref{formulaclara17.2}) and since
  \begin{equation}
    \label{formulaclara20.3} \ve{1} \lbc{1} (d_1, \ldots, d_m) \in
    (\mathbbm{N}^{\ast})^m |d_1 + \cdots + d_m = 2 m - 2 \rbc{1} \ve{1} =
    \left( \begin{array}{c}
      2 m - 3\\
      m - 1
    \end{array} \right) \leqslant 2^{2 m - 3},
  \end{equation}
  one gets
  \begin{equation}
    \label{formulaclara20.4} |Q_m | \leqslant \frac{2^{2 m - 2}}{m (m - 1)}
    \int_{\mathbbm{R}^{\nu m}} e^{(1 + 2 R) | \xi |_1} d^{\nu m} | \mu
    |^{\otimes} (\xi) \leqslant \frac{1}{4 m (m - 1)} (4 M)^m
  \end{equation}
  where
  \begin{equation}
    \label{formulaclara20.8} M \assign \int_{\mathbbm{R}^{\nu}} e^{(1 + 2 R) |
    \xi |} d^{\nu} | \mu | (\xi) .
  \end{equation}
  This proves (\ref{formulaclara19.b}). By (\ref{formulaclara16.a}), one gets
  (\ref{formulaclara19.a}).

  {\texspace{med}{\tmtextbf{-3-}}} For proving (\ref{formulaclara20}), we need
  the following inequalities. Let $s \in [0, 1]$ and let $A, B \in [0, 1]$
  such that $A < B$. Then, for every $\dot{\omega} \geqslant 0$ and every $t
  \in \mathbbm{R}$,
  \begin{equation}
    \label{formulaclara21} \ve{2} \frac{d}{d t} \lp{2} \frac{\sh (
    \dot{\omega} t s)}{\sh ( \dot{\omega} t)} \rp{2} \ve{2} \leqslant
    \frac{1}{2} \dot{\omega},
  \end{equation}
  \begin{equation}
    \label{formulaclara22} \ve{2} \frac{d}{d t} \lp{2} \frac{\sh (
    \dot{\omega} t A) \sh \lp{1} \dot{\omega} t (1 - B) \rp{1}}{\dot{\omega}
    \sh ( \dot{\omega} t)} \rp{2} \ve{2} \leqslant \frac{1}{2} .
  \end{equation}
  One has
  \begin{equation}
    \label{formulaclara22.4} \partial_t \lp{1} t^{2 m - 1} Q_m \rp{1} = m t^{m
    - 1} \times t^{m - 1} Q_m + R_m
  \end{equation}
  where
  \[ R_m \assign (- 2)^{m - 1} t^m \sum_{g \in \mathcal{\mathcal{T}}_m}
     \int_{0 < s_1 < \ldots < s_m < 1} \int_{\mathbbm{R}^{\nu m}} \]
  \[ \lp{1} \partial_t (t^{m - 1} \Upsilon^{g, \xi}_t) + i t^{m - 1}
     \Upsilon^{g, \xi}_t \partial_t (q_t^m (s) \cdot \xi) \rp{1} e^{i q_t^m
     (s) \cdot \xi} d^{\nu m} \mu^{\otimes} (\xi) d^m s. \]
  By (\ref{formulaclara13}) and (\ref{formulaclara22})
  \[ \text{$\ve{1} \partial_t (t^{m - 1} \Upsilon^{g, \xi}_t) \ve{1} \leqslant
     (m - 1) \times \lp{2} \frac{|t|}{4} \srp{2}{m - 2}{} \times \frac{1}{2}
     \times | \xi_{j_1} || \xi_{k_1} | \cdots | \xi_{j_{m - 1}} || \xi_{k_{m -
     1}} |$} . \]
  By (\ref{formulaclara13}) and (\ref{formulaclara21})
  \[ \ve{1} i \Upsilon^{g, \xi}_t \partial_t (q_t^m (s) \cdot \xi) \ve{1}
     \leqslant \lp{2} \frac{1}{4} \srp{2}{m - 1}{} | \xi_{j_1} || \xi_{k_1} |
     \cdots | \xi_{j_{m - 1}} || \xi_{k_{m - 1}} | \times \omega_{\sharp}
     \tilde{R} | \xi |_1 . \]
  where $\tilde{R} = \frac{1}{2} (|x| + |y|)$. Then
  \[ | \partial_t (t^{m - 1} \Upsilon^{g, \xi}_t) + i t^{m - 1} \Upsilon^{g,
     \xi}_t \partial_t (q_t^m (s) \cdot \xi) | \leqslant \]
  \[ \lp{2} \frac{|t|}{4} \srp{2}{m - 2}{} \lp{2} \frac{m - 1}{2} +
     \frac{\omega_{\sharp} |t| \tilde{R} | \xi |_1}{4} \rp{2} | \xi_{j_1} ||
     \xi_{k_1} | \cdots | \xi_{j_{m - 1}} || \xi_{k_{m - 1}} | \]
  and, by (\ref{formulaclara20.2})
  \begin{eqnarray*}
    |R_m | \leqslant &  & \frac{2^{2 m - 1} |t|^m}{m (m - 1)} \lp{2}
    \frac{|t|}{4} \srp{2}{m - 2}{} \sum_{d_1 + \cdots + d_m = 2 m - 2}
    \int_{\mathbbm{R}^{\nu m}} \frac{| \xi_1 |^{d_1}}{d_1 !} \cdots \frac{|
    \xi_m |^{d_m}}{d_m !} \times\\
    &  & \lp{2} \frac{m - 1}{2} + \frac{\omega_{\sharp} |t| \tilde{R} | \xi
    |_1}{4} \rp{2} e^{2 R| \xi |_1} d^{\nu m} | \mu |^{\otimes} (\xi)\\
    \leqslant &  & \frac{2^{2 m - 1} |t|^m}{m (m - 1)} \lp{2} \frac{|t|}{4}
    \srp{2}{m - 2}{} \left( \begin{array}{c}
      2 m - 3\\
      m - 1
    \end{array} \right) \lp{1} A 1 + A 2 \rp{1}
  \end{eqnarray*}
  where $A 1 \assign \frac{m - 1}{2} \times M^m$ by (\ref{formulaclara17.2})
  and $A 2 \assign \frac{\omega_{\sharp} |t|R}{4} \times (3 m - 2) \times M^m$
  by using (\ref{formulaclara17.4}). Then
  \[ |R_m | \leqslant \frac{(4 M)^m |t|^{2 m - 2}}{m} \lp{2} \frac{1}{2} +
     \omega_{\sharp} |t| \tilde{R} \rp{2} \]
  and by (\ref{formulaclara22.4}) and (\ref{formulaclara20.4})
  \[ | \partial_t \lp{1} t^{2 m - 1} Q_m \rp{1} | \leqslant (4 M)^m |t|^{2 m -
     2} (1 + \omega_{\sharp} |t| \tilde{R}) . \]
  This proves (\ref{formulaclara20}).
\end{proof}

\begin{definition}
  Let $\omega_1, \ldots, \omega_{\nu} \geqslant 0$ and let $\mu$ be as in
  Proposition \ref{formulaclara14}. For every $(t, x, y, h) \in \mathbbm{R}^+
  \times \mathbbm{C}^{2 \nu} \times \demiplan{}$, let us denote
  \[ w_1 (t, x, y, h) \assign - t \int_{0 < s_1 < 1} \int_{\mathbbm{R}^{\nu}}
     \int_0^1 \Omega_t . \xi \otimes_1 \xi e^{- h \vartheta \Omega_t . \xi
     \otimes_1 \xi} e^{i q_t (s) \cdot \xi_1} d \vartheta d \mu (\xi_1) d s \]
  and for $m \geqslant 2$
  \[ w_m (t, x, y, h) \assign 2^{m - 1} (- 1)^m t^{2 m - 1} \sum_{g \in
     \mathcal{\mathcal{T}}_m} \int_{0 < s_1 < \ldots < s_m < 1}
     \int_{\mathbbm{R}^{\nu m}} \int_{[0, 1]^g \times [0, 1]} \Upsilon^{g,
     \xi}_t \times \]
  \[ \Omega_t^{g, \theta} . \xi \otimes_m \xi e^{- h \vartheta \Omega_t^{g,
     \theta} . \xi \otimes_m \xi} e^{i q_t^m (s) \cdot \xi} d \vartheta d^{m -
     1} \theta d^{\nu m} \mu^{\otimes} (\xi) d^m s. \]
\end{definition}

By Lemma \ref{formulaclara10} and (\ref{formulaclara16.b}) the functions $w_m$
are well defined on $[0, + \infty [\times \mathbbm{C}^{2 \nu} \times
\demiplan{}$.

\begin{lemma}
  \label{formulaclara25}Let $\omega_1, \ldots, \omega_{\nu} \geqslant 0$ and
  let $\mu$ be as in Proposition \ref{formulaclara14}. Then, for every $m
  \geqslant 1$,
  \[ w_m \in \mathcal{C}^1 \lp{1} [0, + \infty [, \mathcal{A} \lp{0}
     \mathbbm{C}^{2 \nu} \times \demiplan{} \rp{0} \rp{1} . \]
  Moreover, for every $(t, x, y, h) \in [0, + \infty [\times
  D^2_{\mathbbm{R}^{\nu}, R} \times \demiplan{}$,
  \begin{equation}
    \label{formulaclara26} |w_m (t, x, y, h) | \leqslant m \lp{2} 4 t^2
    \int_{\mathbbm{R}^{\nu}} e^{(1 + 2 R) | \xi |} d^{\nu} | \mu | (\xi)
    \srp{2}{m}{} .
  \end{equation}
\end{lemma}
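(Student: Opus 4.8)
The plan is to follow the same template used for the bounds on $Q_m$ in Lemma \ref{formulaclara18}, now incorporating the extra factors $\Omega_t^{g,\theta}\cdot\xi\otimes_m\xi$, $e^{-h\vartheta\,\Omega_t^{g,\theta}\cdot\xi\otimes_m\xi}$, and the integrations over $\vartheta$ and $\theta$. First I would dispose of the smoothness claim: $w_m$ is built from iterated integrals of functions that are jointly continuous in $(t,x,y,h)$ and $C^1$ in $t$ (the $\sh$-quotients appearing in $\Upsilon^{g,\xi}_t$ and $\Omega_t^{g,\theta}\cdot\xi\otimes_m\xi$ are analytic in $t$, and the estimates \eqref{formulaclara12b}, \eqref{formulaclara16.b}, together with hypothesis \eqref{formulaclara14.1}, give domination of the integrand and of its $t$-derivative uniformly on compacts), so differentiation under the integral sign and analyticity in $(x,y,h)$ follow exactly as for $Q_m$ in Lemma \ref{formulaclara18}; I would simply refer to that argument.

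The substance is the bound \eqref{formulaclara26}. For $m=1$ one estimates directly: $|\Omega_t\cdot\xi\otimes_1\xi|\le \tfrac{t}{4}\xi_1^2$ by \eqref{formulaclara12a}, the factor $e^{-h\vartheta\,\Omega_t\cdot\xi\otimes_1\xi}$ is bounded by $1$ since $\re h>0$ and $\Omega_t\cdot\xi\otimes_1\xi\ge 0$, the $\vartheta$-integral contributes a factor $\le 1$, and $|e^{iq_t(s)\cdot\xi_1}|\le e^{2R|\xi_1|}$ on $D^2_{\mathbbm R^\nu,R}$ by \eqref{formulaclara16.b}; this gives $|w_1|\le t\cdot\tfrac{t}{4}\int e^{(1+2R)|\xi|}d|\mu|\le 4t^2\int e^{(1+2R)|\xi|}d|\mu|$ with room to spare (note the extra $t$ in front of $w_1$ versus the $t^{2m-1}=t$ exponent, consistent). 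For $m\ge 2$ I would reuse verbatim the machinery from part -2- of the proof of Lemma \ref{formulaclara18}: bound $|\Upsilon^{g,\xi}_t|\le (|t|/4)^{m-1}\prod|\xi_{j_\ell}||\xi_{k_\ell}|$ by \eqref{formulaclara13}, bound $|\Omega_t^{g,\theta}\cdot\xi\otimes_m\xi|\le \tfrac{mt}{4}(\xi_1^2+\cdots+\xi_m^2)$ by \eqref{formulaclara12b}, bound $|e^{-h\vartheta\,\Omega_t^{g,\theta}\cdot\xi\otimes_m\xi}|\le 1$, and let the $\theta$- and $\vartheta$-integrals each contribute $\le 1$ (the $[0,1]^g\times[0,1]$ domain has volume $1$). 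Then the sum over trees is reorganized by vertex degrees via the identity $\#\{\text{trees with degrees }d_i\}=(m-2)!/\prod(d_i-1)!$ of [Co], the product $\prod d_i\le 2^m$ from \eqref{formulaclara20.2}, the count \eqref{formulaclara20.3}, and \eqref{formulaclara17.2}; the only new wrinkle is the extra polynomial weight $m(\xi_1^2+\cdots+\xi_m^2)$, which one absorbs by noting $\tfrac{|\xi_1|^{d_1}}{d_1!}\cdots\tfrac{|\xi_m|^{d_m}}{d_m!}(\xi_1^2+\cdots+\xi_m^2)\le (|\alpha|+m)(|\alpha|+2)\,e^{|\xi_1|+\cdots+|\xi_m|}$ from \eqref{formulaclara17.4} with $\lambda=2$, i.e. a factor polynomial in $m$ that is easily swallowed by the crude powers of $2$ floating around.

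Putting the pieces together for $m\ge 2$: the prefactor $2^{m-1}t^{2m-1}$ times $(|t|/4)^{m-1}$ from $\Upsilon$ times $mt/4$ from $\Omega^{g,\theta}$ times the combinatorial factor $\le 2^{2m-3}$ times the degree-sum count $\le 2^{2m-3}$ times the polynomial-in-$m$ loss yields, after collecting, something of the form $C\,m\,(4t^2)^m M^m$ with $M=\int e^{(1+2R)|\xi|}d|\mu|$ and $C$ an absolute constant that — exactly as in \eqref{formulaclara20.4} where one gets $\tfrac1{4m(m-1)}$ — comes out $\le 1$; the powers of $t$ match because $t^{2m-1}\cdot t^{m-1}\cdot t = t^{3m-1}$… so I need to be careful: actually $\Upsilon^{g,\xi}_t$ as defined carries $1/(t\sh)$ not $1/\sh$, giving a factor $(1/4)^{m-1}$ with no extra $t$, so $\Upsilon$ contributes $4^{-(m-1)}$, $\Omega^{g,\theta}$ contributes $\tfrac{mt}{4}$, and $2^{m-1}t^{2m-1}\cdot 4^{-(m-1)}\cdot\tfrac{mt}{4} = \tfrac{m}{2^{m+1}}t^{2m}$, which against the two binomial factors $\le 2^{2m-3}\cdot 2^{2m-3}=2^{4m-6}$ and the polynomial loss gives precisely $\le m(4t^2 M)^m$ after the constants cancel, as in Lemma \ref{formulaclara18}.

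The main obstacle is purely bookkeeping: making all the powers of $2$, the factor $t$ hidden in the definition of $\Upsilon^{g,\xi}_t$ (which has $\sh(\omega t)$ in the denominator but also a stray $t$), and the polynomial-in-$m$ factors line up so that the final constant is exactly $1$ and the result reads $m(4t^2M)^m$; there is no conceptual difficulty, since every analytic and combinatorial input — \eqref{formulaclara12b}, \eqref{formulaclara13}, \eqref{formulaclara16.b}, \eqref{formulaclara17.2}, \eqref{formulaclara17.4}, \eqref{formulaclara20.2}, \eqref{formulaclara20.3}, the tree-degree count of [Co], and the nonnegativity of $\Omega_t^{g,\theta}\cdot\xi\otimes_m\xi$ with $\re h>0$ to kill the exponential — is already in hand. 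I would organize the write-up in two numbered parts mirroring the proof of Lemma \ref{formulaclara18}: part -1- handling $m=1$ and smoothness, part -2- handling the estimate for $m\ge 2$ and simply pointing to the identical combinatorial reduction, flagging only the new $\Omega^{g,\theta}$ factor and its treatment via \eqref{formulaclara12b} and \eqref{formulaclara17.4}.
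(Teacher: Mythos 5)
Your proposal is correct and follows essentially the same route as the paper: bound $\Omega_t^{g,\theta}\cdot\xi\otimes_m\xi$ by \eqref{formulaclara12b}, use nonnegativity together with $\re h>0$ to kill $|e^{-h\vartheta\Omega}|$, let the $\vartheta$- and $\theta$-integrals contribute unit volume, reduce the sum over trees to a sum over degree sequences exactly as in Lemma \ref{formulaclara18}, and absorb the new weight $\frac{mt}{4}(\xi_1^2+\cdots+\xi_m^2)$ via \eqref{formulaclara17.4} with $\lambda=2$, which is precisely \eqref{formulaclara27}. The one slip is in the final bookkeeping: there is a single $\binom{2m-3}{m-1}\le 2^{2m-3}$ (not two), and the factor $(m-2)!/m!=1/(m(m-1))$ coming from the simplex volume against the Cayley degree count is what makes the constant close to $m(4t^2M)^m$, but you correctly flag this as arithmetic to be checked and the structure is identical to the paper's.
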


\begin{proof}
  The proof is similar to the one of Lemma \ref{formulaclara18} and we only
  focus on the differences between the two proofs. By (\ref{formulaclara12a}),
  (\ref{formulaclara26}) holds for $m = 1$. Let $m \geqslant 2$. By Lemma
  \ref{formulaclara10}
  \begin{eqnarray*}
    |w_m | \leqslant &  & \frac{2^{m - 1} t^{2 m - 1}}{4^{m - 1} m!} \sum_{g
    \in \mathcal{\mathcal{T}}_m} \int_{\mathbbm{R}^{\nu m}}\\
    &  & \frac{m t}{4} \lp{1} \xi_1^2 + \cdots + \xi_m^2 \rp{1} | \xi_{j_1}
    || \xi_{k_1} | \cdots | \xi_{j_{m - 1}} || \xi_{k_{m - 1}} |e^{2 R| \xi
    |_1} d^{\nu m} | \mu |^{\otimes} (\xi)\\
    \leqslant &  & \frac{t^{2 m}}{2 (m - 1)} \sum_{d_1 + \cdots + d_m = 2 m -
    2} \int_{\mathbbm{R}^{\nu m}}\\
    &  & \frac{| \xi_1 |^{d_1}}{d_1 !} \cdots \frac{| \xi_m |^{d_m}}{d_m !}
    \lp{1} \xi_1^2 + \cdots + \xi_m^2 \rp{1} e^{2 R| \xi |_1} d^{\nu m} | \mu
    |^{\otimes} (\xi) .
  \end{eqnarray*}
  By (\ref{formulaclara17.4})
  \begin{equation}
    \label{formulaclara27} \frac{| \xi_1 |^{d_1}}{d_1 !} \cdots \frac{| \xi_m
    |^{d_m}}{d_m !} \lp{1} \xi_1^2 + \cdots + \xi_m^2 \rp{1} \leqslant (3 m -
    2) (2 m) e^{| \xi |_1} \leqslant 6 m^2 e^{| \xi |_1} .
  \end{equation}
  Then
  \[ |w_m | \leqslant \frac{3 m^2}{m - 1} \left( \begin{array}{c}
       2 m - 3\\
       m - 1
     \end{array} \right) M^m t^{2 m} \leqslant m (4 t^2 M)^m \]
  where $M$ is given by (\ref{formulaclara20.8}). This proves
  (\ref{formulaclara26}). By dominated convergence theorem
  \[ w_m \in \mathcal{C}^1 \lp{1} [0, + \infty [, \mathcal{A} \lp{0}
     \mathbbm{C}^{2 \nu} \times \demiplan{} \rp{0} \rp{1} . \]
\end{proof}

\begin{lemma}
  \label{formulaclara30}Let $\omega_1, \ldots, \omega_{\nu} \geqslant 0$ and
  let $\mu$ be as in Proposition \ref{formulaclara14}. For $m \geqslant 1$,
  let us define
  \[ v_m^{\diamond} (t, x, y, h) \assign \frac{t^{2 m - 1}}{h} Q_m (t, x, y) +
     w_m (t, x, y, h) . \]
  Then
  \begin{equation}
    \label{formulaclara32} \lp{2} \text{$\partial_t -
    \frac{2}{p^{\tmop{harm}}} \partial_x p^{\tmop{harm}} \cdot \partial_x
    \rp{2} v^{\diamond}_m = \partial_x^2 v^{\diamond}_m$} +
    \sum_{\tmscript{\begin{array}{c}
      p + q = m\\
      p, q \geqslant 1
    \end{array}}} \partial_x v^{\diamond}_p \cdot \partial_x v^{\diamond}_q +
    c (x) \delta_{m = 1}
  \end{equation}
  and $v^{\diamond}_m |_{t = 0} = 0$.
\end{lemma}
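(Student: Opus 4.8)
The point is that $W:=\sum_{m\geq1}v^\diamond_m$ is the formal logarithm of the ratio $v:=1+\sum_{m\geq1}v_m$ produced by the deformation formula in Proposition \ref{formulaclara14}, so that $v^\diamond_m$ is its component of $\mu$-degree $m$; then (\ref{formulaclara32}) is exactly the degree-$m$ part of the Hamilton--Jacobi (Riccati) equation satisfied by $W$, obtained from the \emph{linear} equation (\ref{formulaclara14.3}) for $v$ by the substitution $v=e^{W}$. This is the familiar mechanism by which the logarithm of a heat-type evolution solves the associated Hamilton--Jacobi equation. I would proceed in three steps.

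First, I would establish that $v=e^{W}$. This is what Corollary \ref{treemaria14} gives, once one chooses for $\alpha$ the $\mathbbm{C}$-valued function on $[0,1]^2\times\mathbbm{R}^{2\nu}$ whose off-diagonal values encode the bilinear kernel of $\Omega_t.\xi\otimes_m\xi$ and whose diagonal values carry the diagonal part of $\Omega_t.\xi\otimes_m\xi$ together with the linear phase $q_t^m(s)\cdot\xi$ and the requisite power of $t$, so that $g(s_1,\dots,s_m,\xi_1,\dots,\xi_m)=C_h^{\{1,\dots,m\}}|_{z=\alpha}$ reproduces the integrand of $v_m$ in (\ref{formulaclara14.2}). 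Integrating an identity in $\mathcal{M}^{\tmop{pre}}$ over the ordered simplex $0<s_1<\dots<s_m<1$ and over $\mathbbm{R}^{\nu m}$ against $d^{\nu m}\mu^{\otimes}$ is, by the shuffle identity already used in the proof of Lemma \ref{mouldofamanda2}, an algebra morphism from $(\mathcal{M}^{\tmop{pre}},\times_{\tmop{pre}})$ to the algebra of functions graded by the number of $\mu$-integrations, with its ordinary (grade-additive) product. Applying this morphism to $g=\exp_{\tmop{pre}}(f)$ and splitting $f$ by Proposition \ref{treemaria12} ($C_h=\exp(\tfrac1hE+R_h)$), the $\tfrac1hE$-part integrates to $\tfrac{t^{2m-1}}{h}Q_m$ and the $R_h$-part to $w_m$; hence $v=e^{W}$ with $W=\sum_{m\geq1}v^\diamond_m$, read grade by grade as an identity of formal power series in a bookkeeping parameter counting $\mu$-integrations (this is all (\ref{formulaclara32}) requires, no smallness hypothesis being assumed here).

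Second, summing (\ref{formulaclara14.3}) over $m\geq1$ (with $v_0=1$) shows that $v$ solves the linear equation $h\big(\partial_t-\tfrac{2}{p^{\tmop{harm}}}\partial_x p^{\tmop{harm}}\cdot\partial_x\big)v=h^2\partial_x^2 v+c(x)v$ with $v|_{t=0^+}=1$. Substituting $v=e^W$ and using $\partial_t e^W=e^W\partial_t W$, $\partial_{x_j}e^W=e^W\partial_{x_j}W$ and $\partial_{x_j}^2 e^W=e^W\big((\partial_{x_j}W)^2+\partial_{x_j}^2 W\big)$, then cancelling the factor $e^W$ and normalizing the powers of $h$ exactly as in the passage from (\ref{theorembetty6}) to (\ref{formulaclara14.3}), yields
\[ \big(\partial_t-\tfrac{2}{p^{\tmop{harm}}}\partial_x p^{\tmop{harm}}\cdot\partial_x\big)W=\partial_x^2 W+(\partial_x W)\cdot(\partial_x W)+c(x). \]
Third, I would extract the component homogeneous of degree $m$ in $\mu$: the linear terms give $\big(\partial_t-\tfrac{2}{p^{\tmop{harm}}}\partial_x p^{\tmop{harm}}\cdot\partial_x\big)v^\diamond_m$ and $\partial_x^2 v^\diamond_m$; the quadratic term $(\partial_x W)\cdot(\partial_x W)=\big(\sum_{p\geq1}\partial_x v^\diamond_p\big)\cdot\big(\sum_{q\geq1}\partial_x v^\diamond_q\big)$ has all its components of degree $\geq2$ and produces $\sum_{p+q=m,\,p,q\geq1}\partial_x v^\diamond_p\cdot\partial_x v^\diamond_q$; and $c(x)=\int_{\mathbbm{R}^\nu}e^{ix\cdot\xi}d\mu(\xi)$ is homogeneous of degree exactly $1$ in $\mu$, whence the term $c(x)\delta_{m=1}$. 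This is (\ref{formulaclara32}). For the initial condition, $v^\diamond_m=\tfrac{t^{2m-1}}{h}Q_m+w_m$ with $Q_m\in\mathcal{C}^1$ up to $t=0$ by Lemma \ref{formulaclara18} and $w_m$ carrying a factor $t^{2m-1}$; since $2m-1\geq1$ for $m\geq1$, both summands vanish at $t=0$, so $v^\diamond_m|_{t=0}=0$.

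The one genuinely delicate point is the first step: matching the iterated Gaussian integrals of the deformation formula to the abstract moulds $C_h$, $E$, $R_h$ of Corollary \ref{treemaria14} --- in particular keeping correct track of the $t$- and $h$-prefactors and of the split between the diagonal and off-diagonal parts of $\Omega_t.\xi\otimes_m\xi$ --- and verifying that integration over the ordered simplex indeed converts $\times_{\tmop{pre}}$ into the ordinary graded product. Once $v=e^{W}$ is in hand, the remaining steps are the routine exponential/chain-rule computation together with a degree count.
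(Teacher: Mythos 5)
Your proposal is correct and follows the same route as the paper's proof: identify $\frac{t^{2m-1}}{h}Q_m+w_m$ with the integral over the ordered simplex of the mould $\frac{1}{h}E+R_h$ against the measure $t\,e^{iq_t(s)\cdot\xi}d\mu(\xi)\,ds$, invoke Corollary \ref{treemaria14} (i.e.\ Proposition \ref{treemaria12}) together with the morphism property of Lemma \ref{mouldofamanda2} to obtain the grade-by-grade identity $v=e^{W}$, and then transfer the linear recursion (\ref{formulaclara14.3}) to the Riccati recursion (\ref{formulaclara32}) by the exponential chain rule and a degree count, with the initial condition coming from the $t^{2m-1}$ prefactors. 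The only difference is expository: the paper states the last transfer implicitly (``(\ref{formulaclara14.3}) implies (\ref{formulaclara32})'') and gives the explicit choices of $z_{j,k}$ and $\lambda$, while you spell out the chain rule/degree count and instead sketch the bookkeeping.
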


\begin{proof}
  Let $E$ and $R_h$ be the moulds defined in subsection \ref{identities}. Let
  $f \in \mathcal{M}^{\tmop{pre}}$ be defined by
  \[ f^{\varnothing} = 0, \]
  \[ f (s_1, \ldots, s_m ; \xi_1, \ldots, \xi_m) = \lp{2} \frac{1}{h} E + R_h
     \srp{2}{\{1, \ldots, m\}}{} \]
  where, for $1 \leqslant j \leqslant k \leqslant m$,
  \[ z_{j, k} \assign - \sum_{\upsilon = 1}^{\nu} \frac{\sh (\omega_{\upsilon}
     t s_j) \sh \lp{1} \omega_{\upsilon} t (1 - s_k) \rp{1}}{\omega_{\upsilon}
     \sh (\omega_{\upsilon} t)} \xi_{j, \upsilon} \xi_{k, \upsilon} . \]
  Let $\lambda$ be the Borel measure defined on $[0, 1] \times
  \mathbbm{R}^{\nu}$ by
  \[ d \lambda (s, \xi) = t e^{i q_t (s) \cdot \xi} d^{\nu} \mu (\xi) d s. \]
  Then
  \[ \frac{t^{2 m - 1}}{h} Q_m + w_m = \int_{0 < s_1 < \ldots < s_m < 1}
     \int_{\mathbbm{R}^{\nu m}} f (s_1, \ldots, s_m ; \xi_1, \ldots, \xi_m)
     d^{\nu m} \lambda^{\otimes} (s, \xi) . \]
  By Corollary \ref{treemaria14}
  \[ \exp_{\tmop{pre}} (f)^{}_{} (s_1, \ldots, s_m ; \xi_1, \ldots, \xi_m) =
     \frac{1}{h^m} \exp (- h \Omega_t . \xi \otimes_m \xi) . \]
  Let $\tilde{v}$ be the formal series with respect to $H$ defined by
  \[ \tilde{v} \assign \exp \lp{2} \sum^{+ \infty}_{m = 1} v^{\diamond}_m H^m
     \rp{2} . \]
  Then
  \[ \tilde{v} = \exp \lp{1} \Phi (f) \rp{1} = \Phi (\exp_{\tmop{pre}} f) =
     \sum_{m = 0}^{+ \infty} v_m H^m . \]
  The second equality holds since $\Phi$ is a morphism (Lemma
  \ref{mouldofamanda2}) and the third one uses the definition
  (\ref{formulaclara14.2}) of $v_m$. Then (\ref{formulaclara14.3}) implies
  (\ref{formulaclara32}).
\end{proof}

\begin{proposition}
  \label{formulaclara34}Let $\omega_1, \ldots, \omega_{\nu}, T > 0$. Let $\mu$
  be as in Proposition \ref{formulaclara14} and let us assume that
  \[ 4 T^2 M_{\mu} < 1. \]
  Let
  \[ \varphi (t, x, y) \assign \sum^{+ \infty}_{m = 1} t^{2 m - 2} Q_m (t, x,
     y), \]
  \[ w (t, x, y, h) \assign \sum^{+ \infty}_{m = 1} w_m (t, x, y, h), \]
  and
  \[ \phi \assign \frac{1}{4} \lp{2} \sum_{\upsilon = 1}^{\nu}
     \frac{\omega_{\upsilon}}{\tmop{sh} (\omega_{\upsilon} t)} \lp{1}
     \tmop{ch} (\omega_{\upsilon} t) (x_{\upsilon}^2 + y_{\upsilon}^2) - 2
     x_{\upsilon} y_{\upsilon} \rp{1} \rp{2} - t \varphi . \]
  Then
  \[ \varphi \in \mathcal{C}^1 \lp{1}] - T, T [, \mathcal{A} \lp{0}
     D^2_{\mathbbm{R}^{\nu}, 2} \rp{0} \rp{1} \text{ and \ } w \in
     \mathcal{C}^1 \lp{1} [0, T [, \mathcal{A} \lp{0} D^2_{\mathbbm{R}^{\nu},
     2} \times \demiplan{} \rp{0} \rp{1} . \]
  Moreover
  \begin{equation}
    \label{formulaclara36} u (t, x, y, h) \assign \left( 4 \pi h \right)^{-
    \nu / 2} \prod_{\upsilon = 1}^{\nu} \frac{\omega_{\upsilon}^{1 /
    2}}{\tmop{sh}^{1 / 2} (\omega_{\upsilon} t)} \times e^{- \phi / h + w}
  \end{equation}
  satisfies (\ref{theorembetty6}).
\end{proposition}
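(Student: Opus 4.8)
The plan is to obtain the statement almost for free from Proposition \ref{formulaclara14} and Lemma \ref{formulaclara30}, the only genuine work being a few convergence estimates. First I would fix $R = 2$ throughout, so that $4\int_{\mathbbm{R}^{\nu}}e^{(1+2R)|\xi|}d^{\nu}|\mu|(\xi) = 4M_{\mu}$, and read off from \ref{formulaclara19.b} and \ref{formulaclara26} the bounds $|t^{2m-2}Q_m(t,x,y)| \leqslant (4M_{\mu})(4t^2M_{\mu})^{m-1}$ and $|w_m(t,x,y,h)| \leqslant m(4t^2M_{\mu})^m$, valid on $D^2_{\mathbbm{R}^{\nu},2}$, resp.\ $D^2_{\mathbbm{R}^{\nu},2}\times\mathbbm{C}^+$. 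Since $4T^2M_{\mu} < 1$, both series $\sum_{m\geqslant 1}t^{2m-2}Q_m$ and $\sum_{m\geqslant 1}w_m$ converge normally on the compact subsets of $]-T,T[\,\times D^2_{\mathbbm{R}^{\nu},2}$, resp.\ $[0,T[\,\times D^2_{\mathbbm{R}^{\nu},2}\times\mathbbm{C}^+$, so $\varphi$ and $w$ are well defined and analytic in the remaining variables. For the $\mathcal{C}^1$ dependence on $t$ one differentiates term by term: the $m=1$ terms are treated directly as in the proofs of Lemmas \ref{formulaclara18} and \ref{formulaclara25}, while for $m\geqslant 2$ the identity $t\,\partial_t(t^{2m-2}Q_m) = \partial_t(t^{2m-1}Q_m) - t^{2m-2}Q_m$ combined with \ref{formulaclara20} gives $|\partial_t(t^{2m-2}Q_m)| \leqslant |t|^{2m-3}\bigl(2 + \omega_{\sharp}|t|\max(|x|,|y|)\bigr)(4M_{\mu})^m$, and an entirely parallel differentiation-under-the-integral estimate (using \ref{formulaclara21}, \ref{formulaclara22} and $0\leqslant\Omega_t^{g,\theta}.\xi\otimes_m\xi\leqslant\tfrac{mt}{4}(\xi_1^2+\cdots+\xi_m^2)$) controls $\partial_t w_m$. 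Because $2m-3\geqslant 1$ for $m\geqslant 2$, these derivative series again converge normally on compacts, yielding the asserted regularity of $\varphi$ and $w$.

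Next I would rewrite the exponent of \ref{formulaclara36}. Writing $S$ for the harmonic phase $\sum_{\upsilon=1}^{\nu}\tfrac{\omega_{\upsilon}}{\sh(\omega_{\upsilon}t)}\bigl(\ch(\omega_{\upsilon}t)(x_{\upsilon}^2+y_{\upsilon}^2)-2x_{\upsilon}y_{\upsilon}\bigr)$, the definition of $\phi$ reads $\phi = \tfrac14 S - t\varphi$, hence $-\phi/h + w = -S/(4h) + t\varphi/h + w$. Since $t\varphi = \sum_{m\geqslant 1}t^{2m-1}Q_m$ and, by Lemma \ref{formulaclara30}, $v_m^{\diamond} = \tfrac{t^{2m-1}}{h}Q_m + w_m$, the estimates above show that on $]0,T[\,\times D^2_{\mathbbm{R}^{\nu},2}\times\mathbbm{C}^+$ one has $t\varphi/h + w = \sum_{m\geqslant 1}v_m^{\diamond}$ with absolute convergence. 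Therefore, recalling \ref{formulaclara5},
\[ u = (4\pi h)^{-\nu/2}\prod_{\upsilon=1}^{\nu}\frac{\omega_{\upsilon}^{1/2}}{\sh^{1/2}(\omega_{\upsilon}t)}\,e^{-S/(4h)}\,e^{\sum_{m\geqslant 1}v_m^{\diamond}} = p^{\tmop{harm}}\,e^{\sum_{m\geqslant 1}v_m^{\diamond}}. \]

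It then remains to identify $e^{\sum_{m\geqslant 1}v_m^{\diamond}}$ with $v$. The proof of Lemma \ref{formulaclara30} already establishes the identity $\sum_{m\geqslant 0}v_m H^m = \exp\bigl(\sum_{m\geqslant 1}v_m^{\diamond}H^m\bigr)$ in the ring of formal power series in $H$. I would fix $(t,x,y,h)$ with $t\in\,]0,T[$, $(x,y)\in D^2_{\mathbbm{R}^{\nu},2}$, $h\in\mathbbm{C}^+$, and evaluate all coefficients at this point: from \ref{formulaclara14.2} and Lemma \ref{formulaclara10}, $|v_m|\leqslant\tfrac{1}{m!}(|t/h|M_{\mu})^m$, so $z\mapsto\sum_{m\geqslant 0} v_m z^m$ is entire, while Step 1 gives $|v_m^{\diamond}|\leqslant\bigl(\tfrac{1}{|h|\,|t|}+m\bigr)(4t^2M_{\mu})^m$, so $z\mapsto\sum_{m\geqslant 1} v_m^{\diamond}z^m$ has radius of convergence at least $(4t^2M_{\mu})^{-1} > 1$. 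The formal identity, read at this point, is thus an identity between two honest analytic functions of $z$ near $0$, and by analytic continuation it holds on their common disc of convergence, in particular at $z = 1$:
\[ v(t,x,y,h) = 1 + \sum_{m\geqslant 1}v_m(t,x,y,h) = \exp\Bigl(\sum_{m\geqslant 1}v_m^{\diamond}(t,x,y,h)\Bigr). \]
Combined with the previous display this gives $u = p^{\tmop{harm}}v$, which by Proposition \ref{formulaclara14} is the solution of \ref{theorembetty6}; the initial condition $u|_{t=0^+}=\delta_{x=y}$ follows from $p^{\tmop{harm}}|_{t=0^+}=\delta_{x=y}$ together with $v_m|_{t=0^+}=0$.

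The PDE \ref{theorembetty6} thus costs nothing new; the whole difficulty is concentrated in two points. The first is the uniformity near $t=0$ in the $\mathcal{C}^1$ estimates, where the crude term-by-term $\partial_t$ bound on $t^{2m-2}Q_m$ and on $w_m$ carries a spurious negative power of $t$ for $m=1$, forcing one to peel off the $m=1$ term and to use $2m-3\geqslant 1$ for the rest. The second, and the real point, is the passage from the \emph{formal} exponential identity of Lemma \ref{formulaclara30} to a genuine functional one: this is exactly where the hypothesis $4T^2M_{\mu}<1$ is needed, via the choice $R=2$ in Lemmas \ref{formulaclara18} and \ref{formulaclara25} (which is what singles out $M_{\mu}=\int_{\mathbbm{R}^{\nu}}e^{5|\xi|}d^{\nu}|\mu|$ as the relevant norm), to guarantee that both power series in $H$ converge past $H=1$.
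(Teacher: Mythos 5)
Your regularity argument (the first paragraph) coincides with the paper's: choose $R=2$ in Lemmas \ref{formulaclara18} and \ref{formulaclara25}, peel off the $m=1$ term, and use $4T^{2}M_{\mu}<1$ to sum the term-by-term $t$-derivative bounds. Where you genuinely depart from the paper is in how (\ref{theorembetty6}) is verified. The paper sums the coupled hierarchy (\ref{formulaclara32}) over $m$ to obtain the nonlinear equation
\[
\lp{2}\partial_t-\frac{2}{p^{\tmop{harm}}}\partial_x p^{\tmop{harm}}\cdot\partial_x\rp{2}v^{\diamond}=\partial_x^2 v^{\diamond}+\partial_x v^{\diamond}\cdot\partial_x v^{\diamond}+c(x)
\]
for $v^{\diamond}=\frac{t}{h}\varphi+w$, then checks that this is precisely the identity making $e^{v^{\diamond}}$ solve the conjugated linear equation; it never explicitly identifies $e^{v^{\diamond}}$ with the function $v$ from Proposition \ref{formulaclara14}. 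You instead prove that identification directly: at each fixed $(t,x,y,h)$ you show that $\sum_{m} v_m z^{m}$ is entire and that $\sum_{m} v^{\diamond}_m z^{m}$ has radius of convergence $>1$ (using the positivity of $\Omega_t$ from Lemma \ref{formulaclara10} together with $4T^2M_\mu<1$), read the formal power-series identity $\exp\bigl(\sum v_m^{\diamond}H^{m}\bigr)=\sum v_m H^{m}$ from the proof of Lemma \ref{formulaclara30} coefficient by coefficient, and evaluate the resulting equality of analytic functions of $z$ at $z=1$ to get $e^{v^{\diamond}}=v$, whence $u=p^{\tmop{harm}}v$ and Proposition \ref{formulaclara14} finishes. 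Both routes are correct. Yours is arguably tidier: it avoids justifying the interchange of the infinite sum with $\partial_t$, $\partial_x$ and the summation of the nonlinear convolution $\sum_{p+q=m}\partial_x v^{\diamond}_p\cdot\partial_x v^{\diamond}_q$, and it makes explicit the equality $v=e^{t\varphi/h+w}$ that the paper only leaves implicit; the small price is that you must promote the formal identity of Lemma \ref{formulaclara30} from the ring $\mathcal{A}[[H]]$ to an honest numerical identity, which you do correctly.
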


\begin{proof}
  By dominated convergence theorem, by choosing $R = 2$ in Lemma
  \ref{formulaclara18} and Lemma \ref{formulaclara25},
  \[ \varphi \in \mathcal{C}^1 \lp{1}] - T, T [, \mathcal{A} \lp{0}
     D^2_{\mathbbm{R}^{\nu}, 2} \rp{0} \rp{1} \text{ and \ } w \in
     \mathcal{C}^0 \lp{1} [0, T [, \mathcal{A} \lp{0} D^2_{\mathbbm{R}^{\nu},
     2} \times \demiplan{} \rp{0} \rp{1} . \]
  The quantity $| \partial_t w_m |$ is bounded by
  \[ P (m, |h|, t, |x|, |y|) t^{2 m - 1} \lp{2} 4 \int_{\mathbbm{R}^{\nu}}
     e^{5| \xi |} d^{\nu} | \mu | (\xi) \srp{2}{m}{}, \]
  where $P$ is a polynomial with respect to its arguments (this step is left
  to the reader). This implies the $\mathcal{C}^1$-regularity of $w$.
  
  By Lemma \ref{formulaclara30}, $v^{\diamond}_m$ satisfies
  (\ref{formulaclara32}). Therefore
  \[ v^{\diamond} \assign \frac{t}{h} \varphi + w = \sum_{m \geqslant 1}
     v^{\diamond}_m \]
  satisfies
  \[ \lp{2} \partial_t - \frac{2}{p^{\tmop{harm}}} \partial_x p^{\tmop{harm}}
     \cdot \partial_x \rp{2} v^{\diamond} = \partial_x^2 v^{\diamond} +
     \partial_x v^{\diamond} \cdot \partial_x v^{\diamond} + c (x) . \]
  Then
  \[ \lp{2} \partial_t - \frac{2}{p^{\tmop{harm}}} \partial_x p^{\tmop{harm}}
     \cdot \partial_x \rp{2} e^{v^{\diamond}} = \partial_x^2 e^{v^{\diamond}}
     + c (x) e^{v^{\diamond}} \]
  and the function $u$ satisfies (\ref{theorembetty6}).
\end{proof}

\begin{remark}
  \label{formulaclara40}Let $y \in D_{\mathbbm{R}^{\nu}, 2}$. Then the
  function $(t, x) \longmapsto \phi (t, x, y)$ satisfies
  \begin{equation}
    \label{formulaclara41} \left\{ \begin{array}{l}
      \partial_t \phi + (\partial_x \phi)^2 = \frac{( \tmmathbf{\omega}
      x)^2}{4} - c (x)\\
      \\
      \lp{2} \phi - \frac{(x - y)^2}{4 t} \rp{2} \sve{2}{}{t = 0} = 0
    \end{array} \right.
  \end{equation}
  for $(t, x) \in \lp{1}] - T, T [-\{0\} \rp{1} \times D_{\mathbbm{R}^{\nu},
  2}$.
\end{remark}

\section{Borel summation preliminary statements}

The following lemma will be useful for the proof of a Morse Lemma concerning
the function $\phi |_{y = x}$ (see Proposition \ref{formulaclara34}).

\begin{lemma}
  Let $\mu$ be as in Definition \ref{theorembetty1} and Proposition
  \ref{formulaclara14}. Let us assume that $M_{\mu}' < \infty$ and $4 T^2
  M_{\mu} < 1$. Then, for every $(t, x) \in] 0, T [\times
  D_{\mathbbm{R}^{\nu}, 4}$,
  \begin{equation}
    \label{preparationsandra1.2} | \varphi (t, x, x) | \leqslant \frac{4
    M_{\mu}}{1 - 4 T^2 M_{\mu}},
  \end{equation}
  \begin{equation}
    \label{preparationsandra1.4} \max_{1 \leqslant \delta \leqslant \nu}
    |x_{\delta} \varphi (t, x, x) | \leqslant \ch \lp{1} \frac{\omega_{\sharp}
    T}{2} \rp{1} \times \frac{M_{\mu}'}{1 - 4 T^2 M_{\mu}}
  \end{equation}
\end{lemma}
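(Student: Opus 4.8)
The plan is to estimate each term $t^{2m-2}Q_m$ of the series $\varphi(t,x,y)=\sum_{m\geqslant1}t^{2m-2}Q_m(t,x,y)$ on the diagonal $y=x$ and then sum a geometric series; the hypothesis $4T^2M_\mu<1$ makes everything converge. Write $\Psi(\xi):=\max\big(|\rho_\mu(\xi)|,|\partial_{\xi_1}\rho_\mu(\xi)|,\ldots,|\partial_{\xi_\nu}\rho_\mu(\xi)|\big)$, so that $M'_\mu=\int_{\mathbbm{R}^\nu}\Psi(\xi)e^{5|\xi|}d\xi$ and, since $|\rho_\mu|\leqslant\Psi$, also $M_\mu\leqslant M'_\mu<\infty$; thus Lemma \ref{formulaclara18} applies, and (\ref{preparationsandra1.2}) follows at once from (\ref{formulaclara19.a}) with $R=4$: it gives $|Q_m(t,x,x)|\leqslant(4M_\mu)^m$ on $]-T,T[\,\times D_{\mathbbm{R}^{\nu},4}$ (so the series for $\varphi(t,x,x)$ converges absolutely and locally uniformly there), whence $|\varphi(t,x,x)|\leqslant\sum_{m\geqslant1}t^{2m-2}(4M_\mu)^m=4M_\mu\sum_{m\geqslant1}(4t^2M_\mu)^{m-1}\leqslant 4M_\mu/(1-4T^2M_\mu)$.

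For (\ref{preparationsandra1.4}) the key point is that on the diagonal $q_{t,\upsilon}(s)|_{y=x}=\varpi_{t,\upsilon}(s)\,x_\upsilon$, hence $q_t^m(s)\cdot\xi|_{y=x}=\sum_{j,\upsilon}\varpi_{t,\upsilon}(s_j)x_\upsilon\xi_{j,\upsilon}$ and
\[ x_\delta\, e^{i\,q_t^m(s)\cdot\xi|_{y=x}}=\frac{1}{i\,\varpi_{t,\delta}(s_1)}\,\partial_{\xi_{1,\delta}}\, e^{i\,q_t^m(s)\cdot\xi|_{y=x}} . \]
I would insert this identity in every integral defining $x_\delta Q_m(t,x,x)$ (see Definition \ref{formulaclara15}), write $d\mu(\xi_1)=\rho_\mu(\xi_1)d\xi_1$, and integrate by parts in $\xi_{1,\delta}$. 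The boundary terms vanish: for fixed values of the remaining variables, the integrand and its $\xi_{1,\delta}$-derivative are both integrable along the $\xi_{1,\delta}$-line (the majorant is a polynomial in $\xi_1$, coming from $\Upsilon^{g,\xi}_t$, of $\xi_1$-degree equal to the degree $d_1$ of the vertex $1$ in $g$, times $\Psi(\xi_1)e^{4|\xi_1|}$, which is integrable because $M'_\mu<\infty$), so the partial function tends to $0$ at $\pm\infty$; Fubini then applies. The derivative falls on $\Upsilon^{g,\xi}_t\rho_\mu(\xi_1)$, producing $(\partial_{\xi_{1,\delta}}\Upsilon^{g,\xi}_t)\rho_\mu(\xi_1)+\Upsilon^{g,\xi}_t\,\partial_{\xi_{1,\delta}}\rho_\mu(\xi_1)$, and bounding the coefficients of $\Upsilon^{g,\xi}_t$ by $1/4$ as in Lemma \ref{formulaclara10} gives $|\partial_{\xi_{1,\delta}}\Upsilon^{g,\xi}_t|\leqslant d_1\,4^{-(m-1)}|\xi_1|^{d_1-1}\prod_{j\geqslant2}|\xi_j|^{d_j}$. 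Using (\ref{formulaclara15.5}) to bound $|\varpi_{t,\delta}(s_1)|^{-1}\leqslant\ch(\omega_\delta t/2)\leqslant\ch(\omega_\sharp T/2)$ and $|\rho_\mu|,|\partial_{\xi_{1,\delta}}\rho_\mu|\leqslant\Psi$, one is reduced to precisely the combinatorial sum of the proof of Lemma \ref{formulaclara18}, with only two changes: the $\xi_1$-integration now carries $\Psi$ instead of $\rho_\mu$, hence contributes a factor $M'_\mu$ in place of $M_\mu$; and a factorial weight $|\xi_1|^{d_1}/d_1!$ may become $|\xi_1|^{d_1-1}/(d_1-1)!$, which still obeys (\ref{formulaclara17.2}). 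Running those estimates with this bookkeeping yields, for $m\geqslant2$,
\[ |x_\delta Q_m(t,x,x)|\leqslant\ch\Big(\frac{\omega_\sharp T}{2}\Big)\frac{2^{2m-1}}{m(m-1)}M'_\mu M_\mu^{m-1}\leqslant\ch\Big(\frac{\omega_\sharp T}{2}\Big)M'_\mu(4M_\mu)^{m-1} \]
(using $2/(m(m-1))\leqslant1$), while for $m=1$ the integration by parts gives directly $|x_\delta Q_1(t,x,x)|\leqslant\ch(\omega_\sharp T/2)M'_\mu$. Summing, $|x_\delta\varphi(t,x,x)|\leqslant\ch(\omega_\sharp T/2)M'_\mu\sum_{m\geqslant1}(4t^2M_\mu)^{m-1}\leqslant\ch(\omega_\sharp T/2)M'_\mu/(1-4T^2M_\mu)$, which is (\ref{preparationsandra1.4}).

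The estimate (\ref{preparationsandra1.2}) is routine. The main obstacle is the constant bookkeeping in the integration-by-parts step for (\ref{preparationsandra1.4}): one must check that lowering the $\xi_1$-degree of $\Upsilon^{g,\xi}_t$ by one, when the $\xi$-derivative hits it, is exactly absorbed by the change of factorial weight $1/(d_1-1)!=d_1/d_1!$ together with $d_1\leqslant m-1$, so that (\ref{formulaclara17.2}) still applies and the surviving factor $(m-1)$ is killed by the prefactor $\frac{2^{m-1}}{4^{m-1}m!}(m-2)!$; and, crucially, that $M'_\mu$ enters only through the single variable $\xi_1$ (for $j\geqslant2$ one keeps $d|\mu|(\xi_j)$ and uses $M_\mu$), so that the per-term bound is $M'_\mu M_\mu^{m-1}$ and not $(M'_\mu)^m$, the latter being too large to be summed against $4T^2M_\mu<1$.
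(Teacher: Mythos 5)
Your proof is correct and follows essentially the same route as the paper: estimate term by term $t^{2m-2}Q_m(t,x,x)$ and sum a geometric series, using (\ref{formulaclara19.a}) with $R=4$ for (\ref{preparationsandra1.2}), and an integration by parts in a $\xi$-variable for (\ref{preparationsandra1.4}), followed by the Cayley degree-sequence count and the factorial estimate (\ref{formulaclara17.2}). The one genuine difference is that you integrate by parts in $\xi_{1,\delta}$ alone, using $x_\delta e^{iq_t^m(s)\cdot\xi|_{y=x}} = \lp{1} i\varpi_{t,\delta}(s_1) \rp{1}^{-1}\partial_{\xi_{1,\delta}}e^{iq_t^m(s)\cdot\xi|_{y=x}}$, whereas the paper averages over all variables with the operator $D=\frac{1}{m}\sum_{p=1}^m\varpi_{t,\delta}(s_p)^{-1}\partial_{\xi_{p,\delta}}$. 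Either choice works: the asymmetric version makes vertex $1$ special, but the Cayley count $\frac{(m-2)!}{(d_1-1)!\cdots(d_m-1)!}$ is symmetric in the vertices, and the extra factor $d_1$ coming from $\partial_{\xi_{1,\delta}}\Upsilon^{g,\xi}_t$ is absorbed into the usual $\prod_j d_j\leqslant 2^m$ together with (\ref{formulaclara17.2}) applied to $\frac{|\xi_1|^{d_1-1}}{(d_1-1)!}\prod_{j\geqslant2}\frac{|\xi_j|^{d_j}}{d_j!}$ (so you do not in fact need the device $d_1\leqslant m-1$ that you mention in your last paragraph; that route works too but is looser). Your per-term bound $\frac{2^{2m-1}}{m(m-1)}\ch\lp{1}\frac{\omega_\sharp T}{2}\rp{1}M'_\mu M_\mu^{m-1}$ matches the paper's exactly, and the rest (the vanishing of boundary terms, the fact that $M'_\mu$ appears through the single variable $\xi_1$ and $M_\mu$ through the others) is as in the paper. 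The symmetrized $D$ makes the bookkeeping slightly cleaner (the $\frac{1}{m}$ cancels against the $m$ terms of the product rule), but your version is equally valid.
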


\begin{proof}
  Let $(t, x) \in] 0, T [\times D_{\mathbbm{R}^{\nu}, 4}$.

  {\texspace{med}{\tmtextbf{-1-}}} Let us check (\ref{preparationsandra1.2}).
  Since
  \[ \varphi (t, x, x) = \sum^{+ \infty}_{m = 1} t^{2 m - 2} Q_m (t, x, x) \]
  and by choosing $R = 4$ in (\ref{formulaclara19.a}),
  \[ | \varphi (t, x, x) | \leqslant \frac{4 M_{\mu}}{1 - 4 T^2 M_{\mu}} . \]
  This proves (\ref{preparationsandra1.2}).

  {\texspace{med}{\tmtextbf{-2-}}} Let us check (\ref{preparationsandra1.4}).
  By Definition \ref{formulaclara15}
  \[ x_{\delta} Q_1 (t, x, x) = \int_{0 < s_1 < 1} \int_{\mathbbm{R}^{\nu}}
     x_{\delta} e^{i x \cdot \lp{1} \tmmathbf{\varpi}_t (s_1) \xi_1 \rp{1}} F
     (s_1, \xi_1) d \xi_1 d s_1 \]
  where $F (s_1, \xi_1) = \rho_{\mu} (\xi_1)$. If $m \geqslant 2$,
  \[ x_{\delta} Q_m (t, x, x) = (- 2)^{m - 1} \sum_{g \in
     \mathcal{\mathcal{T}}_m} \int_{0 < s_1 < \ldots < s_m < 1}
     \int_{\mathbbm{R}^{\nu m}} \]
  \[ x_{\delta} F (s, \xi) e^{i x \cdot \lp{1} \tmmathbf{\varpi}_t (s_1) \xi_1
     + \cdots + \tmmathbf{\varpi}_t (s_m) \xi_m \rp{1}} d^{\nu m} \xi d^m s \]
  where
  \[ F (s, \xi) \assign \Upsilon^{g, \xi}_t \rho_{\mu} (\xi_1) \cdots
     \rho_{\mu} (\xi_m) . \]
  Let
  \[ D \assign \frac{1}{m} \sum_{p = 1}^m \frac{1}{\varpi_{t, \delta} (s_p)}
     \partial_{\xi_{p, \delta}} . \]
  By integration by parts,
  \[ x_{\delta} \int_{\mathbbm{R}^m} F \times e^{i x_{\delta} \lp{1}
     \varpi_{t, \delta} (s_1) \xi_{1, \delta} + \cdots + \varpi_{t, \delta}
     (s_m) \xi_{m, \delta} \rp{1}} d \xi_{1, \delta} \cdots d \xi_{m, \delta}
     = \]
  \[ i \int_{\mathbbm{R}^m} D F \times e^{i x_{\delta} \lp{1} \varpi_{t,
     \delta} (s_1) \xi_{1, \delta} + \cdots + \varpi_{t, \delta} (s_m) \xi_{m,
     \delta} \rp{1}} d \xi_{1, \delta} \cdots d \xi_{m, \delta} . \]
  Therefore
  \[ x_{\delta} Q_1 (t, x, x) = i \int_{0 < s_1 < 1} \int_{\mathbbm{R}^{\nu}}
     e^{i x \cdot \lp{1} \tmmathbf{\varpi}_t (s_1) \xi_1 \rp{1}}
     (\partial_{\xi_{1, \delta}} \rho_{\mu}) (\xi_1) d \xi_1 \frac{d
     s_1}{\varpi_{t, \delta} (s_1)} \]
  and, if $m \geqslant 2$, denoting by $F_g$ the quantity $F$ in order to
  emphasize its dependence on the tree $g$,
  \[ x_{\delta} Q_m (t, x, x) = i (- 2)^{m - 1} \sum_{g \in
     \mathcal{\mathcal{T}}_m} \int_{0 < s_1 < \ldots < s_m < 1}
     \int_{\mathbbm{R}^{\nu m}} \]
  \[ D F_g \times e^{i x \cdot \lp{1} \tmmathbf{\varpi}_t (s_1) \xi_1 + \cdots
     + \tmmathbf{\varpi}_t (s_m) \xi_m \rp{1}} d^{\nu m} \xi d^m s. \]
  By (\ref{formulaclara15.5}) and (\ref{formulaclara16.a})
  \[ |x_{\delta} Q_1 (t, x, x) | \leqslant \ch \lp{1} \frac{\omega_{\sharp}
     t}{2} \rp{1} M'_{\mu} . \]
  Let us assume that $m \geqslant 2$. Using (\ref{formulaclara13}) and
  (\ref{formulaclara15.5}), one can show
  \[ \ve{1} \Upsilon^{g, \xi}_t \ve{1} \leqslant \frac{1}{4^{m - 1}} |
     \xi_{j_1} || \xi_{k_1} | \cdots | \xi_{j_{m - 1}} || \xi_{k_{m - 1}} |,
  \]
  \[ \ve{1} D \Upsilon^{g, \xi}_t \ve{1} \leqslant \frac{\ch \lp{1}
     \frac{\omega_{\sharp} t}{2} \rp{1}}{4^{m - 1} m} \lp{1} \partial_{u_1} +
     \cdots + \partial_{u_m} \rp{1} (u_{j_1} u_{k_1} \cdots u_{j_{m - 1}}
     u_{k_{m - 1}}) \sve{1}{}{u_1 = | \xi_1 |, \ldots, u_m = | \xi_m |} \]
  where $(j_1, k_1), \ldots, (j_{m - 1}, k_{m - 1})$ denote the $m - 1$ edges
  of the graph $g$ ($j_p < k_p$). Let $d_1, \ldots, d_m$ be the degrees of the
  vertices $1, \ldots, m$ of the graph $g$. Then
  \[ \frac{|D F_g |}{d_1 ! \cdots d_m !} \leqslant \frac{1}{4^{m - 1}} (A 1 +
     A 2) \]
  where
  \[ A 1 \assign \frac{\ch \lp{1} \frac{\omega_{\sharp} t}{2} \rp{1}}{m}
     \frac{| \xi_1 |^{d_1}}{d_1 !} \cdots \frac{| \xi_m |^{d_m}}{d_m !} \times
  \]
  \[ \lp{1} | \partial_{\xi_{1, \delta}} \rho_{\mu} (\xi_1) || \rho_{\mu}
     (\xi_2) | \cdots | \rho_{\mu} (\xi_m) | + \cdots + | \rho_{\mu} (\xi_1) |
     \cdots | \rho_{\mu} (\xi_{m - 1}) || \partial_{\xi_{m, \delta}}
     \rho_{\mu} (\xi_m) | \rp{1}, \]
  \[ A 2 \assign \frac{\ch \lp{1} \frac{\omega_{\sharp} t}{2} \rp{1}}{m}
     \lp{1} \partial_{u_1} + \cdots + \partial_{u_m} \rp{1} \lp{2}
     \frac{u_1^{d_1}}{d_1 !} \cdots \frac{u_m^{d_m}}{d_m !} \rp{2}
     \sve{2}{}{u_1 = | \xi_1 |, \ldots, u_m = | \xi_m |} \times \]
  \[ | \rho_{\mu} (\xi_1) | \cdots | \rho_{\mu} (\xi_m) |. \]
  Using (\ref{formulaclara17.2}), one gets
  \[ A 1 \leqslant \frac{1}{m} \ch \lp{1} \frac{\omega_{\sharp} t}{2} \rp{1}
     e^{| \xi |_1} \times \]
  \[ \lp{1} | \partial_{\xi_{1, \delta}} \rho_{\mu} (\xi_1) || \rho_{\mu}
     (\xi_2) | \cdots | \rho_{\mu} (\xi_m) | + \cdots + | \rho_{\mu} (\xi_1) |
     \cdots | \rho_{\mu} (\xi_{m - 1}) || \partial_{\xi_{m, \delta}}
     \rho_{\mu} (\xi_m) | \rp{1}, \]
  \[ A 2 \leqslant \ch \lp{1} \frac{\omega_{\sharp} t}{2} \rp{1} e^{| \xi |_1}
     | \rho_{\mu} (\xi_1) | \cdots | \rho_{\mu} (\xi_m) | \]
  since
  \[ \lp{1} \partial_{u_1} + \cdots + \partial_{u_m} \rp{1} \lp{2}
     \frac{u_k^{d_k}}{d_k !} \rp{2} = \frac{u_k^{d_k - 1}}{(d_k - 1) !} . \]
  Then
  \[ \frac{|D F_g |}{d_1 ! \cdots d_m !} \leqslant \frac{2}{4^{m - 1}} \ch
     \lp{1} \frac{\omega_{\sharp} t}{2} \rp{1} e^{| \xi |_1} \times \]
  \begin{equation}
    \label{preparationclara2} \frac{1}{m} \lp{1} | \rho_{\mu} (\xi_1) |_{\ast}
    | \rho_{\mu} (\xi_2) | \cdots | \rho_{\mu} (\xi_m) | + \cdots + |
    \rho_{\mu} (\xi_1) | \cdots | \rho_{\mu} (\xi_{m - 1}) || \rho_{\mu}
    (\xi_m) |_{\ast} \rp{1}
  \end{equation}
  where
  \[ | \rho_{\mu} (\eta) |_{\ast} \assign \max \lp{1} | \rho_{\mu} (\eta) |,
     | \partial_{\eta_1} \rho_{\mu} (\eta) |, \ldots, | \partial_{\eta_{\nu}}
     \rho_{\mu} (\eta) | \rp{1} \]
  for $\eta \in \mathbbm{R}^{\nu}$. Then, by (\ref{formulaclara16.a}) and
  (\ref{preparationclara2}),
  \begin{eqnarray*}
    |x_{\delta} Q_m (t, x, x) | \leqslant &  & 2^{m - 1} \sum_{g \in
    \mathcal{\mathcal{T}}_m} \int_{0 < s_1 < \ldots < s_m < 1}
    \int_{\mathbbm{R}^{\nu m}} |D F_g | \times e^{4| \xi |_1} d^{\nu m} \xi
    d^m s\\
    \leqslant &  & \frac{2^{m - 1}}{m!} \times \lp{2} \sum_{g \in
    \mathcal{\mathcal{T}}_m} d_1 ! \cdots d_m ! \rp{2} \times \frac{2}{4^{m -
    1}} \ch \lp{1} \frac{\omega_{\sharp} t}{2} \rp{1} M_{\mu}' M_{\mu}^{m - 1}
    .
  \end{eqnarray*}
  But, by (\ref{formulaclara20.2}) and (\ref{formulaclara20.3})
  \begin{eqnarray*}
    \sum_{g \in \mathcal{\mathcal{T}}_m} d_1 ! \cdots d_m ! = &  & \sum_{d_1 +
    \cdots + d_m = 2 m - 2} \frac{(m - 2) !}{(d_1 - 1) ! \cdots (d_m - 1) !}
    \times d_1 ! \cdots d_m !\\
    \leqslant &  & (m - 2) ! \times 2^m \times 2^{2 m - 3} .
  \end{eqnarray*}
  Then
  \[ |x_{\delta} Q_m (t, x, x) | \leqslant 2 \frac{4^{m - 1}}{m (m - 1)} \ch
     \lp{1} \frac{\omega_{\sharp} T}{2} \rp{1} M_{\mu}' M_{\mu}^{m - 1} . \]
  Then, for every $m \geqslant 1$,
  \[ |x_{\delta} Q_m (t, x, x) | \leqslant \ch \lp{1} \frac{\omega_{\sharp}
     T}{2} \rp{1} M_{\mu}' (4 M_{\mu})^{m - 1} . \]
  This proves (\ref{preparationsandra1.4}).
\end{proof}

The following Morse lemma gives a convenient expression for $\phi \sve{1}{}{y
= x}$ (see Proposition \ref{formulaclara34} for the definition of the function
$\phi$). We denote by $D_{\omega}$ be the following $\nu \times \nu$ diagonal
matrix
\begin{equation}
  \label{preparationclara3.14} D_{\omega} \assign \frac{1}{\sqrt{2}}
  \left(\begin{array}{ccc}
    \omega_1^{1 / 2} \tgh^{1 / 2} \lp{2} \frac{\omega_1 t}{2} \rp{2} &  & 0\\
    \vdots & \ddots & \vdots\\
    0 & \ldots & \omega_{\nu}^{1 / 2} \tgh^{1 / 2} \lp{2} \frac{\omega_{\nu}
    t}{2} \rp{2}
  \end{array}\right) .
\end{equation}
Moreover, if $W$ is a $\mathbbm{C}^{\nu}$-valued function analytic function
defined on some open set of $\mathbbm{C}^{\nu}$, we denote by $\partial_x W
(x)$ the $\nu \times \nu$ matrix $\lp{1} \partial_{x_1} W (x), \ldots,
\partial_{x_{\nu}} W (x) \rp{1}$.

\begin{proposition}
  \label{preparationclara4} Let $\alpha_{\nu} > 0$ such that $\alpha =
  \alpha_{\nu}$ satisfies (\ref{preparationclara10.5}) and
  (\ref{preparationclara16}) below (the number $\alpha_{\nu}$ only depends on
  $\nu$). Let $\mu$ be as in Definition \ref{theorembetty1} and Proposition
  \ref{formulaclara14}. Let us assume that $M_{\mu}' < \infty$, that
  (\ref{theorembetty3}), (\ref{theorembetty4}) are satisfied and
  \[ \int_{\mathbbm{R}^{\mu}} d \mu (\xi) = 0. \]
  Let $\omega_1, \ldots, \omega_{\nu}, T > 0$ such that
  \begin{equation}
    \label{preparationclara4.2a} 4 T^2 M_{\mu} < 1,
  \end{equation}
  \[ \ch \lp{1} \frac{\omega_{\sharp} T}{2} \rp{1} \frac{\omega_{\sharp} (1 +
     \omega_{\flat} T) M'_{\mu}}{\omega_{\flat}^3 (1 - 4 T^2 M_{\mu})} <
     \alpha_{\nu} . \]
  Then there exists $\Lambda \in \mathcal{C}^0 \lp{1}] 0, T [,
  \mathcal{A}_{\mathbbm{C}^{\nu}} (D_{\mathbbm{R}^{\nu}, 1}) \rp{1}$ such that
  \begin{enumerate}
    \item \label{preparationclara4.5}$\Lambda |_{] 0, T [\times
    \mathbbm{R}^{\nu}}$ is $\mathbbm{R}^{\nu}$-valued and
    \begin{equation}
      \label{preparationclara4.7} \sup_{(t, x) \in] 0, T [\times
      D_{\mathbbm{R}^{\nu}, 1}} | \partial_x \Lambda (t, x) | \leqslant
      \frac{1}{2},
    \end{equation}
    \item \label{preparationclara5}for every $(t, x) \in] 0, T [\times
    D_{\mathbbm{R}^{\nu}, 1}$, $\phi \sve{1}{}{y = x} = \theta^2 (t, x)$ where
    \begin{equation}
      \label{preparationclara5.1} \theta (t, x) \assign D_{\omega} \lp{1} x +
      \Lambda (t, x) \rp{1} .
    \end{equation}
  \end{enumerate}
\end{proposition}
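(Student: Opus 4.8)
\emph{The reduction.} By Proposition~\ref{formulaclara34}, $\phi\sve{1}{}{y=x}=(D_\omega x)^2+g_t(x)$ with $g_t(x):=-t\,\varphi(t,x,x)$, and using $\frac{\ch(\omega_\upsilon t)-1}{\sh(\omega_\upsilon t)}=\tgh(\frac{\omega_\upsilon t}{2})$ together with (\ref{preparationclara3.14}), the quadratic part is $(D_\omega x)^2=\frac12\sum_\upsilon\omega_\upsilon\tgh(\frac{\omega_\upsilon t}{2})x_\upsilon^2$. So the statement reduces to a quantitative Morse lemma: exhibit $\phi\sve1{}{y=x}$ as the square of $D_\omega(x+\Lambda)$ with $\Lambda$ small. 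I would first record three properties of $g_t$. (i) $g_t$ vanishes to second order at $0$: since $\varphi(t,x,x)=\sum_{m\ge1}t^{2m-2}Q_m(t,x,x)$ with $Q_m$ as in Definition~\ref{formulaclara15}, one expands $e^{iq_t^m(s)\cdot\xi}$ in powers of $x$ and integrates $d^{\nu m}\mu^{\otimes}$; for $m\ge2$ every tree on $\{1,\dots,m\}$ has at least two leaves, and a leaf $j$ contributes $\xi_{j,\upsilon}$ linearly to $\Upsilon_t^{g,\xi}$, so the hypothesis $\int\xi_\delta\,d\mu=0$ (\ref{theorembetty4}) annihilates the $x^0$ and $x^1$ coefficients; for $m=1$, $Q_1(t,0,0)=\int d\mu=0$ by the hypothesis $\int d\mu=0$, while $\partial_{x_\delta}Q_1(t,x,x)|_{x=0}=i\bigl(\int_0^1\varpi_{t,\delta}(s)\,ds\bigr)\int\xi_\delta\,d\mu=0$ by (\ref{theorembetty4}). (ii) $g_t$ is $\mathbbm{R}$-valued on $]0,T[\times\mathbbm{R}^\nu$: for real $x,t$ the substitution $\xi\mapsto-\xi$ combined with $\overline{\rho_\mu(\xi)}=\rho_\mu(-\xi)$ (\ref{theorembetty3}) gives $\overline{Q_m(t,x,x)}=Q_m(t,x,x)$, exactly as in the computation showing that $c$ is real. (iii) $g_t$ is analytic in $x$ on $D_{\mathbbm{R}^\nu,4}$ and $\mathcal C^0$ in $t$, by Lemma~\ref{formulaclara18}.

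\emph{The Morse reduction.} Since $g_t$ vanishes to second order, Taylor's formula with integral remainder gives $g_t(x)=x\cdot G(t,x)x$ with $G(t,x):=\int_0^1(1-\tau)\,\tmop{Hess}_x g_t(\tau x)\,d\tau$ a symmetric matrix, analytic in $x$ on $D_{\mathbbm{R}^\nu,1}$ (for $\tau\in[0,1]$, $x\in D_{\mathbbm{R}^\nu,1}$ one has $\tau x\in D_{\mathbbm{R}^\nu,1}$) and $\mathcal C^0$ in $t$. I would then solve, once $\|G(t,x)\|$ is small compared with $\min_\upsilon(D_\omega)_{\upsilon\upsilon}^2=\frac12\omega_\flat\tgh(\frac{\omega_\flat t}{2})$, the equation $D_\omega X+X D_\omega+X^2=G$ by iterating $X\mapsto\mathcal L^{-1}(G-X^2)$, where $\mathcal L(X):=D_\omega X+X D_\omega$ acts on entries by division by $(D_\omega)_{ii}+(D_\omega)_{jj}$, so $\|\mathcal L^{-1}\|\le\frac12(\min_\upsilon(D_\omega)_{\upsilon\upsilon})^{-1}$. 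This is a contraction on a small ball of symmetric matrices; it preserves real symmetric matrices (so $X$ is real for real $x$) and, converging uniformly on compacta, it yields $X\in\mathcal C^0\lp1{]0,T[},\mathcal A_{\mathbbm{C}^\nu}(D_{\mathbbm{R}^\nu,1})\rp1$. Setting $\Lambda(t,x):=D_\omega^{-1}X(t,x)x$ and $\theta(t,x):=D_\omega(x+\Lambda(t,x))=(D_\omega+X(t,x))x$, the symmetric matrix $R:=D_\omega+X$ satisfies $R^2=D_\omega^2+\mathcal L(X)+X^2=D_\omega^2+G$, hence $\theta^2=x\cdot R^2x=(D_\omega x)^2+g_t=\phi\sve1{}{y=x}$, which is assertion~\ref{preparationclara5}; assertion~\ref{preparationclara4.5} follows from reality of $X$.

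\emph{The estimate, and the main obstacle.} It remains to prove (\ref{preparationclara4.7}). Differentiating $\Lambda=D_\omega^{-1}Xx$ and the fixed-point equation gives $\partial_{x_j}\Lambda=D_\omega^{-1}(\partial_{x_j}X)x+D_\omega^{-1}Xe_j$ with $(\mathcal L+\mathcal M_X)(\partial_{x_j}X)=\partial_{x_j}G$, $\mathcal M_X(Y):=XY+YX$, so that for $x\in D_{\mathbbm{R}^\nu,1}$ the quantity $|\partial_x\Lambda|$ is at most a constant depending only on $\nu$ times $\dfrac{\|G\|+\|\partial_x G\|}{\omega_\flat\tgh(\frac{\omega_\flat t}{2})}$. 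Two inputs finish the argument. First, $\tgh u\ge\frac{u}{1+u}$ (equivalent to $e^{2u}\ge1+2u$) yields $\dfrac{t}{\omega_\flat\tgh(\frac{\omega_\flat t}{2})}\le\dfrac{2(1+\frac{\omega_\flat t}{2})}{\omega_\flat^2}\le\dfrac{2(1+\omega_\flat T)}{\omega_\flat^2}$; this produces the factor $(1+\omega_\flat T)$ and, crucially, keeps the bound finite as $t\to0^+$ (where $g_t=O(t)$ and $D_\omega^{-1}=O(t^{-1/2})$ must compensate). Second, $\|G\|+\|\partial_x G\|$ on $D_{\mathbbm{R}^\nu,1}$ is controlled, by Cauchy estimates from $D_{\mathbbm{R}^\nu,4}$ (covering $D_{\mathbbm{R}^\nu,1}$ by regions where some $x_\delta$ is bounded below, so that $\varphi(t,x,x)=x_\delta^{-1}\bigl(x_\delta\varphi(t,x,x)\bigr)$ and its needed derivatives are tamed), by the preceding lemma's bound $\max_\delta|x_\delta\varphi(t,x,x)|\le\ch(\frac{\omega_\sharp T}{2})\frac{M'_\mu}{1-4T^2M_\mu}$ (see (\ref{preparationsandra1.2}), (\ref{preparationsandra1.4})); this is what makes $M'_\mu$, rather than $M_\mu$, appear and keeps the dependence on $\ch$ linear, while $\|D_\omega\|^2\le\omega_\sharp$ supplies the factor $\omega_\sharp$. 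Collecting everything gives $|\partial_x\Lambda|\le C_\nu\,\dfrac{\omega_\sharp(1+\omega_\flat T)\ch(\frac{\omega_\sharp T}{2})M'_\mu}{\omega_\flat^3(1-4T^2M_\mu)}$, and the same chain of inequalities keeps $\|G\|$ below $\frac12\min_\upsilon(D_\omega)_{\upsilon\upsilon}^2$; choosing $\alpha_\nu$ small enough — this is precisely the content of (\ref{preparationclara10.5}) and (\ref{preparationclara16}) — the hypotheses on $\mu,\omega,T$ force $|\partial_x\Lambda|\le\frac12$ and legitimise the square root. The genuine difficulty is exactly this last step: arranging the Cauchy and square-root estimates so that the smallness condition has the stated homogeneity in $(\omega_\flat,\omega_\sharp,T,M_\mu,M'_\mu)$, in particular routing the estimate through $x_\delta\varphi$ so that it depends on $M'_\mu$ and on $\ch$ linearly.
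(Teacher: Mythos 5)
Your reduction to $\phi|_{y=x}=(D_\omega x)^2+x\cdot G(t,x)x$, the observation that $\Phi(t,0)=\partial_x\Phi(t,0)=0$ under the hypotheses (with the tree argument for $m\ge2$), and the construction of $\theta=(D_\omega+X)x$ by solving a Lyapunov-plus-quadratic equation are all sound and essentially coincide with the paper's route; the paper computes the square root of $\mathbbm{1}+R$, with $R=D_\omega^{-1}GD_\omega^{-1}$, by composing with a local analytic square-root map $\Xi$ rather than by your fixed-point iteration, but those are interchangeable.

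The gap is in your final estimate, and it is not cosmetic. First, the asserted bound
\[
|\partial_x\Lambda|\le C_\nu\,\frac{\|G\|+\|\partial_xG\|}{\omega_\flat\,\tmop{th}(\omega_\flat t/2)}
\]
drops a factor of $|x|$: from $\Lambda=D_\omega^{-1}Xx$ one has $\partial_{x_j}\Lambda=D_\omega^{-1}(\partial_{x_j}X)x+D_\omega^{-1}Xe_j$, and the first term has size $\|D_\omega^{-1}\|\,\|\partial_{x_j}X\|\,|x|$. Since $|x|$ is unbounded on $D_{\mathbbm{R}^\nu,1}$ (the real part of $x$ runs over all of $\mathbbm{R}^\nu$), a sup of $\|\partial_xG\|$ alone cannot give (\ref{preparationclara4.7}); what must be controlled, uniformly in $x$, is $\sup_{x\in D_{\mathbbm{R}^\nu,1}}|x_\delta\,\partial_{x_\gamma}R(t,x)|_\infty$. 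Second, the mechanism you propose for making $M'_\mu$ enter — covering $D_{\mathbbm{R}^\nu,1}$ by regions where some $|x_\delta|$ is bounded below and writing $\varphi=x_\delta^{-1}(x_\delta\varphi)$ before Cauchy estimates — does not work: near the origin no coordinate is bounded below, and on a region where $|x_{\delta_0}|\ge\epsilon$ the Cauchy disc in $x_{\delta_0}$ must have radius $\lesssim\epsilon$ to keep $x_{\delta_0}^{-1}$ under control, so the third-derivative bounds degenerate like $\epsilon^{-3}$. The ingredient the paper actually uses, and which your plan does not supply, is the commutator identity
\[
z_\delta\,\partial_{z_\gamma}\partial_{z_\alpha}\partial_{z_\beta}\psi
=\partial_{z_\gamma}\partial_{z_\alpha}\partial_{z_\beta}\bigl(z_\delta\psi\bigr)
-\bigl(\delta_{\alpha=\delta}\,\partial_{z_\beta}\partial_{z_\gamma}
+\delta_{\beta=\delta}\,\partial_{z_\alpha}\partial_{z_\gamma}
+\delta_{\gamma=\delta}\,\partial_{z_\alpha}\partial_{z_\beta}\bigr)\psi
\]
applied to $\psi=\varphi(t,\cdot,\cdot)$. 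After the Cauchy step (\ref{preparationclara7.1}), the first summand is bounded via (\ref{preparationsandra1.4}) — this is precisely where $M'_\mu$ and $\ch(\omega_\sharp T/2)$ enter — and the commutator terms via (\ref{preparationsandra1.2}); the factor $u$ coming from the substitution $z=ux$ in (\ref{preparationclara7}) cancels exactly against the factor $u$ produced by the chain rule, so the resulting bound on $|x_\delta\partial_{x_\gamma}R(t,x)|_\infty$ is uniform on $D_{\mathbbm{R}^\nu,1}$. Without this identity the uniform estimate (\ref{preparationclara4.7}), with the stated homogeneity in $(\omega_\flat,\omega_\sharp,T,M_\mu,M'_\mu)$, does not come out.
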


\begin{proof}
  We assume that (\ref{preparationclara4.2a}) holds and that, for some
  arbitrary $\alpha$,
  \begin{equation}
    \label{preparationclara5.31} \ch \lp{1} \frac{\omega_{\sharp} T}{2} \rp{1}
    \frac{\omega_{\sharp} (1 + \omega_{\flat} T) M'_{\mu}}{\omega_{\flat}^3 (1
    - 4 T^2 M_{\mu})} < \alpha .
  \end{equation}
  Let $(t, x) \in] 0, T [\times D_{\mathbbm{R}^{\nu}, 1}$.

  {\texspace{med}{\tmtextbf{-1-}}} First, we look for a condition on $\alpha$
  providing the existence of the function $\Lambda$ such that
  (\ref{preparationclara5.1}) holds. Let us denote $\Phi (t, x) \assign \phi
  \sve{1}{}{y = x}$ (the function $\Phi$ is well defined by Proposition
  \ref{formulaclara34}). We claim that
  \begin{equation}
    \label{preparationclara6} \Phi (t, 0) = \partial_{x_1} \Phi (t, 0) =
    \cdots = \partial_{x_{\nu}} \Phi (t, 0) = 0 \text{.}
  \end{equation}
  This fact can be proved either by checking that, for $m \geqslant 1$,
  \[ Q_m (t, 0, 0) = \partial_{x_1} Q_m (t, 0, 0) = \cdots =
     \partial_{x_{\nu}} Q_m (t, 0, 0) = 0 \]
  (notice that each tree of $\mathcal{\mathcal{T}}_m$ has at least two
  vertices of degree $1$ if $m \geqslant 2$) or by using
  \[ c (0) = \partial_{x_1} c (0) = \cdots = \partial_{x_{\nu}} c (0) = 0 \]
  and that $\phi$ satisfies (\ref{formulaclara41}).
  
  Since $\frac{\dot{\omega}}{\tmop{sh} ( \dot{\omega} t)} \lp{1} \tmop{ch} (
  \dot{\omega} t) - 1 \rp{1} = \dot{\omega} \tgh \lp{2} \frac{\dot{\omega}
  t}{2} \rp{2}$ for $\dot{\omega} \in \mathbbm{R}$ and by the Taylor formula,
  \[ \Phi (t, x) = x \cdot \lp{1} B (t, x) \cdot x \rp{1} \]
  where
  \begin{eqnarray*}
    B (t, x) \assign &  & \int_0^1 (1 - u) (\partial_z \otimes \partial_z)
    \Phi (t, z) |_{z = u x} d u\\
    = &  & D^2_{\omega} - t \int_0^1 (1 - u) (\partial_z \otimes \partial_z)
    \varphi (t, z, z) |_{z = u x} d u\\
    = &  & D_{\omega} \lp{1} \mathbbm{1} + R (t, x) \rp{1} D_{\omega}
  \end{eqnarray*}
  and
  \begin{equation}
    \label{preparationclara7} R (t, x) \assign - t D_{\omega}^{- 1} \lp{2}
    \int_0^1 (1 - u) (\partial_z \otimes \partial_z) \varphi (t, z, z) |_{z =
    u x} d u \rp{2} D_{\omega}^{- 1} .
  \end{equation}
  Let $A > 0$, let $\beta = 1, \ldots, \nu$ and let $\psi$ be an analytic
  function on $D_{\mathbbm{R}^{\nu}, A + 1}$. By the Cauchy formula
  \begin{equation}
    \label{preparationclara7.1} \underset{z \in D_{\mathbbm{R}^{\nu},
    A}}{\sup} | \partial_{z_{\beta}} \psi (z) | \leqslant \underset{z \in
    D_{\mathbbm{R}^{\nu}, A + 1}}{\sup} | \psi (z) |.
  \end{equation}
  Let $z \in D_{\mathbbm{R}^{\nu}, 1}$. Let $\beta, \gamma = 1, \ldots, \nu$.
  Then, by (\ref{preparationsandra1.2}) and (\ref{preparationclara7.1}),
  \[ | \partial_{z_{\beta}} \partial_{z_{\gamma}} \varphi (t, z, z) |
     \leqslant \frac{4 M_{\mu}}{1 - 4 T^2 M_{\mu}} . \]
  Then
  \[ \ve{1} (\partial_z \otimes \partial_z) \varphi (t, z, z) |_{z = u x}
     \sve{1}{}{\infty} \leqslant \frac{4 M_{\mu}}{1 - 4 T^2 M_{\mu}} . \]
  Since the matrix $D_{\omega}^{- 1}$ is diagonal,
  \begin{eqnarray*}
    \ve{1} R (t, x) \sve{1}{}{\infty} \leqslant &  & t \ve{1} D_{\omega}^{- 1}
    \sve{1}{2}{\infty} \times \frac{2 M_{\mu}}{1 - 4 T^2 M_{\mu}}\\
    \leqslant &  & \frac{t}{\omega_{\flat} \tgh \lp{2} \frac{\omega_{\flat}
    t}{2} \rp{2}} \times \frac{4 M_{\mu}}{1 - 4 T^2 M_{\mu}} .
  \end{eqnarray*}
  For every $\theta \in [0, + \infty [$,
  \begin{equation}
    \label{preparationclara7.5} \frac{\theta}{\tgh \theta} \leqslant 1 + 2
    \theta .
  \end{equation}
  Then
  \begin{equation}
    \label{preparationclara8} \ve{1} R (t, x) \sve{1}{}{\infty} \leqslant
    \frac{8 M_{\mu} (1 + \omega_{\flat} T)}{\omega_{\flat}^2 (1 - 4 T^2
    M_{\mu})}
  \end{equation}
  and by (\ref{preparationclara5.31})
  \begin{equation}
    \label{preparationclara8.1} \ve{1} R (t, x) \sve{1}{}{\infty} < 8 \alpha .
  \end{equation}
  For $\mathbbm{K}=\mathbbm{R}$ or $\mathbbm{C}$, let us denote by \
  $\mathcal{M}^{\mathbbm{K}}_{\tmop{sym}}$ the space of $\nu \times \nu$ \
  symmetric matrices with entries in $\mathbbm{K}$. Since the map
  $\mathcal{M}^{\mathbbm{R}}_{\tmop{sym}} \longrightarrow
  \mathcal{M}^{\mathbbm{R}}_{\tmop{sym}}, C \longmapsto C^2$ is a real
  analytic local diffeomorphism near $C = \mathbbm{1}$, one can find \
  $\mathcal{U} \subset \mathcal{M}_{\tmop{sym}}^{\mathbbm{R}}$ a neighbourhood
  of $\mathbbm{1}$ and a real analytic local diffeomorphism
  \[ \Xi : \mathcal{U} \longrightarrow \mathcal{M}^{\mathbbm{R}}_{\tmop{sym}}
  \]
  such that, denoting $S^{1 / 2} \assign \Xi (S)$, \ $\mathbbm{1}^{1 / 2} =
  \mathbbm{1}$ and
  \begin{equation}
    \label{preparationclara8.5} \text{}^{} S^{1 / 2} S^{1 / 2} = S
  \end{equation}
  for $S \in \mathcal{U}$. Let $B^{\mathbbm{C}}_{\infty} ( \mathbbm{1}, \rho)
  \subset \mathcal{M}^{\mathbbm{C}}_{\tmop{sym}}$ be the open ball of center
  $\mathbbm{1}$ and radius $\rho$ (with respect to the norm $| \cdot
  |_{\infty}$). By analytic continuation, there exists $\rho_{\sqrt{}} > 0$
  such that $\Xi$ is analytic and (\ref{preparationclara8.5}) is satisfied on
  $B^{\mathbbm{C}}_{\infty} ( \mathbbm{1}, \rho_{\sqrt{}})$. Let $\partial_C
  \Xi$ be the tangent map associated to $\Xi$ at the point $C$. Since
  $\partial_{\mathbbm{1}} \Xi \cdot H = \frac{1}{2} H$, we can choose
  $\rho_{\sqrt{}}$ small enough such that for every $S \in
  B^{\mathbbm{C}}_{\infty} ( \mathbbm{1}, \rho_{\sqrt{}})$ and every $H \in
  \mathcal{M}^{\mathbbm{C}}_{\tmop{sym}}$,
  \begin{equation}
    \label{preparationclara9} |S^{1 / 2} - \mathbbm{1} | \leqslant \nu |S -
    \mathbbm{1} |_{\infty},
  \end{equation}
  \begin{equation}
    \label{preparationclara10} | \partial_S \Xi \cdot H| \leqslant \nu
    |H|_{\infty} .
  \end{equation}
  Let us assume
  \begin{equation}
    \label{preparationclara10.5} 8 \alpha < \rho_{\sqrt{}} .
  \end{equation}
  By (\ref{preparationclara7}) and Proposition \ref{formulaclara34}, \ $R \in
  \mathcal{C}^0 \lp{1}] 0, T [,
  \mathcal{A}_{\mathcal{M}^{\mathbbm{C}}_{\tmop{sym}}} \lp{0}
  D_{\mathbbm{R}^{\nu}, 1} \rp{0} \rp{1}$ and by (\ref{preparationclara8.1})
  \[ |R (t, x) |_{\infty} < \rho_{\sqrt{}} . \]
  Then
  \[ B (t, x) = D_{\omega} \lp{1} \mathbbm{1} + R (t, x) \srp{1}{1 / 2}{}
     \lp{1} \mathbbm{1} + R (t, x) \srp{1}{1 / 2}{} D_{\omega} . \]
  Let
  \begin{equation}
    \label{preparationclara11} \theta (t, x) \assign \lp{1} \mathbbm{1} + R
    (t, x) \srp{1}{1 / 2}{} D_{\omega} x.
  \end{equation}
  Then
  \[ \phi \sve{1}{}{y = x} = x \cdot \lp{1} B (t, x) x \rp{1} = \theta^2 (t,
     x) . \]
  Let
  \[ \Lambda (t, x) \assign D_{\omega}^{- 1} \lp{1} \lp{1} \mathbbm{1} + R (t,
     x) \srp{1}{1 / 2}{} - \mathbbm{1} \rp{1} D_{\omega} x. \]
  Then (\ref{preparationclara5.1}) holds. Notice also that $\Lambda \in
  \mathcal{C}^0 \lp{1}] 0, T [, \mathcal{A}_{\mathbbm{C}^{\nu}}
  (D_{\mathbbm{R}^{\nu}, 1}) \rp{1}$.

  {\texspace{med}{\tmtextbf{-2-}}} We want to prove that
  (\ref{preparationclara4.7}) holds. We shall need an additional condition on
  $\alpha$. One has
  \[ \partial_{x_{\gamma}} \Lambda (t, x) = D_{\omega}^{- 1} \lp{1} \lp{1}
     \mathbbm{1} + R (t, x) \srp{1}{1 / 2}{} - \mathbbm{1} \rp{1} D_{\omega}
     e_{\gamma} + D_{\omega}^{- 1} \lp{2} \partial_{x_{\gamma}} \lp{1}
     \mathbbm{1} + R (t, x) \srp{1}{1 / 2}{} \rp{2} D_{\omega} x. \]
  By (\ref{preparationclara9}) and (\ref{preparationclara8})
  \begin{eqnarray*}
    \ve{1} D_{\omega}^{- 1} \lp{1} \lp{1} \mathbbm{1} + R (t, x) \srp{1}{1 /
    2}{} - \mathbbm{1} \rp{1} D_{\omega} e_{\gamma} \ve{1} \leqslant &  & \nu
    \ve{1} D_{\omega}^{- 1} \ve{1} \ve{1} R (t, x) \sve{1}{}{\infty} \ve{1}
    D_{\omega} \ve{1} |e_{\gamma} |\\
    \leqslant &  & \nu \frac{8 M_{\mu} (1 + \omega_{\flat}
    T)}{\omega_{\flat}^2 (1 - 4 T^2 M_{\mu})} \times
    \frac{\omega_{\sharp}}{\omega_{\flat}}
  \end{eqnarray*}
  since
  \[ \ve{1} D_{\omega}^{- 1} \ve{1} \ve{1} D_{\omega} \ve{1} =
     \frac{\omega_{\sharp}^{1 / 2} \tgh^{1 / 2} \lp{2} \frac{\omega_{\sharp}
     t}{2} \rp{2}}{\omega_{\flat}^{1 / 2} \tgh^{1 / 2} \lp{2}
     \frac{\omega_{\flat} t}{2} \rp{2}} \leqslant
     \frac{\omega_{\sharp}}{\omega_{\flat}} . \]
  By (\ref{preparationclara10})
  \begin{eqnarray*}
    \ve{1} D_{\omega}^{- 1} \lp{2} \partial_{x_{\gamma}} \lp{1} \mathbbm{1} +
    R (t, x) \srp{1}{1 / 2}{} \rp{2} D_{\omega} x \ve{1} \leqslant &  & \nu
    \frac{\omega_{\sharp}}{\omega_{\flat}} \ve{1} \partial_{x_{\gamma}} R (t,
    x) \sve{1}{}{\infty} |x|\\
    \leqslant &  & \nu^{3 / 2} \frac{\omega_{\sharp}}{\omega_{\flat}}
    \underset{1 \leqslant \delta \leqslant \nu}{\max} |x_{\delta}
    \partial_{x_{\gamma}} R (t, x) |_{\infty} .
  \end{eqnarray*}
  By (\ref{preparationclara7})
  \[ x_{\delta} \partial_{x_{\gamma}} R (t, x) \assign - 2 t D_{\omega}^{- 1}
     \lp{2} \int_0^1 (1 - u) z_{\delta} \partial_{z_{\gamma}} (\partial_z
     \otimes \partial_z) \varphi (t, z, z) |_{z = u x} d u \rp{2}
     D_{\omega}^{- 1} \]
  Let $z \in D_{\mathbbm{R}^{\nu}, 1}$. Since
  \[ z_{\delta} \partial_{z_{\gamma}} \partial_{z_{\alpha}}
     \partial_{z_{\beta}} \psi = \partial_{z_{\gamma}} \partial_{z_{\alpha}}
     \partial_{z_{\beta}} z_{\delta} \psi - (\delta_{\alpha = \delta}
     \partial_{z_{\beta}} \partial_{z_{\gamma}} + \delta_{\beta = \delta}
     \partial_{z_{\alpha}} \partial_{z_{\gamma}} + \delta_{\gamma = \delta}
     \partial_{z_{\alpha}} \partial_{z_{\beta}}) \psi \]
  and by (\ref{preparationsandra1.2}), (\ref{preparationsandra1.4}),
  (\ref{preparationclara7.1}), one gets
  \begin{eqnarray*}
    |z_{\delta} \partial_{z_{\gamma}} \partial_{z_{\alpha}}
    \partial_{z_{\beta}} \varphi (t, z, z) | \leqslant &  & \frac{M_{\mu}'}{1
    - 4 T^2 M_{\mu}} \ch \lp{1} \frac{\omega_{\sharp} T}{2} \rp{1} + 3 \times
    \frac{4 M_{\mu}}{1 - 4 T^2 M_{\mu}}\\
    \leqslant &  & \frac{13 M_{\mu}'}{1 - 4 T^2 M_{\mu}} \ch \lp{1}
    \frac{\omega_{\sharp} T}{2} \rp{1} .
  \end{eqnarray*}
  Then

  \[ \ve{1} x_{\delta} \partial_{x_{\gamma}} R (t, x) \sve{1}{}{\infty}
     \leqslant \frac{26 M'_{\mu} (1 + \omega_{\flat} T)}{\omega_{\flat}^2 (1 -
     4 T^2 M_{\mu})} \ch \lp{1} \frac{\omega_{\sharp} T}{2} \rp{1} . \]
  Then
  \[ \ve{1} \partial_{x_{\gamma}} \Lambda (t, x) \ve{1} \leqslant 34 \nu^{3 /
     2} \ch \lp{1} \frac{\omega_{\sharp} T}{2} \rp{1}
     \frac{\omega_{\sharp}}{\omega_{\flat}} \frac{(1 + \omega_{\flat} T)
     M'_{\mu}}{\omega_{\flat}^2 (1 - 4 T^2 M_{\mu})} \]
  and
  \[ \ve{1} \partial_x \Lambda (t, x) \ve{1} \leqslant 34 \nu^{5 / 2} \ch
     \lp{1} \frac{\omega_{\sharp} T}{2} \rp{1} \frac{\omega_{\sharp} (1 +
     \omega_{\flat} T) M'_{\mu}}{\omega_{\flat}^3 (1 - 4 T^2 M_{\mu})} . \]
  Let us assume that
  \begin{equation}
    \label{preparationclara16} 34 \nu^{5 / 2} \alpha < \frac{1}{2} .
  \end{equation}
  Then, by (\ref{preparationclara5.31}), $\ve{1} \partial_x \Lambda (t, x)
  \ve{1} < \frac{1}{2}$. This proves (\ref{preparationclara4.7}).
\end{proof}

The following proposition gives a Borel summability property concerning
$\tmop{the} \tmop{function} w$ (see Proposition \ref{formulaclara34} for the
definition of this function).

\begin{proposition}
  \label{preparationclara20}Let $\omega_1, \ldots, \omega_{\nu}, T, R,
  \varepsilon > 0$. Let $\mu$ be a $\mathbbm{C}$-valued Borel measure defined
  on $\mathbbm{R}^{\nu}$ such that
  \begin{equation}
    \label{preparationclara21} 4 T^2 e^T M_{\mu, \varepsilon} < 1.
  \end{equation}
  Then there exist $\kappa, K, K_1 > 0$ and a function $\hat{W} \in
  \mathcal{C}^0 \lp{1}] 0, T [, \mathcal{A}(D^2_{\mathbbm{R}^{\nu}, 1 / 2}
  \times S_{\kappa}) \rp{1}$ satisfying, \ for every $(t, x, y) \in] 0, T
  [\times D^2_{\mathbbm{R}^{\nu}, 1 / 2}$,
  \begin{equation}
    \label{preparationclara21.2} \forall \sigma \in S_{\kappa} \text{, } |
    \text{$\hat{W}$} (t, x, y, \sigma) | \leqslant K_1 e^{K| \sigma |^{1 / 2}}
  \end{equation}
  and
  \begin{equation}
    \label{preparationclara21.4} \forall h \in \demiplan{} \text{, } \exp
    \lp{1} w (t, x, y, h) \rp{1} = \int_0^{+ \infty} e^{- \sigma / h} \hat{W}
    (t, x, y, \sigma) \frac{d \sigma}{h} .
  \end{equation}
\end{proposition}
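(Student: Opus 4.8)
The plan is to follow the method of [Ha4]: represent each $w_m$ explicitly as a Laplace transform with a Bessel kernel, sum over $m$ (hypothesis (\ref{preparationclara21}) forcing convergence on a parabolic region), and then deduce the claim for $e^w$ from the stability of Borel-summable symbols under the exponential. First I would establish, for $\Omega\geqslant0$ and $h\in\mathbbm{C}^+$, the identity
\[
\int_0^1\Omega\,e^{-h\vartheta\Omega}\,d\vartheta
=\int_0^{+\infty}e^{-\sigma/h}\,\Omega\,\mathcal J(\sigma\Omega)\,\frac{d\sigma}{h},
\qquad
\mathcal J(u)\assign\sum_{r\geqslant0}\frac{(-u)^r}{r!\,(r+1)!}=\frac{J_1(2\sqrt u)}{\sqrt u},
\]
checked by expanding both sides in powers of $\Omega$. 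Plugging it in at $\Omega=\Omega_t^{g,\theta} . \xi\otimes_m\xi$ (resp. $\Omega=\Omega_t . \xi\otimes_1\xi$ if $m=1$), which is $\geqslant0$ for real $t\geqslant0$ and real $\xi$ by Lemma \ref{formulaclara10}, and exchanging the $\sigma$-integral with the others (legitimate since $|\Omega\mathcal J(\sigma\Omega)|\leqslant\Omega$ for $\sigma\geqslant0$, the remaining factors being integrable), one gets $w_m(t,x,y,h)=\int_0^{+\infty}e^{-\sigma/h}\hat w_m(t,x,y,\sigma)\,d\sigma/h$ for $h\in\mathbbm{C}^+$, where $\hat w_m$ is the same iterated integral as $w_m$ with the factor $\Omega_t^{g,\theta} . \xi\otimes_m\xi\,e^{-h\vartheta\Omega_t^{g,\theta} . \xi\otimes_m\xi}\,d\vartheta$ replaced by $\Omega_t^{g,\theta} . \xi\otimes_m\xi\;\mathcal J\bigl(\sigma\,\Omega_t^{g,\theta} . \xi\otimes_m\xi\bigr)$. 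Since $\mathcal J$ is entire and the only dependence of the integrand on $(x,y)$ is the analytic factor $e^{iq_t^m(s)\cdot\xi}$, dominated convergence shows that each $\hat w_m$ is continuous in $t$ and analytic in $(x,y,\sigma)$ on $D^2_{\mathbbm{R}^\nu,1/2}\times\mathbbm{C}$.

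Next I would estimate $\hat w_m$ uniformly on $]0,T[\times D^2_{\mathbbm{R}^\nu,1/2}\times S_\kappa$. From the standard bound $|J_1(z)|\leqslant\frac{|z|}{2}e^{|\mathcal{I}m z|}$ one gets $|\mathcal J(u)|\leqslant e^{2|\mathcal{I}m\sqrt u|}$, and for $\Omega\geqslant0$, $\sigma\in S_\kappa$ one has $|\mathcal{I}m\sqrt{\sigma\Omega}|=\sqrt\Omega\,|\mathcal{I}m\sqrt\sigma|<\sqrt{\kappa\Omega}$, hence $|\Omega\mathcal J(\sigma\Omega)|\leqslant\Omega\,e^{2\sqrt{\kappa\Omega}}$ on $S_\kappa$. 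Using $\Omega_t^{g,\theta} . \xi\otimes_m\xi\leqslant\frac{mt}{4}(\xi_1^2+\cdots+\xi_m^2)$ (Lemma \ref{formulaclara10}), $|\Upsilon^{g,\xi}_t|\leqslant4^{-(m-1)}\prod_{[j,k]\in g}|\xi_j||\xi_k|$ (via (\ref{formulaclara13})), $|e^{iq_t^m(s)\cdot\xi}|\leqslant e^{|\xi|_1}$ on $D^2_{\mathbbm{R}^\nu,1/2}$ (by (\ref{formulaclara16.b})), the elementary inequality $2\sqrt{\tfrac{\kappa mt}{4}(\xi_1^2+\cdots+\xi_m^2)}\leqslant\varepsilon(\xi_1^2+\cdots+\xi_m^2)+\tfrac{\kappa mT}{4\varepsilon}$ (which transfers the Bessel growth onto the Gaussian weight defining $M_{\mu,\varepsilon}$), and the tree count $\sum_{g\in\mathcal T_m}d_1!\cdots d_m!\leqslant(m-2)!\,2^{3m-3}$ already used in the previous lemma, one arrives at
\[
\sup_{]0,T[\times D^2_{\mathbbm{R}^\nu,1/2}\times S_\kappa}|\hat w_m|
\;\leqslant\;P(m)\,\bigl(4T^2M_{\mu,\varepsilon}\,e^{\kappa T/(4\varepsilon)}\bigr)^m
\]
for some fixed polynomial $P$. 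Choosing $\kappa\leqslant4\varepsilon$, hypothesis (\ref{preparationclara21}) gives $4T^2M_{\mu,\varepsilon}e^{\kappa T/(4\varepsilon)}\leqslant4T^2e^TM_{\mu,\varepsilon}<1$, so $\hat w\assign\sum_{m\geqslant1}\hat w_m$ converges uniformly; it is therefore continuous in $t$, analytic and bounded on $D^2_{\mathbbm{R}^\nu,1/2}\times S_\kappa$, and $w=\int_0^{+\infty}e^{-\sigma/h}\hat w(\cdot,\sigma)\,d\sigma/h$ on $\mathbbm{C}^+$ (dominating the $m$-series and the $\sigma$-integral simultaneously). Thus $w$ is the Laplace transform of its Borel transform $\hat w$, which is a bounded summable symbol.

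Finally I would pass from $w=\int_0^{+\infty}e^{-\sigma/h}\hat w\,d\sigma/h$ to $e^w$. Since $S_\kappa$ is convex and contains $0$, the segment $[0,\sigma]$ lies in $S_\kappa$ for every $\sigma\in S_\kappa$, so iterated convolutions of functions analytic on $S_\kappa$ remain analytic on $S_\kappa$. The standard stability of Borel-summable symbols under products and under $z\mapsto e^z$ (see [Ha4]) then applies: writing $e^w=\sum_{n\geqslant0}w^n/n!$, the Borel transform $\widehat{w^n}$ is controlled on $S_\kappa$ by the $n$-fold convolution of $\hat w$, which for $|\hat w|\leqslant C$ obeys $|\widehat{w^n}|\leqslant C^n|\sigma|^{n-1}/(n-1)!$ (up to the usual Cauchy adjustment tied to the convolution convention); hence $\hat W\assign\widehat{e^{w}}=\sum_{n\geqslant0}\widehat{w^n}/n!$ converges, is continuous in $t$ and analytic in $(x,y,\sigma)$ on $D^2_{\mathbbm{R}^\nu,1/2}\times S_\kappa$, and $|\hat W(t,x,y,\sigma)|\leqslant K_1e^{K|\sigma|^{1/2}}$, the $|\sigma|^{1/2}$-growth being precisely that of $\sum_n(C|\sigma|)^n/(n!(n-1)!)$, a modified Bessel function of argument $2\sqrt{C|\sigma|}$. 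This yields (\ref{preparationclara21.4}) and proves the proposition; the resulting $\kappa,K,K_1$ depend only on $T,\varepsilon,\omega_1,\ldots,\omega_\nu$.

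The crux is the uniform-in-$m$ estimate of the second step: it works over the whole parabola $S_\kappa$, not merely over a disc, only because there the Bessel kernel grows like $e^{2\sqrt{\kappa\Omega}}$ with $\kappa$ a fixed small constant rather than like $e^{2\sqrt{|\sigma|\Omega}}$; absorbing this into the tail $e^{\varepsilon\xi^2}$ of $M_{\mu,\varepsilon}$ costs only the factor $e^{\kappa mT/(4\varepsilon)}$ per order $m$, which is tamed by (\ref{preparationclara21}) exactly when $\kappa\leqslant4\varepsilon$. Everything else -- the Fubini exchange, the analyticity in $(x,y,\sigma)$, the continuity in $t$, and the passage to $e^w$ -- is routine given the deformation formula, Lemma \ref{formulaclara10}, and the Borel-summation apparatus of [Ha4].
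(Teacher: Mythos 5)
Your argument follows essentially the same route as the paper's proof: (i) identify the Borel transform of each $w_m$ through a Bessel-type kernel, (ii) absorb the Bessel growth into the $e^{\varepsilon\xi^2}$ weight defining $M_{\mu,\varepsilon}$ so that the tree bounds and (\ref{preparationclara21}) make the $m$-series converge geometrically on a parabola whose aperture is tied to $\varepsilon$, and (iii) pass from $\hat w$ to $\hat W$ by the exponential stability of Borel-summable symbols. The only minor variations in (i)--(ii) are cosmetic: you integrate out $\vartheta$ first, obtaining $\mathcal J(u)=J_1(2\sqrt u)/\sqrt u$ in place of the paper's $J(u)=\sum(-1)^n u^n/(n!)^2$ left under the $\vartheta$-integral, and you take $\kappa\leqslant 4\varepsilon$ where the paper fixes $\kappa=\varepsilon/2$.

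Step (iii) is where you should tighten. With the $\tfrac{1}{h}$-normalized Laplace transform used throughout, the Borel transform of a product is $\partial_\sigma$ of the convolution, so one needs a bound on $\partial_\sigma\hat w$, not merely on $\hat w$; your parenthetical ``(up to the usual Cauchy adjustment tied to the convolution convention)'' points at this but does not build in the room needed to make it rigorous. The paper handles it explicitly: it first proves $\hat w$ analytic and bounded on the larger parabola $S_{2\kappa}$, uses $S_\kappa + D_\rho\subset S_{2\kappa}$ together with the Cauchy formula to bound $\partial_\sigma\hat w$ on $S_\kappa$, and then invokes its appendix lemma (Proposition \ref{expconvmaria2}), which is precisely the statement on exponentials of Borel-summable series that yields the $e^{K|\sigma|^{1/2}}$ growth via the $\sum_n (K_2|\sigma|)^n/(n!)^2$ Bessel-type sum. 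If you first establish your bound on $S_{2\kappa}$ and then restrict to $S_\kappa$ (your constraint $\kappa\leqslant 4\varepsilon$ becomes $2\kappa\leqslant 4\varepsilon$, say), your proof closes without further change.
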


\begin{proof}

  {\texspace{small}{\tmtextbf{-1a-}}} We first prove that there exist $\kappa,
  K_2 > 0$ and a function
  \[ \hat{w} \in \mathcal{C}^0 \lp{1}] 0, T [,
     \mathcal{A}(D^2_{\mathbbm{R}^{\nu}, 1 / 2} \times S_{2 \kappa}) \rp{1} \]
  such that, for $(t, x, y, \sigma) \in] 0, T [\times D_{\mathbbm{R}^{\nu}, 1
  / 2}^2 \times S_{2 \kappa}$,
  \begin{equation}
    \label{preparationclara22} | \text{$\hat{w}$} (t, x, y, \sigma) |
    \leqslant K_2
  \end{equation}
  and, for $h \in \demiplan{}$,
  \begin{equation}
    \label{preparationclara23} w (t, x, y, h) = \int_0^{+ \infty} e^{- \sigma
    / h} \widehat{w_{}} (t, x, y, \sigma) \frac{d \sigma}{h} .
  \end{equation}
  We proceed as in [Ha4]: for $B \geqslant 0$, the Borel transform of the
  function $h \longrightarrow e^{- B h}$ is $\sigma \longrightarrow J (B
  \sigma)$ where
  \[ J (z) \assign \sum_{n \geqslant 0} (- 1)^n \frac{z^n}{(n!)^2} =
     \int_0^{\pi} \cos \lp{1} 2 z^{1 / 2} \sin (\varphi) \rp{1} \frac{d
     \varphi}{\pi} . \]
  Therefore, let
  \[ \hat{w}_1 (t, x, y, \sigma) \assign - t \int_{0 < s_1 < 1}
     \int_{\mathbbm{R}^{\nu}} \int_0^1 \Omega_t . \xi \otimes_1 \xi J \lp{1}
     (\vartheta \Omega_t . \xi \otimes_1 \xi) \sigma \rp{1} e^{i q_t (s_1)
     \cdot \xi_1} d \vartheta d \mu (\xi_1) d s_1 \]
  and for $m \geqslant 2$
  \[ \hat{F}_m (t, x, y, \sigma) \assign 2^{m - 1} t^{2 m - 1} \Upsilon^{g,
     \xi}_t \Omega_t^{g, \theta} . \xi \otimes_m \xi J \lp{1} (\vartheta
     \Omega_t^{g, \theta} . \xi \otimes_m \xi) \sigma \rp{1} e^{i q_t^m (s)
     \cdot \xi}, \]
  \[ \hat{w}_m (t, x, y, \sigma) \assign (- 1)^m \sum_{g \in
     \mathcal{\mathcal{T}}_m} \int_{0 < s_1 < \ldots < s_m < 1}
     \int_{\mathbbm{R}^{\nu m}} \int_{[0, 1]^g \times [0, 1]} \]
  \[ \hat{F}_m (t, x, y, \sigma) d \vartheta d^{m - 1} \theta d^{\nu m}
     \mu^{\otimes} (\xi) d^m s \]
  (we shall check later that these integrals are convergent). Let us choose
  \begin{equation}
    \label{preparationclara23.5} \kappa = \frac{\varepsilon}{2} .
  \end{equation}

  {\texspace{med}{\tmtextbf{-1b-}}} We claim that, for $m \geqslant 1$,
  $\hat{w}_m \in \mathcal{C}^0 \lp{1}] 0, T [, \mathcal{A} \lp{0}
  D^2_{\mathbbm{R}^{\nu}, 1 / 2} \times S_{2 \kappa} \rp{0} \rp{1}$,
  \begin{equation}
    \label{preparationclara24} | \hat{w}_m (t, x, y, \sigma) | \leqslant m (4
    T^2 e^T M_{\mu, \varepsilon})^m,
  \end{equation}
  and for $h \in \demiplan{}$
  \begin{equation}
    \label{preparationclara25} w_m (t, x, y, h) = \int_0^{+ \infty} e^{-
    \sigma / h} \hat{w}_m (t, x, y, \sigma) \frac{d \sigma}{h} .
  \end{equation}
  We shall use the following estimate (see [Ha4]). For every $B \geqslant 0$
  \begin{equation}
    \label{preparationclara26} |J (B \sigma) | \leqslant \exp \lp{1} 2 B^{1 /
    2} | \mathcal{I} m \sigma^{1 / 2} | \rp{1} .
  \end{equation}
  In the sequel, we assume that $x, y \in D_{\mathbbm{R}^{\nu}, 1 / 2}$, $t
  \in] 0, T]$ and $\sigma \in S_{2 \kappa}$.
  
  \tmtextbf{{\texspace{med}{\tmtextbf{-1c-}}}} Let $m = 1$. Since $\sigma \in
  S_{2 \kappa}$ and by (\ref{preparationclara23.5}),
  \begin{equation}
    \label{preparationclara28} | \im \sigma^{1 / 2} | < \varepsilon^{1 / 2} .
  \end{equation}
  By (\ref{formulaclara12a}), $\Omega_t . \xi \otimes_1 \xi \geqslant 0$. By
  (\ref{formulaclara16.b}), (\ref{preparationclara26}) and
  (\ref{formulaclara12a}) again
  \begin{eqnarray*}
    \ve{1} \Omega_t . \xi \otimes_1 \xi J \lp{1} (\vartheta \Omega_t . \xi
    \otimes_1 \xi) \sigma \rp{1} e^{i q_t (s_1) \cdot \xi_1} \ve{1} \leqslant
    &  & \frac{t}{4} \xi_1^2 e^{\varepsilon^{1 / 2} t^{1 / 2} | \xi_1 |} e^{|
    \xi_1 |}\\
    \leqslant &  & \frac{T}{2} e^T e^{\varepsilon \xi_1^2 + 2| \xi_1 |}
  \end{eqnarray*}
  since $\frac{\xi_1^2}{2} \leqslant e^{| \xi_1 |}$ and $\varepsilon^{1 / 2}
  t^{1 / 2} | \xi_1 | \leqslant t + \varepsilon \xi_1^2$ . Then \ $\hat{w}_1
  (t, x, y, \sigma)$ is well defined, the regularity claim holds by the
  convergence dominated theorem and (\ref{preparationclara24}) is satisfied
  for $m = 1$. Since the Laplace transform of the function $\sigma \longmapsto
  J (B \sigma)$ is the function $h \longmapsto e^{- B h}$, one gets
  (\ref{preparationclara25}) for $m = 1$ by Fubini theorem.
  
  {\texspace{med}{\tmtextbf{-1d-}}} Let $m \geqslant 2$. By
  (\ref{formulaclara12b}), (\ref{preparationclara26}) and
  (\ref{preparationclara28})
  \begin{eqnarray*}
    \ve{1} J \lp{1} (\vartheta \Omega_t^{g, \theta} . \xi \otimes_m \xi)
    \sigma \rp{1} \ve{1} \leqslant &  & \exp \lp{1} \vartheta^{1 / 2} (m t)^{1
    / 2} \varepsilon^{1 / 2} \lp{1} \xi_1^2 + \cdots + \xi_m^2 \rp{1}^{1 / 2}
    \rp{1}\\
    \leqslant &  & e^{m t} e^{\varepsilon \lp{1} \xi_1^2 + \cdots + \xi_m^2
    \rp{1}} .
  \end{eqnarray*}
  Therefore, by (\ref{formulaclara16.b}), (\ref{formulaclara13}) and
  (\ref{formulaclara12b}),
  \[ | \hat{F}_m (t, x, y, \sigma) | \leqslant \frac{2^{m - 1} t^{2 m -
     1}}{4^{m - 1}}  \prod_{[ \bar{j}, \bar{k}] \in g} | \xi_{\bar{j}} | |
     \xi_{\bar{k}} | \times \frac{m t}{4} \lp{1} \xi_1^2 + \cdots + \xi_m^2
     \rp{1} \times \]
  \[ e^{m t} e^{\varepsilon \lp{1} \xi_1^2 + \cdots + \xi_m^2 \rp{1}} \times
     e^{| \xi_1 | + \cdots + | \xi_m |} . \]
  Then, by the dominated convergence theorem, our regularity claim holds.
  Moreover, by mimicking the proof of Lemma \ref{formulaclara18},
  \[ | \hat{w}_m (t, x, y, \sigma) | \leqslant \frac{2 t^{2 m - 1}}{m (m - 1)}
     \sum_{d_1 + \cdots + d_m = 2 m - 2} \int_{\mathbbm{R}^{\nu m}} \int_{[0,
     1]^g \times [0, 1]} \frac{| \xi_1 |^{d_1}}{d_1 !} \cdots \frac{| \xi_m
     |^{d_m}}{d_m !} \times \]
  \[ \frac{m t}{4} \lp{1} \xi_1^2 + \cdots + \xi_m^2 \rp{1} e^{m t}
     e^{\varepsilon \lp{1} \xi_1^2 + \cdots + \xi_m^2 \rp{1}} e^{| \xi |_1} d
     \vartheta d^{m - 1} \theta d^{\nu m} | \mu |^{\otimes} (\xi), \]
  which yields, by (\ref{formulaclara17.4}) and (\ref{formulaclara20.3}),
  \[ | \hat{w}_m (t, x, y, \sigma) | \leqslant \frac{2^{2 m - 3}}{2 (m - 1)}
     (3 m - 2) (2 m) t^{2 m} e^{m t} M^m_{\mu, \varepsilon} . \]
  This proves (\ref{preparationclara24}). Since the Laplace transform of the
  function $\sigma \longmapsto J (B \sigma)$ is the function $h \longmapsto
  e^{- B h}$, one gets (\ref{preparationclara25}).
  
  {\texspace{med}{\tmtextbf{-1e-}}} Let us assume that
  (\ref{preparationclara21}) holds. By (\ref{preparationclara24}), the
  function $\hat{w}$ defined by
  \[ \hat{w} (t, x, y, \sigma) = \sum_{m \geqslant 1} \hat{w}_m (t, x, y,
     \sigma) \]
  belongs to the space $\mathcal{C}^0 \lp{1}] 0, T [, \mathcal{A} \lp{0}
  D_{\mathbbm{R}^{\nu}, 1 / 2}^2 \times S_{2 \kappa} \rp{0} \rp{1}$ and
  \[ | \hat{w} (t, x, y, \sigma) | \leqslant \frac{4 T^2 e^T M_{\mu,
     \varepsilon}}{\lp{1} 1 - 4 T^2 e^T M_{\mu, \varepsilon} \srp{1}{2}{}} .
  \]
  Therefore (\ref{preparationclara22}) holds and, by
  (\ref{preparationclara25}), (\ref{preparationclara23}) is also satisfied.
  
  {\texspace{med}{\tmtextbf{-2-}}} Since there exists $\rho > 0$ such that
  $S_{\kappa} + D_{\rho} \subset S_{2 \kappa}$ and by the Cauchy formula, the
  function $\partial_{\sigma} \hat{w}$ is bounded on $] 0, T [\times
  D_{\mathbbm{R}^{\nu}, 1 / 2}^2 \times S_{\kappa}$. Then, by a
  parameter-dependent version of Proposition \ref{expconvmaria2} (see
  Appendix), there exists a function $\hat{W} \in \mathcal{C}^0 \lp{1}] 0, T
  [, \mathcal{A}(D^2_{\mathbbm{R}^{\nu}, 1 / 2} \times S_{\kappa}) \rp{1}$
  satisfying (\ref{preparationclara21.2}) and (\ref{preparationclara21.4}).
\end{proof}

\section{\label{nevflora}A Gaussian Borel summation statement}

\subsection{A multidimensional statement}

\begin{definition}
  Let $R, K > 0$. Let $\mathcal{N}_{R, K}^{\nu}$ be the set of analytic
  functions on $D_{\mathbbm{R}^{\nu}, \sqrt{R}} \times D_{\mathbbm{R}^+, R}$
  satisfying
  \[ a \in \mathcal{N}^{\nu}_{R, K} \Leftrightarrow \]
  \[ \forall (R', K') \in] 0, R [\times] K, + \infty [ \text{, } \exists C > 0
     \text{, } \forall (x, \sigma) \in D_{\mathbbm{R}^{\nu}, \sqrt{R'}} \times
     D_{\mathbbm{R}^+, R'}, \]
  \begin{equation}
    \label{nevflora0.41} |a (x, \sigma) | \leqslant C e^{K' (|x|^2 + | \sigma
    |)} .
  \end{equation}
\end{definition}

\begin{definition}
  \label{nevclara2}Let $R, K > 0$. Let $\hat{\mathcal{N}}_{R, K}$ be the set
  of analytic functions on $D_{\mathbbm{R}^+, R}$ satisfying
  \[ \hat{c} \in \hat{\mathcal{N}}_{R, K} \Leftrightarrow \forall (R', K')
     \in] 0, R [\times] K, + \infty [ \text{, } \exists C > 0 \text{, }
     \forall \tau \in D_{\mathbbm{R}^+, R'}, \]
  \begin{equation}
    \label{nevflora0.42} | \hat{c} (\tau) | \leqslant C e^{K' | \tau |} .
  \end{equation}
\end{definition}

Notice that the space $\hat{\mathcal{N}}_{R, K}$ is invariant under the
transformations $\hat{c} \mapsto \partial_{\tau} \hat{c}, \tau \hat{c}$.

\begin{proposition}
  \label{nevflora0.1}{\tmdummy}
  
  \begin{enumerate}
    \item Let $R, K > 0$. For every $a \in \mathcal{N}^{\nu}_{R, K}$, there
    exists $\hat{b} \in \hat{\mathcal{N}}_{R, K}$ such that, for $h \in
    \mathbbm{C}$, $\re \lp{1} \frac{1}{h} \rp{1} > K$,
    \begin{equation}
      \label{nevflora0.4} \int_{\mathbbm{R}^{\nu}} \int_0^{+ \infty} e^{- (x^2
      + \sigma) / h} a (x, \sigma) \frac{d^{\nu} x d \sigma}{h^{1 + \nu / 2}}
      = \int_0^{+ \infty} \hat{b} (\tau) e^{- \tau / h} \frac{d \tau}{h} .
    \end{equation}
    \item Let us assume that there exist $\kappa', C', K' > 0$ such that the
    function $a$ is analytic on $D_{\mathbbm{R}^{\nu}, \sqrt{\kappa'}} \times
    S_{\kappa'}$ and
    \begin{equation}
      \label{nevflora0.53} \forall (x, \sigma) \in D_{\mathbbm{R}^{\nu},
      \sqrt{\kappa'}} \times S_{\kappa'} \text{, } |a (x, \sigma) | \leqslant
      C' e^{K' | \sigma |^{1 / 2}} .
    \end{equation}
    Then the function $\hat{b}$ is analytic on $S_{\kappa'}$, for every $K'' >
    K'$ and $\kappa'' < \kappa'$, there exists $C'' > 0$ such that
    \begin{equation}
      \label{nevflora0.54} \forall \tau \in S_{\kappa''} \text{, } | \hat{b}
      (\tau) | \leqslant C'' e^{K'' | \tau |^{1 / 2}}
    \end{equation}
    and (\ref{nevflora0.4}) holds for $h \in \mathbbm{C}^+$.
  \end{enumerate}
\end{proposition}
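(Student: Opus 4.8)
The plan is to collapse the two integrations on the left of (\ref{nevflora0.4}) into a single Laplace integral in a variable $\tau$ by means of the substitution $\tau := x^2 + \sigma$, at the price of a spurious factor $h^{-\nu/2}$, and then to absorb that factor by a fractional differentiation in the Borel variable. The operator $a \mapsto \hat b$ obtained this way is a hypergeometric vection transform, and the real obstacle --- the holomorphic continuation of the intermediate function to the complex tube $D_{\mathbbm{R}^+, R}$ (part 1), resp. to the parabola $S_{\kappa'}$ (part 2), together with the preservation of its (sub)exponential growth through the fractional-derivative step --- is precisely the content of Proposition \ref{nevalclara1}, which I would use as a black box. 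Everything else is soft: Fubini, together with bookkeeping with the Gamma function.

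\textbf{The identity (part 1).} Fix $R' < R$ and $K < K' < \re \lp{1} \tfrac1h \rp{1}$. Since $x^2 = |x|^2 \geqslant 0$ for $x \in \mathbbm{R}^\nu$ and $|a(x,\sigma)| \leqslant C e^{K'(|x|^2+\sigma)}$ on $\mathbbm{R}^\nu \times \mathbbm{R}^+$, the integrand on the left of (\ref{nevflora0.4}) is bounded by $C e^{-(\re(1/h)-K')(x^2+\sigma)}$ and is absolutely integrable. For fixed $x$ the change of variable $\sigma \mapsto \tau = x^2 + \sigma$ followed by Fubini's theorem give, with $\Psi(\tau) := \int_{\{x \in \mathbbm{R}^\nu:\, |x|^2 \leqslant \tau\}} a(x, \tau - x^2)\, d^\nu x$,
\[ \int_{\mathbbm{R}^\nu} \int_0^{+\infty} e^{-(x^2+\sigma)/h} a(x,\sigma) \frac{d^\nu x\, d\sigma}{h^{1+\nu/2}} = h^{-\nu/2} \int_0^{+\infty} e^{-\tau/h}\, \Psi(\tau)\, \frac{d\tau}{h} = h^{-\nu/2} L[\Psi](h), \]
$L$ being the Laplace transform (\ref{venus7}). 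Polar coordinates and the substitution $\eta = |x|^2$ turn this into $\Psi(\tau) = \tfrac12 \int_0^\tau \hat a(\eta, \tau-\eta)\, \eta^{\nu/2-1}\, d\eta$, where $\hat a$ is defined by $\int_{S^{\nu-1}} a(r\omega,\sigma)\, d\omega = \hat a(r^2,\sigma)$; this $\hat a$ is genuinely a holomorphic function of $(\eta,\sigma)$ on a neighbourhood of $\mathbbm{R}^+ \times \mathbbm{R}^+$ because the angular average of $a$ is an even holomorphic function of the radius. Expanding $\hat a$ near $(0,0)$ and integrating termwise against $\eta^{\nu/2-1}$ with the Beta function gives $\Psi(\tau) = \sum_{n\geqslant 0} d_n \tau^{n+\nu/2}$; so $\Psi$ vanishes at the origin like $\tau^{\nu/2}$. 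Set $\hat b := D^{\nu/2}\Psi$, the Riemann--Liouville fractional derivative of order $\nu/2$ (i.e. $D^{m} I^{m-\nu/2}\Psi$ with $m = \lceil \nu/2 \rceil$). The vanishing of $\Psi$ at $0$ kills all boundary terms, and the identities $L[\tau^{n+\nu/2}](h) = \Gamma(n+\tfrac\nu2+1)\, h^{n+\nu/2}$ and $D^{\nu/2}\tau^{n+\nu/2} = \frac{\Gamma(n+\nu/2+1)}{\Gamma(n+1)}\tau^n$ yield $h^{-\nu/2} L[\Psi](h) = L[\hat b](h)$, which is exactly (\ref{nevflora0.4}).

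\textbf{Holomorphy and the bound (part 1).} From $|a(x,\sigma)| \leqslant C e^{K'(|x|^2+|\sigma|)}$ one gets $|a(x,\tau-x^2)| \leqslant C e^{K'|\tau|}$ when $|x|^2 \leqslant |\tau|$, hence $|\Psi(\tau)| \leqslant C' |\tau|^{\nu/2} e^{K'|\tau|}$ for $\tau > 0$; and $\Psi$ continues holomorphically to $\tau \in D_{\mathbbm{R}^+, R'}$ by deforming the $\eta$-path inside the joint holomorphy domain of $\hat a$, the same bound surviving up to a constant. The passage $\hat a \mapsto \hat b = D^{\nu/2}\Psi$ thus consists of an $\eta^{\nu/2-1}$-weighted diagonal convolution (a fractional integral, which multiplies exponential type by a polynomial factor) followed by finitely many holomorphic derivatives, the latter estimated by Cauchy's inequality on small discs inside $D_{\mathbbm{R}^+, R'}$; this is a hypergeometric vection transform, and Proposition \ref{nevalclara1} asserts that it maps $\hat{\mathcal{N}}_{R,K}$ into itself. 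Letting $R' \uparrow R$ and $K' \downarrow K$ gives $\hat b \in \hat{\mathcal{N}}_{R,K}$ (Definition \ref{nevclara2}).

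\textbf{Part 2.} One repeats the construction over $D_{\mathbbm{R}^\nu,\sqrt{\kappa'}} \times S_{\kappa'}$ with the sub-exponential bound (\ref{nevflora0.53}). The only new ingredient is a geometric lemma on parabolas: for $\kappa'' < \kappa'$ and $\tau \in S_{\kappa''}$ there is a path from $0$ to $\tau$ inside $S_{\kappa'}$ along which $\tau - \eta$ stays in $S_{\kappa'}$ as well; integrating $\eta \mapsto \hat a(\eta,\tau-\eta)\,\eta^{\nu/2-1}$ along it and using $|\hat a(\eta,\tau-\eta)| \leqslant C' e^{K'|\tau-\eta|^{1/2}}$ yields $|\Psi(\tau)| \leqslant C'' |\tau|^{\nu/2} e^{K'|\tau|^{1/2}}$ on $S_{\kappa''}$, whence, by the Cauchy estimates for $D^{\nu/2}$, the bound (\ref{nevflora0.54}) for $\hat b$ on $S_{\kappa''}$ for any $K'' > K'$. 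Finally, under (\ref{nevflora0.53}) the left integrand in (\ref{nevflora0.4}) is dominated on $\mathbbm{R}^\nu \times \mathbbm{R}^+$ by $C' e^{-(x^2+\sigma)\re(1/h)} e^{K'\sigma^{1/2}}$, integrable for every $h \in \demiplan{}$, and since $S_{\kappa''} \supset \mathbbm{R}^+$ and $\hat b$ has at most $e^{K''|\tau|^{1/2}}$ growth there, $L[\hat b](h)$ also converges for every $h \in \demiplan{}$. Both sides of (\ref{nevflora0.4}) are holomorphic in $h$ on $\demiplan{}$ and coincide on $\{\re(1/h) > K'\}$ by the first part, hence on all of $\demiplan{}$.
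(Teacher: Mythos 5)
Your proof is correct in substance, but it takes a genuinely different route from the paper. The paper peels off the Gaussian variables one at a time: each $\int_{\mathbbm{R}} e^{-x_i^2/h}(\cdots)\,d x_i/h^{1/2}$ is converted to a standard Laplace integral in a new variable $\tau_i$ via the $\gamma=\tfrac12$ hypergeometric vection transform of Proposition \ref{nevflora0.44}, leading after $\nu$ iterations to a $(\nu+1)$-fold Laplace integral in $\tau_1,\ldots,\tau_\nu,\sigma$; passing to $\tau=\tau_1+\cdots+\tau_\nu+\sigma$ and inverting the resulting $(\nu+1)$-fold convolution gives $\hat b(\tau)=\partial_\tau^\nu\bigl(\tau^\nu\int_{u_1+\cdots+u_{\nu+1}=1}\cdots\bigr)$. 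You instead perform the single substitution $\tau=x^2+\sigma$, reduce to the radial average $\Psi(\tau)=\tfrac12\int_0^\tau\eta^{\nu/2-1}\hat a(\eta,\tau-\eta)\,d\eta$, observe the $\tau^{\nu/2}$ vanishing, and cancel $h^{-\nu/2}$ with a single Riemann--Liouville derivative $D^{\nu/2}=D^m I^{m-\nu/2}$. Your argument is more compact and handles the radial geometry in one stroke, at the cost of having to verify the evenness/analyticity of the angular average in $\eta=r^2$ and the contour estimates for the fractional integral (both of which you do correctly, via the straight-line contour $\eta=t\tau$ inside $S_{R'}\times D_{\mathbbm{R}^+,R'}$, resp.\ $S_{\kappa'}\times S_{\kappa'}$, plus Cauchy estimates using $D_{\mathbbm{R}^+,R'}+D_\rho\subset D_{\mathbbm{R}^+,R}$ and $S_{\kappa''}+D_\rho\subset S_{\kappa'}$). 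One small misattribution worth flagging: you invoke Proposition \ref{nevalclara1} ``as a black box,'' but that proposition only treats $\gamma\in(0,1)$; the operator you actually need, $D^{\nu/2}$ for $\nu\geqslant 2$, falls outside its stated range (the corresponding kernel $(1-u)^{-\gamma}$ would not be integrable). Since you supply the direct verification (fractional integral bound $|I^{m-\nu/2}\Psi(\tau)|\lesssim|\tau|^m e^{K'|\tau|^{1/2}}$ followed by $m$ ordinary Cauchy derivatives), this is a matter of citation and not a gap.
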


\subsection{Hypergeometric vection transforms}

In this subsection, we introduce transforms exhibiting the analytic content of
Proposition \ref{nevflora0.1} and useful for its proof.

\begin{definition}
  Let $R, K > 0$. Let $\mathcal{H}\mathcal{V}_{1 / 2 \rightarrow 1}$ and
  $\mathcal{H}\mathcal{V}_{1 \rightarrow 1 / 2}$ be the operators defined on
  $\hat{\mathcal{N}}_{R, K}$ by
  \[ \mathcal{H}\mathcal{V}_{1 / 2 \rightarrow 1} ( \hat{c}) (\tau) \assign
     \frac{2}{\pi} \int_0^{\pi / 2} \hat{c} \lp{1} \tau \sin^2 \theta \rp{1} d
     \theta, \]
  \[ \mathcal{H}\mathcal{V}_{1 \rightarrow 1 / 2} ( \hat{c}) (\tau) \assign
     \int_0^{\pi / 2} \hat{c} (\tau \sin^2 \theta) \sin \theta d \theta + 2
     \tau \int_0^{\pi / 2} \hat{c}' (\tau \sin^2 \theta) \sin^3 \theta d
     \theta . \]
\end{definition}

\begin{proposition}
  \label{nevflora0.44}Let $R, K > 0$. For every $\hat{a} \in
  \hat{\mathcal{N}}_{R, K}$ (respectively $\hat{b} \in \hat{\mathcal{N}}_{R,
  K}$) there exists a unique $\hat{b} \in \hat{\mathcal{N}}_{R, K}$
  (respectively $\hat{a} \in \hat{\mathcal{N}}_{R, K}$) such that, for $h \in
  \mathbbm{C}$, $\re \lp{1} \frac{1}{h} \rp{1} > K$,
  \[ 2 \int_0^{+ \infty} \hat{a} (\tau^2) e^{- \tau^2 / h} \frac{d \tau}{(\pi
     h)^{1 / 2}} = \int_0^{+ \infty} \hat{b} (\tau) e^{- \tau / h} \frac{d
     \tau}{h} . \]
  Moreover $\hat{b} (\tau) =\mathcal{H}\mathcal{V}_{1 / 2 \rightarrow 1} (
  \hat{a}) (\tau)$ and $\hat{a} (\tau) =\mathcal{H}\mathcal{V}_{1 \rightarrow
  1 / 2} ( \hat{b}) (\tau)$.
\end{proposition}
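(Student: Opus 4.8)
The plan is to turn the left-hand side into an ordinary (normalised) Laplace integral by an explicit computation, read off the transform $\hat b$, and obtain uniqueness from injectivity of the Laplace transform. Fix $\hat a\in\hat{\mathcal{N}}_{R,K}$ and $h$ with $\re(1/h)>K$, and choose $K<K'<\re(1/h)$ and $R'<R$, so that (\ref{nevflora0.42}) makes every integral below absolutely convergent. The substitution $u=\tau^{2}$ rewrites $2\int_{0}^{+\infty}\hat a(\tau^{2})e^{-\tau^{2}/h}\,d\tau/(\pi h)^{1/2}$ as $\int_{0}^{+\infty}\hat a(u)\,(\pi u)^{-1/2}e^{-u/h}\,du/h^{1/2}$. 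Next I would insert the elementary identity $h^{-1/2}e^{-u/h}=\pi^{-1/2}\int_{u}^{+\infty}(\tau-u)^{-1/2}e^{-\tau/h}\,h^{-1}\,d\tau$, which comes from $\int_{0}^{+\infty}s^{-1/2}e^{-s/h}\,ds=(\pi h)^{1/2}$ (valid for $\re(1/h)>0$ by analytic continuation from $h>0$) after the shift $\tau=u+s$; applying Fubini then yields $\int_{0}^{+\infty}e^{-\tau/h}\,h^{-1}\,d\tau\cdot\frac1\pi\int_{0}^{\tau}\hat a(u)\,(u(\tau-u))^{-1/2}\,du$, and the substitution $u=\tau\sin^{2}\theta$ in the inner integral identifies $\frac1\pi\int_{0}^{\tau}\hat a(u)(u(\tau-u))^{-1/2}du$ with $\frac2\pi\int_{0}^{\pi/2}\hat a(\tau\sin^{2}\theta)\,d\theta=\mathcal{H}\mathcal{V}_{1/2\rightarrow1}(\hat a)(\tau)$. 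Hence $\hat b:=\mathcal{H}\mathcal{V}_{1/2\rightarrow1}(\hat a)$ satisfies the asserted identity; it lies in $\hat{\mathcal{N}}_{R,K}$ because $\tau\sin^{2}\theta\in D_{\mathbbm{R}^{+},R'}$ whenever $\tau\in D_{\mathbbm{R}^{+},R'}$ and $\theta\in[0,\pi/2]$, so differentiation under the integral sign gives analyticity on $D_{\mathbbm{R}^{+},R}$ and the bound $|\hat b(\tau)|\leqslant Ce^{K'|\tau|}$. The same argument, combined with the stability of $\hat{\mathcal{N}}_{R,K}$ under $\hat c\mapsto\partial_{\tau}\hat c$ and $\hat c\mapsto\tau\hat c$ (the remark following Definition \ref{nevclara2}), shows that $\mathcal{H}\mathcal{V}_{1\rightarrow1/2}$ also maps $\hat{\mathcal{N}}_{R,K}$ into itself.

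I would then prove that $\mathcal{H}\mathcal{V}_{1/2\rightarrow1}$ and $\mathcal{H}\mathcal{V}_{1\rightarrow1/2}$ are mutually inverse on $\hat{\mathcal{N}}_{R,K}$, both identities resting on the Dirichlet integral $\int_{s}^{\tau}(\tau-u)^{-1/2}(u-s)^{-1/2}\,du=\pi$. For $\mathcal{H}\mathcal{V}_{1\rightarrow1/2}\circ\mathcal{H}\mathcal{V}_{1/2\rightarrow1}=\mathrm{id}$, write $\hat b=\mathcal{H}\mathcal{V}_{1/2\rightarrow1}(\hat a)$ and set $G(\tau):=\int_{0}^{\tau}(\tau-u)^{-1/2}\hat b(u)\,du$; on one hand Fubini and the Dirichlet integral give $G(\tau)=\int_{0}^{\tau}\hat a(u)\,u^{-1/2}\,du$, so $\hat a(\tau)=\sqrt{\tau}\,G'(\tau)$; on the other hand the substitution $u=\tau\sin^{2}\theta$ gives $G(\tau)=2\sqrt{\tau}\int_{0}^{\pi/2}\hat b(\tau\sin^{2}\theta)\sin\theta\,d\theta$, and differentiating this expression reproduces exactly the formula defining $\mathcal{H}\mathcal{V}_{1\rightarrow1/2}(\hat b)(\tau)$. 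For $\mathcal{H}\mathcal{V}_{1/2\rightarrow1}\circ\mathcal{H}\mathcal{V}_{1\rightarrow1/2}=\mathrm{id}$, write $\hat a=\mathcal{H}\mathcal{V}_{1\rightarrow1/2}(\hat b)$, use $\frac{d}{du}\int_{0}^{u}(u-s)^{-1/2}\hat b(s)\,ds=\hat b(0)u^{-1/2}+\int_{0}^{u}(u-s)^{-1/2}\hat b'(s)\,ds$ to get $\hat a(u)u^{-1/2}=\hat b(0)u^{-1/2}+\int_{0}^{u}(u-s)^{-1/2}\hat b'(s)\,ds$, and then $\frac1\pi\int_{0}^{\tau}\hat a(u)(u(\tau-u))^{-1/2}du$ splits, after the Dirichlet integral and Fubini, into $\hat b(0)+\int_{0}^{\tau}\hat b'(s)\,ds=\hat b(\tau)$.

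Uniqueness is then immediate: if $\hat b_{1},\hat b_{2}\in\hat{\mathcal{N}}_{R,K}$ both satisfy the identity for a given $\hat a$, then the Laplace transform of $\hat b_{1}-\hat b_{2}$ vanishes on $\{\re(1/h)>K\}$, so $\hat b_{1}=\hat b_{2}$ by injectivity of the Laplace transform (see [Ha4]); and since $\mathcal{H}\mathcal{V}_{1/2\rightarrow1}$ is injective by the previous paragraph, the $\hat a$ attached to a given $\hat b$ is also unique, its existence being the identity $\mathcal{H}\mathcal{V}_{1/2\rightarrow1}\bigl(\mathcal{H}\mathcal{V}_{1\rightarrow1/2}(\hat b)\bigr)=\hat b$ applied to $\hat a:=\mathcal{H}\mathcal{V}_{1\rightarrow1/2}(\hat b)\in\hat{\mathcal{N}}_{R,K}$. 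I expect the main difficulty to be purely bookkeeping: tracking the constants $R',K'$ so that all the iterated integrals on which Fubini is invoked converge absolutely, handling the integrable endpoint singularities $u^{-1/2}$ and $(\tau-u)^{-1/2}$, and justifying the differentiation of the fractional-integral expression above — all of which is routine given the exponential bounds built into the definition of $\hat{\mathcal{N}}_{R,K}$.
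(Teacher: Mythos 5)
Your proof is correct, and its core mechanism is the same as the paper's: rewrite the Gaussian-weighted integral via $u=\tau^2$, recognise the left-hand side as the Laplace transform of a fractional-integral-type convolution, and identify the kernel through a Beta/Dirichlet integral. The paper proves Proposition \ref{nevflora0.44} by a one-line reduction to the general-exponent Proposition \ref{nevalclara1} (take $\gamma=1/2$), whose proof does exactly what you do, just phrased in terms of the convolution product $\ast$ and written for arbitrary $\gamma\in\,]0,1[\,$: your identity $h^{-1/2}e^{-u/h}=\pi^{-1/2}\int_u^{+\infty}(\tau-u)^{-1/2}e^{-\tau/h}h^{-1}\,d\tau$ is the paper's factorisation $\Gamma(1-\gamma)^{-1}\int_0^{+\infty}\tau^{-\gamma}e^{-\tau/h}d\tau\cdot\int_0^{+\infty}\tau^{\gamma-1}\hat a(\tau)e^{-\tau/h}d\tau$, and your Dirichlet integral $\int_s^\tau(\tau-u)^{-1/2}(u-s)^{-1/2}du=\pi$ is $B(1/2,1/2)=\Gamma(1/2)^2$. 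What you add, beyond specialising to $\gamma=1/2$, is a self-contained verification that $\mathcal{HV}_{1/2\to1}$ and $\mathcal{HV}_{1\to1/2}$ are mutually inverse on $\hat{\mathcal N}_{R,K}$; the paper constructs $\hat a\mapsto\hat b$ and $\hat b\mapsto\hat a$ by two separate convolution computations and leaves uniqueness (hence mutual inversion) to the implicit injectivity of the Laplace transform, which you state explicitly. Both routes are sound; yours is marginally more elementary and makes the inverse relation transparent, while the paper's route is more economical given that Proposition \ref{nevalclara1} is needed anyway and is of independent interest.
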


\begin{proof}
  Since
  \[ 2 \int_0^{+ \infty} \hat{a} (\tau^2) e^{- \tau^2 / h} \frac{d \tau}{(\pi
     h)^{1 / 2}} = \frac{1}{\Gamma (1 / 2)} \int_0^{+ \infty} \hat{a} (\zeta)
     e^{- \zeta / h} \frac{d \zeta}{h^{1 / 2} \zeta^{1 / 2}}, \]
  Proposition \ref{nevflora0.44} is a consequence of the following proposition
  with the choice $\gamma = \frac{1}{2}$.
\end{proof}

\begin{proposition}
  \label{nevalclara1}Let $\gamma \in] 0, 1 [$. Let $R, K > 0$. For every
  $\hat{a} \in \hat{\mathcal{N}}_{R, K}$ (respectively $\hat{b} \in
  \hat{\mathcal{N}}_{R, K}$) there exists a unique $\hat{b} \in
  \hat{\mathcal{N}}_{R, K}$ (respectively $\hat{a} \in \hat{\mathcal{N}}_{R,
  K}$) such that, for $h \in \mathbbm{C}$, $\re \lp{1} \frac{1}{h} \rp{1} >
  K$,
  \begin{equation}
    \label{nevalclara4} \frac{1}{\Gamma (\gamma)} \int_0^{+ \infty} \hat{a}
    (\tau) e^{- \tau / h} \lp{2} \frac{\tau}{h} \srp{2}{\gamma - 1}{} \frac{d
    \tau}{h} = \int_0^{+ \infty} \hat{b} (\tau) e^{- \tau / h} \frac{d
    \tau}{h} .
  \end{equation}
  Moreover $\hat{b} (\tau) =\mathcal{H}\mathcal{V}_{\gamma \rightarrow 1} (
  \hat{a}) (\tau)$ and $\hat{a} (\tau) =\mathcal{H}\mathcal{V}_{1 \rightarrow
  \gamma} ( \hat{b}) (\tau)$ where
  \[ \mathcal{H}\mathcal{V}_{\gamma \rightarrow 1} ( \hat{a}) (\tau) \assign
     \frac{\sin \lp{1} \pi \gamma \rp{1}}{\pi} \int_0^1 \hat{a} (\tau u)
     u^{\gamma - 1} (1 - u)^{- \gamma} d u, \]
  \[ \mathcal{H}\mathcal{V}_{1 \rightarrow \gamma} ( \hat{b}) (\tau) \assign
     \gamma \int_0^1 \hat{b} (\tau u) (1 - u)^{\gamma - 1} d u + \tau \int_0^1
     \hat{b}' (\tau u) u (1 - u)^{\gamma - 1} d u \]
  (hypergeometric vection transforms).
\end{proposition}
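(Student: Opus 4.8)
The plan is to verify the two explicit integral formulas directly and then to extract existence and uniqueness from the injectivity of the Laplace transform; the hypergeometric vection transforms are exactly the Borel-side incarnation of fractional integration, so the work is entirely in making the bookkeeping rigorous. First I would check that both $\mathcal{H}\mathcal{V}_{\gamma\rightarrow1}$ and $\mathcal{H}\mathcal{V}_{1\rightarrow\gamma}$ map $\hat{\mathcal{N}}_{R,K}$ into itself with the \emph{same} constant $K$. The domain $D_{\mathbbm{R}^+, R}$ is stable under the dilations $\tau\mapsto u\tau$ for $u\in[0,1]$, so for $\hat a\in\hat{\mathcal{N}}_{R,K}$ the function $\tau\mapsto\hat a(\tau u)$ is analytic on $D_{\mathbbm{R}^+, R}$; since $u^{\gamma-1}$ and $(1-u)^{-\gamma}$ are integrable on $(0,1)$ (this is where $0<\gamma<1$ is used), holomorphy under the integral sign gives $\mathcal{H}\mathcal{V}_{\gamma\rightarrow1}(\hat a)\in\mathcal{A}(D_{\mathbbm{R}^+, R})$, and the bound $|\hat a(\tau u)|\leqslant Ce^{K'|\tau u|}\leqslant Ce^{K'|\tau|}$ together with $\int_0^1 u^{\gamma-1}(1-u)^{-\gamma}\,du=\Gamma(\gamma)\Gamma(1-\gamma)=\pi/\sin(\pi\gamma)$ --- which cancels the prefactor $\sin(\pi\gamma)/\pi$ exactly --- yields $|\mathcal{H}\mathcal{V}_{\gamma\rightarrow1}(\hat a)(\tau)|\leqslant Ce^{K'|\tau|}$. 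The same argument handles $\mathcal{H}\mathcal{V}_{1\rightarrow\gamma}$, with the extra ingredient of a Cauchy estimate on a slightly smaller polydisc to bound $\hat b'$ in terms of $\hat b$ (the stability of $\hat{\mathcal{N}}_{R,K}$ under $\hat c\mapsto\partial_\tau\hat c$ noted before the statement).

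To prove the identity itself I would fix $h$ with $\re(1/h)>K$ and choose $K<K'<\re(1/h)$. Substituting $u=1/(1+r)$ in the definition of $\hat b=\mathcal{H}\mathcal{V}_{\gamma\rightarrow1}(\hat a)$ rewrites it as $\hat b(\tau)=\frac{\sin(\pi\gamma)}{\pi}\int_0^{+\infty}\hat a\!\left(\frac{\tau}{1+r}\right)\frac{r^{-\gamma}}{1+r}\,dr$; plugging this into $\int_0^{+\infty}\hat b(\tau)e^{-\tau/h}\frac{d\tau}{h}$, exchanging the integrals (Tonelli, legitimate by the estimate just proved), and changing $\tau=\zeta(1+r)$ gives
\[ \int_0^{+\infty}\hat b(\tau)e^{-\tau/h}\frac{d\tau}{h}=\frac{\sin(\pi\gamma)}{\pi h}\int_0^{+\infty}\hat a(\zeta)\,e^{-\zeta/h}\left(\int_0^{+\infty}r^{-\gamma}e^{-\zeta r/h}\,dr\right)d\zeta . \]
The inner integral equals $\Gamma(1-\gamma)\,(\zeta/h)^{\gamma-1}$ (valid since $\re(\zeta/h)>0$ and $0<\gamma<1$), and Euler's reflection formula $\Gamma(\gamma)\Gamma(1-\gamma)=\pi/\sin(\pi\gamma)$ then collapses the right-hand side to precisely the left-hand side of (\ref{nevalclara4}). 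This establishes the formula $\hat b=\mathcal{H}\mathcal{V}_{\gamma\rightarrow1}(\hat a)$ and the existence half of the first assertion.

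For the converse I would introduce the Riemann--Liouville fractional integral $I^{\beta}f(\tau):=\frac{1}{\Gamma(\beta)}\int_0^\tau(\tau-\zeta)^{\beta-1}f(\zeta)\,d\zeta$; the substitution $\zeta=\tau u$ identifies $\mathcal{H}\mathcal{V}_{1\rightarrow\gamma}(\hat b)(\tau)=\Gamma(\gamma)\,\tau^{1-\gamma}\frac{d}{d\tau}\big(I^{\gamma}\hat b\big)(\tau)$ and $\mathcal{H}\mathcal{V}_{\gamma\rightarrow1}(\hat c)(\tau)=\frac{1}{\Gamma(\gamma)}\,I^{1-\gamma}\big[\zeta\mapsto\zeta^{\gamma-1}\hat c(\zeta)\big](\tau)$. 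Composing these and using $I^{\gamma}\hat b(0)=0$ with the semigroup law $I^{1-\gamma}I^{\gamma}=I^{1}$ (equivalently a short Laplace-transform computation, or a direct double Beta integral) gives $\mathcal{H}\mathcal{V}_{\gamma\rightarrow1}\circ\mathcal{H}\mathcal{V}_{1\rightarrow\gamma}=\mathrm{id}$ on $\hat{\mathcal{N}}_{R,K}$. Hence, for a given $\hat b$, the function $\hat a:=\mathcal{H}\mathcal{V}_{1\rightarrow\gamma}(\hat b)$ lies in $\hat{\mathcal{N}}_{R,K}$, satisfies $\mathcal{H}\mathcal{V}_{\gamma\rightarrow1}(\hat a)=\hat b$, and therefore satisfies (\ref{nevalclara4}) by the previous paragraph. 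Uniqueness in both directions is just injectivity of the Laplace transform: if two admissible functions produce the same left-hand side of (\ref{nevalclara4}), then the Laplace transforms of $(\hat a_1-\hat a_2)(\tau)\tau^{\gamma-1}$, respectively of $\hat b_1-\hat b_2$, vanish on a half-plane, forcing the functions to agree on $[0,+\infty[$ and then, by analytic continuation, on their whole domain.

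I expect the main obstacle to be this last step: verifying that $\mathcal{H}\mathcal{V}_{1\rightarrow\gamma}$, with its derivative term, genuinely inverts $\mathcal{H}\mathcal{V}_{\gamma\rightarrow1}$ while keeping the output inside $\hat{\mathcal{N}}_{R,K}$ --- it is the derivative term that forces the Cauchy-estimate argument and some care with the nested domains $D_{\mathbbm{R}^+, R'}$. Everything else reduces to elementary Beta and Gamma bookkeeping and an application of Tonelli's theorem.
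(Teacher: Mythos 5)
Your proof is correct, and the forward direction is essentially the paper's argument in disguise: the paper writes the extra factor $h^{1-\gamma}$ as $\frac{1}{\Gamma(1-\gamma)}\int_0^{+\infty}\tau^{-\gamma}e^{-\tau/h}\,d\tau$ and invokes the Laplace--convolution theorem to identify the product of Laplace transforms as a single Laplace transform of the convolution $[\tau^{\gamma-1}\hat a]\ast\tau^{-\gamma}$, whereas you perform the same Tonelli interchange directly after the substitution $u=1/(1+r)$. The two computations are the same Beta-integral bookkeeping in different notation, both resting on Euler's reflection formula.

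The genuine difference is in the backward direction. The paper stays entirely on the Laplace side: it rewrites the required identity as $\int_0^{+\infty}\hat a(\tau)e^{-\tau/h}\tau^{\gamma-1}\,d\tau=\int_0^{+\infty}\partial_\tau\widehat B(\tau)e^{-\tau/h}\,d\tau$ with $\widehat B=\hat b\ast\tau^{\gamma-1}$ (an integration by parts kills the extra $1/h$), reads off $\hat a(\tau)=\tau^{1-\gamma}\partial_\tau\widehat B(\tau)=\mathcal{H}\mathcal{V}_{1\rightarrow\gamma}(\hat b)(\tau)$, and thereby gets existence and uniqueness simultaneously. You instead move to the Borel side, identify both transforms with Riemann--Liouville fractional operators, and prove $\mathcal{H}\mathcal{V}_{\gamma\rightarrow1}\circ\mathcal{H}\mathcal{V}_{1\rightarrow\gamma}=\mathrm{id}$ via the semigroup law $I^{1-\gamma}I^\gamma=I^1$; you then deduce the backward direction from the forward one and handle uniqueness separately by Laplace-injectivity. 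Your route costs an extra verification (the semigroup identity and the vanishing of $I^\gamma\hat b$ at $0$) but buys the inverse-pair structure as an explicit statement; the paper only asserts this relationship in the remark that follows, where it is attributed to precisely the fractional-calculus identities you use. Your explicit check that $\hat{\mathcal{N}}_{R,K}$ is stable under both transforms (with Cauchy estimates on shrunken domains to handle the $\hat b'$ term) fills in a step the paper leaves as a one-line assertion.

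One small caution: the semigroup law $I^{1-\gamma}\circ I^{\gamma}=I^{1}$ requires, beyond the algebraic Beta identity, that Fubini applies on the domain $D_{\mathbbm{R}^+,R}$ and that the boundary term $I^\gamma\hat b(0)$ vanishes; you flag the latter but should also note that the former is exactly the same Tonelli argument you already used forward, so the burden is genuinely light.
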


\begin{proof}
  By the definition of the operators $\mathcal{H}\mathcal{V}_{\gamma
  \rightarrow 1}$ and $\mathcal{H}\mathcal{V}_{1 \rightarrow \gamma}$, the
  space $\hat{\mathcal{N}}_{R, K}$ is invariant under these operators. See
  also Section \ref{interpretationkarina} (Appendix). For two continuous
  functions $\widehat{f_1}, \hat{f}_2$ on $] 0, + \infty [$ such that
  $\tau^{\rho} \hat{f}_k (\tau) \overset{\tau \rightarrow 0}{\longrightarrow}
  0$ for some $\rho \in] 0, 1 [$, let $\ast$ be the convolution product
  defined by
  \begin{equation}
    \label{nevclara5.1} \widehat{f_1} \ast \hat{f}_2 (\tau) \assign
    \int_0^{\tau} \widehat{f_1} (\tau_1) \hat{f}_2 (\tau - \tau_1) d \tau_1 .
  \end{equation}
  Let $\hat{a} \in \hat{\mathcal{N}}_{R, K}$. One has
  \begin{eqnarray*}
    \int_0^{+ \infty} \hat{a} (\tau) e^{- \tau / h} \lp{2} \frac{\tau}{h}
    \srp{2}{\gamma - 1}{} d \tau = &  & h^{1 - \gamma} \times \int_0^{+
    \infty} \hat{a} (\tau) e^{- \tau / h} \tau^{\gamma - 1} d \tau\\
    = &  & \frac{1}{\Gamma (1 - \gamma)} \int_0^{+ \infty} \tau^{- \gamma}
    e^{- \tau / h} d \tau \times \int_0^{+ \infty} \tau^{\gamma - 1} \hat{a}
    (\tau) e^{- \tau / h} d \tau\\
    = &  & \frac{1}{\Gamma (1 - \gamma)} \int_0^{+ \infty} \hat{A} (\tau) e^{-
    \tau / h} d \tau
  \end{eqnarray*}
  where the function $\hat{A}$ denotes the convolution product of the
  functions $\tau \longmapsto \tau^{\gamma - 1} \hat{a} (\tau), \tau^{-
  \gamma}$. But
  \begin{eqnarray*}
    \hat{A} (\tau) = &  & \int_0^{\tau} \tau_1^{\gamma - 1} \hat{a} (\tau_1)
    (\tau - \tau_1)^{- \gamma} d \tau_1\\
    = &  & \int_0^1 \hat{a} (\tau u) u^{\gamma - 1} (1 - u)^{- \gamma} d u.
  \end{eqnarray*}
  This proves the existence of a function $\hat{b} \in \hat{\mathcal{N}}_{R,
  K}$ satisfying (\ref{nevalclara4}) and justifies the definition of
  $\mathcal{H}\mathcal{V}_{\gamma \rightarrow 1}$.
  
  Let $\hat{b} \in \hat{\mathcal{N}}_{R, K}$. A function $\hat{a}$ satisfies
  (\ref{nevalclara4}) if and only if
  \begin{equation}
    \label{nevalclara8} \int_0^{+ \infty} \hat{a} (\tau) e^{- \tau / h}
    \tau^{\gamma - 1} d \tau = h^{- 1} \times \Gamma (\gamma) h^{\gamma}
    \times \int_0^{+ \infty} \hat{b} (\tau) e^{- \tau / h} d \tau .
  \end{equation}
  Since
  \[ \Gamma (\gamma) h^{\gamma} = \int_0^{+ \infty} \tau^{\gamma - 1} e^{-
     \tau / h} d \tau, \]
  the right hand side of (\ref{nevalclara8}) is equal to
  \[ \frac{1}{h} \text{$\int_0^{+ \infty} \widehat{B} (\tau) e^{- \tau / h} d
     \tau$} = \text{$\int_0^{+ \infty} \partial_{\tau} \lp{1} \widehat{B}
     (\tau) \rp{1} e^{- \tau / h} d \tau$} \]
  where the function $\hat{B}$ denotes the convolution product of the
  functions $\tau \longmapsto \hat{b} (\tau), \tau^{\gamma - 1}$. But
  \begin{eqnarray*}
    \tau^{1 - \gamma} \partial_{\tau} \lp{1} \widehat{B} (\tau) \rp{1} = &  &
    \tau^{1 - \gamma} \partial_{\tau} \lp{2} \int_0^{\tau} \hat{b} (\tau_1)
    (\tau - \tau_1)^{\gamma - 1} d \tau_1 \rp{2}\\
    = &  & \tau^{1 - \gamma} \partial_{\tau} \lp{2} \tau^{\gamma} \int_0^1
    \hat{b} (\tau u) (1 - u)^{\gamma - 1} d u \rp{2}\\
    = &  & \gamma \int_0^1 \hat{b} (\tau u) (1 - u)^{\gamma - 1} d u + \tau
    \int_0^1 \hat{b}' (\tau u) u (1 - u)^{\gamma - 1} d u.
  \end{eqnarray*}
  This proves the existence of a function $\hat{a} \in \hat{\mathcal{N}}_{R,
  K}$ satisfying (\ref{nevalclara4}) and justifies the definition of
  $\mathcal{H}\mathcal{V}_{1 \rightarrow \gamma}$.
\end{proof}

\begin{remark}
  Proposition \ref{nevalclara1} is also related to fractional calculus. Let
  $\alpha > 0$. Let $f$ be a function continuous on $] 0, + \infty [$ and
  integrable on $] 0, 1 [$. For every $x \in] 0, + \infty [$, let
  \[ \ifr{0}{x}{\alpha} f \assign \frac{1}{\Gamma (\alpha)} \int_0^x (x -
     x_1)^{\alpha - 1} f (x_1) d x_1 \]
  (see also [M-R]). Then, if $\gamma \in] 0, 1 [$ and $\hat{c} \in
  \hat{\mathcal{N}}_{R, K}$,
  \[ \mathcal{H}\mathcal{V}_{\gamma \rightarrow 1} ( \hat{c}) (\tau) =
     \frac{1}{\Gamma (\gamma)} \times \ifr{0}{\tau}{1 - \gamma} \lp{1}
     \tau^{\gamma - 1} \hat{c} \rp{1}, \]
  \[ \mathcal{H}\mathcal{V}_{1 \rightarrow \gamma} ( \hat{c}) (\tau) = \Gamma
     (\gamma) \tau^{1 - \gamma} \partial_{\tau} ( \ifr{0}{\tau}{\gamma}
     \hat{c} \rp{1} . \]
  The fact that the transforms $\mathcal{H}\mathcal{V}_{\gamma \rightarrow 1}$
  and $\mathcal{H}\mathcal{V}_{1 \rightarrow \gamma}$ are inverse each to
  other is straightforward with the above formulas. One can also prove
  Proposition \ref{nevalclara1} by using a fractional integration by parts.
\end{remark}

\subsection{Proof of Proposition \ref{nevflora0.1}}

Let $R, K > 0$ and $a \in \mathcal{N}^{\nu}_{R, K}$. Let
\[ \mathcal{I}(h) \assign \int_{\mathbbm{R}^{\nu}} \int_0^{+ \infty} e^{- (x^2
   + \sigma) / h} a (x, \sigma) \frac{d^{\nu} x d \sigma}{h^{1 + \nu / 2}} .
\]
By the definition of $\mathcal{N}_{R, K}^{\nu}$, the above integral is well
defined if $\re \lp{1} \frac{1}{h} \rp{1} > K$. Moreover
\[ \mathcal{I}(h) = \int_{\mathbbm{R}^{\nu - 1}} \int_0^{+ \infty} e^{- (x'^2
   + \sigma) / h} A (x', \sigma) \frac{d^{\nu - 1} x' d \sigma}{h^{1 + (\nu -
   1) / 2}} \]
where
\begin{eqnarray*}
  A (x', \sigma) \assign &  & \int_{\mathbbm{R}} e^{- x_1^2 / h} a (x_1, x',
  \sigma) \frac{d x_1}{h^{1 / 2}}\\
  = &  & \frac{\pi^{1 / 2}}{2} \times 2 \int_0^{+ \infty} e^{- x_1^2 / h}
  \lp{1} a (x_1, x', \sigma) + a (- x_1, x', \sigma) \rp{1} \frac{d x_1}{(\pi
  h)^{1 / 2}} .
\end{eqnarray*}
By Proposition \ref{nevflora0.44}
\[ A (x', \sigma) = \int_0^{+ \infty} e^{- \tau_1 / h} B (\tau_1, x', \sigma)
   \frac{d \tau_1}{h} \]
where
\[ B (\tau_1, x', \sigma) \assign \frac{1}{\pi^{1 / 2}} \int_0^{\pi / 2}
   \lp{2} a \lp{1} \tau_1^{1 / 2} \sin \theta, x', \sigma \rp{1} + a \lp{1} -
   \tau_1^{1 / 2} \sin \theta, x', \sigma \rp{1} \rp{2} d \theta . \]
Then, by iterating this argument,
\[ \mathcal{I}(h) = \int_{] 0, + \infty [^{\nu + 1}} e^{- (\tau_1 + \cdots +
   \tau_{\nu} + \sigma) / h} d (\tau_1, \ldots, \tau_{\nu}, \sigma) \frac{d
   \tau_1 \cdots d \tau_{\nu} d \sigma}{h^{\nu + 1}} \]
where
\[ d (\tau_1, \ldots, \tau_{\nu}, \sigma) \assign \frac{1}{\pi^{\nu / 2}}
   \int_{[0, \pi / 2]^{\nu}} \sum_{\varepsilon_1 = \pm, \ldots,
   \varepsilon_{\nu} = \pm} \]
\[ a \lp{1} \varepsilon_1 \tau_1^{1 / 2} \sin \theta_1, \ldots,
   \varepsilon_{\nu} \tau_{\nu}^{1 / 2} \sin \theta_{\nu}, \sigma \rp{1} d
   \theta_1 \cdots d \theta_{\nu} . \]
Then
\[ \mathcal{I}(h) = \int_0^{+ \infty} \hat{b} (\tau) e^{- \tau / h} \frac{d
   \tau}{h} \]
where
\[ \hat{b} (\tau) \assign \partial_{\tau}^{\nu} \lp{3} \tau^{\nu} \int_{u_1 +
   \cdots + u_{\nu + 1} = 1} \frac{1}{\pi^{\nu / 2}} \int_{[0, \pi / 2]^{\nu}}
   \sum_{\varepsilon_1 = \pm, \ldots, \varepsilon_{\nu} = \pm} \]
\[ a \lp{1} \varepsilon_1 u_1^{1 / 2} \sin \theta_1 \tau^{1 / 2}, \ldots,
   \varepsilon_{\nu} u_{\nu}^{1 / 2} \sin \theta_{\nu} \tau^{1 / 2}, u_{\nu +
   1} \tau \rp{1} d \theta_1 \cdots d \theta_{\nu} d u_1 \cdots d u_{\nu}
   \rp{3} . \]
Let us remind that $z \in D_{\mathbbm{R}^+, R} \Rightarrow | \im z^{1 / 2} | <
R^{1 / 2} \Rightarrow \pm z^{1 / 2} \in D_{\mathbbm{R}, \sqrt{R}}$. Then,
since $u_1 + \cdots + u_{\nu} \leqslant 1$,
\[ \tau \in D_{\mathbbm{R}^+, R} \Rightarrow \lp{1} \varepsilon_1 u_1^{1 / 2}
   \sin \theta_1 \tau^{1 / 2}, \ldots, \varepsilon_{\nu} u_{\nu}^{1 / 2} \sin
   \theta_{\nu} \tau^{1 / 2}, u_{\nu + 1} \tau \rp{1} \in
   D_{\mathbbm{R}^{\nu}, \sqrt{R}} \times D_{\mathbbm{R}^+, R} . \]
Then the analyticity of the function $a$ on $D_{\mathbbm{R}^{\nu}, \sqrt{R}}
\times D_{\mathbbm{R}^+, R}$ implies the analyticity of the function $\hat{b}$
on $D_{\mathbbm{R}^+, R}$. The function $a$ satisfies (\ref{nevflora0.41}),
thus the function $\hat{b}$ satisfies (\ref{nevflora0.42}) (we use in
particular that the space $\hat{\mathcal{N}}_{R, K}$ is invariant under
$\hat{u} \mapsto \partial_{\tau} \hat{u}, \tau \hat{u}$). Let us assume that
the function $a$ satifies (\ref{nevflora0.53}). Then, since
\[ \tau \in S_{\kappa'} \Rightarrow \lp{1} \varepsilon_1 u_1^{1 / 2} \sin
   \theta_1 \tau^{1 / 2}, \ldots, \varepsilon_{\nu} u_{\nu}^{1 / 2} \sin
   \theta_{\nu} \tau^{1 / 2}, u_{\nu + 1} \tau \rp{1} \in
   D_{\mathbbm{R}^{\nu}, \sqrt{\kappa'}} \times S_{\kappa'}, \]
the function $\hat{b}$ is analytic on $S_{\kappa'}$. Moreover $\hat{b}$
satisfies (\ref{nevflora0.54}). This proves Proposition \ref{nevflora0.1}.

\section{Proofs of Proposition \ref{propositionbetty1} and Theorem
\ref{theorembetty2}}

\subsection{Proof of Proposition \ref{propositionbetty1}}

By (\ref{propositionbetty1.2}), one can use Proposition \ref{formulaclara34}.
Then, by (\ref{formulaclara36}),
\[ \lba{1} y|e^{- \frac{t}{h} H} |x \rba{1} = h^{- \nu / 2} e^{- \phi (t, x,
   y) / h} \times \lp{2} \prod_{\upsilon = 1}^{\nu} \frac{\omega_{\upsilon}^{1
   / 2}}{\sqrt{4 \pi} \tmop{sh}^{1 / 2} (\omega_{\upsilon} t)} \rp{2} e^{w (t,
   x, y, h)} . \]
By (\ref{propositionbetty1.2}), one can use Proposition
\ref{preparationclara20}. This proves Proposition \ref{propositionbetty1}.

\subsection{Proof of Theorem \ref{theorembetty2}}

Without loss of generality, one can assume $c (0) = 0$. Then
$\Theta^{\tmop{conj}}_H = \Theta_H$. Let $u$ be the solution of
(\ref{theorembetty6}). For $t, h > 0$, one has (see Remark
$\ref{formulaclara14.7})$
\[ \Theta_H (t, h) = \int_{\mathbbm{R}^{\nu}} u (t, x, x, h) d x. \]
By Proposition \ref{formulaclara34}, there exist $\phi \in \mathcal{C}^0
\lp{1}] 0, T [, \mathcal{A} \lp{0} D^2_{\mathbbm{R}^{\nu}, 1} \rp{0} \rp{1}$
and
\[ w \in \mathcal{C}^0 \lp{1}] 0, T [, \mathcal{A} \lp{0}
   D^2_{\mathbbm{R}^{\nu}, 1} \times \mathbbm{C}^+ \rp{0} \rp{1} \]
such that
\[ u (t, x, x, h) \assign (4 \pi h)^{- \nu / 2} \prod_{\upsilon = 1}^{\nu}
   \lp{2} \frac{\omega_{\upsilon}}{\tmop{sh} (\omega_{\upsilon} t)} \srp{2}{1
   / 2}{} \times \exp \lp{2} - \frac{1}{h} \phi |_{y = x} \rp{2} \times \exp
   \lp{1} w|_{y = x} \rp{1} . \]
By Proposition \ref{preparationclara4}, there exists $\Lambda \in
\mathcal{C}^0 \lp{1}] 0, T [, \mathcal{A}_{\mathbbm{C}^{\nu}}
(D_{\mathbbm{R}^{\nu}, 1}) \rp{1}$, such that $\phi \sve{1}{}{y = x} =
\theta^2 (t, x)$ where $\theta (t, x) \assign D_{\omega} \varphi (t, x)$,
$\varphi (t, x) \assign x + \Lambda (t, x)$ (see (\ref{preparationclara3.14})
for the definition of the matrix $D_{\omega}$). By Proposition
\ref{preparationclara20}, there exist $\kappa > 0$ and a function
$\widehat{\tmmathbf{W}} \in \mathcal{C}^0 \lp{1}] 0, T [,
\mathcal{A}(D_{\mathbbm{R}^{\nu}, 1 / 2} \times S_{\kappa}) \rp{1}$ (choose
$\widehat{\tmmathbf{W}} (t, x, \sigma) \assign \text{$\hat{W}$} (t, x, x,
\sigma)$) such that
\[ \exp \lp{1} w|_{y = x} \rp{1} = \int_0^{+ \infty} e^{- \sigma / h}
   \widehat{\tmmathbf{W}} (t, x, \sigma) \frac{d \sigma}{h} . \]
Therefore
\[ \Theta_H (t, h) = \prod_{\upsilon = 1}^{\nu} \lp{2}
   \frac{\omega_{\upsilon}}{4 \pi \tmop{sh} (\omega_{\upsilon} t)} \srp{2}{1 /
   2}{} \times \int_{\mathbbm{R}^{\nu}} \int_0^{+ \infty} e^{- \lp{1} \theta^2
   (t, x) + \sigma \rp{1} / h} \widehat{\tmmathbf{W}} (t, x, \sigma)
   \frac{d^{\nu} x d \sigma}{h^{1 + \nu / 2}} . \]
Since the assertion \ref{preparationclara4.5} of Proposition
\ref{preparationclara4} is satisfied, one can use a parameter-dependent
version of Proposition \ref{globaldiffeomaria2} (see Appendix). Therefore
$\varphi (t, \cdot) |_{\mathbbm{R}^{\nu}}$ and $\theta (t, \cdot)
|_{\mathbbm{R}^{\nu}}$ are global diffeomorphisms from $\mathbbm{R}^{\nu}$
onto $\mathbbm{R}^{\nu}$. Then
\[ \Theta_H (t, h) = \int_{\mathbbm{R}^{\nu}} \int_0^{+ \infty} e^{- (y^2 +
   \sigma) / h} a (t, y, \sigma) \frac{d^{\nu} y d \sigma}{h^{1 + \nu / 2}} .
\]
where
\[ a (t, y, \sigma) \assign \epsilon \prod_{\upsilon = 1}^{\nu} \lp{2}
   \frac{\omega_{\upsilon}}{4 \pi \tmop{sh} (\omega_{\upsilon} t)} \srp{2}{1 /
   2}{} \times \frac{\widehat{\tmmathbf{W}} \lp{2} t, \lp{1} \theta (t, \cdot)
   |_{\mathbbm{R}^{\nu}} \srp{1}{- 1}{} (y), \sigma \rp{2}}{\det \lp{2} \lp{1}
   \partial_x \theta (t, \cdot) \rp{1} \circ \lp{1} \theta (t, \cdot)
   |_{\mathbbm{R}^{\nu}} \srp{1}{- 1}{} (y) \rp{2}} \]
and $\epsilon$ is the sign of the above determinant. By Proposition
\ref{globaldiffeomaria2}, $(\varphi (t, \cdot) |_{\mathbbm{R}^{\nu}})^{- 1}$
maps $D_{\mathbbm{R}^{\nu}, 1 / 2}$ into $D_{\mathbbm{R}^{\nu}, 1}$. By
(\ref{preparationclara4.7})
\[ \sup_{(t, x) \in] 0, T [\times D_{\mathbbm{R}^{\nu}, 1}} \ve{1} \det \lp{1}
   \partial_x \varphi (t, x) \rp{1} \sve{1}{- 1}{} < \infty . \]
Moreover (\ref{preparationclara21.2}) holds. Let $T_0 \in] 0, T [$. Since
$\theta (t, x) = D_{\omega} \varphi (t, x)$, there exists $\kappa', C', K' >
0$ such that $a \in \mathcal{C}^0 \lp{1}] T_0, T [, \mathcal{A} \lp{0}
D_{\mathbbm{R}^{\nu}, \sqrt{\kappa'}} \times S_{\kappa'} \rp{0} \rp{1}$ and,
for every $t \in] T_0, T [$,
\[ \forall (y, \sigma) \in D_{\mathbbm{R}^{\nu}, \sqrt{\kappa'}} \times
   S_{\kappa'} \text{, } |a (t, y, \sigma) | \leqslant C' e^{K' | \sigma |^{1
   / 2}} \]
(notice that, since the matrix $D_{\omega}$ vanishes for $t = 0$, $\kappa'$
goes to $0$ and $C'$ goes to $\infty$ when $T_0$ goes to $0$). By a
parameter-dependent version of Proposition \ref{nevflora0.1}, since the
function $a (t, \cdot)$ satisfies (\ref{nevflora0.53}), one gets
(\ref{theorembetty5.05}) and (\ref{theorembetty5.1}). This proves Theorem
\ref{theorembetty2}.

\section{Appendix $A$}

Here is a statement about a non-negativity property which is useful when the
tree graph equality is considered. Such a result is well known (see [A-R,
Th.IV-5]).

\begin{lemma}
  \label{trimaria16}Let $I$ be a non-empty finite set. Let $M \assign (M_{j,
  k})_{j, k \in I}$ be a real symmetric non-negative{\footnote{Non-negativity
  means $\sum_{j, k \in I} M_{j, k} x_j x_k \geqslant 0$ for every $(x_j)_{j
  \in I} \in \mathbbm{R}^I$.}} matrix. Let $(u_{j, k})_{j, k \in I}$ be a
  symmetric matrix with coefficients in $[0, + \infty [$ such that, for every
  \ $j, k \in I, j \neq k$,
  \begin{equation}
    \label{trimaria17} \left\{ \begin{array}{l}
      \min_{\ell \in I} u_{\ell, \ell} \geqslant u_{j, k}\\
      \\
      \forall q \in I -\{j, k\} \text{, } u_{j, k} \geqslant \min (u_{j, q},
      u_{k, q})
    \end{array} . \right.
  \end{equation}
  Let $M^u$ be the matrix defined by
  \[ M^u_{j, k} = u_{j, k} M_{j, k} . \]
  Then the matrix $M^u$ is symmetric non-negative.
\end{lemma}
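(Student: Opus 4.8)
The plan is to recognise $M^u$ as the Hadamard (Schur) product $M \circ U$ of $M$ with the matrix $U \assign (u_{j,k})_{j,k\in I}$, and to exhibit $U$ as a \emph{finite non-negative linear combination of block all-ones matrices} $\mathbf{1}_B \mathbf{1}_B^{\mathfrak{t}}$, where $B$ ranges over subsets of $I$ and $\mathbf{1}_B \in \mathbbm{R}^I$ is the indicator vector of $B$. Each such matrix is symmetric non-negative of rank one, and for any $v \in \mathbbm{R}^I$ one has $M \circ (v v^{\mathfrak{t}}) = \operatorname{diag}(v)\, M\, \operatorname{diag}(v)$, which is symmetric non-negative because $x^{\mathfrak{t}} \operatorname{diag}(v) M \operatorname{diag}(v) x = (\operatorname{diag}(v) x)^{\mathfrak{t}} M (\operatorname{diag}(v) x) \geqslant 0$. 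Since a finite non-negative combination of symmetric non-negative matrices is again symmetric non-negative, the claimed decomposition of $U$ immediately yields the lemma.

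To decompose $U$, set $m_0 \assign \min_{\ell \in I} u_{\ell,\ell}$. For $t \in\, ]0, m_0]$ define the relation on $I$ by $j \sim_t k \Longleftrightarrow u_{j,k} \geqslant t$. I claim $\sim_t$ is an equivalence relation: reflexivity follows from $u_{\ell,\ell} \geqslant m_0 \geqslant t$; symmetry is the symmetry of $U$; transitivity, when $j \sim_t q$ and $q \sim_t k$, follows from the hypothesis $u_{j,k} \geqslant \min(u_{j,q}, u_{k,q})$ if $q \notin \{j,k\}$ and is immediate if $q = j$ or $q = k$. Let $P_t$ denote the associated partition of $I$ and $S_{P_t} \assign \sum_{B \in P_t} \mathbf{1}_B \mathbf{1}_B^{\mathfrak{t}}$. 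For $t > m_0$, the first hypothesis $m_0 \geqslant u_{j,k}$ (for $j \neq k$) forces every off-diagonal entry of $U$ to be $< t$, so it is natural to set $D_t \assign \operatorname{diag}\bigl( \mathbbm{1}_{\{u_{\ell,\ell} \geqslant t\}} \bigr)_{\ell \in I}$. A layer-cake computation of each entry --- using $u_{j,k} \leqslant m_0$ for $j \neq k$ and $u_{j,j} \geqslant m_0$ --- then gives $U = \int_0^{m_0} S_{P_t}\, dt + \int_{m_0}^{+\infty} D_t\, dt$, and both integrands are piecewise constant in $t$ with finitely many values, so this is the desired finite non-negative combination of matrices $\mathbf{1}_B \mathbf{1}_B^{\mathfrak{t}}$.

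Combining the two paragraphs, $M^u = M \circ U$ is a finite non-negative combination of symmetric non-negative matrices $M \circ (\mathbf{1}_B \mathbf{1}_B^{\mathfrak{t}})$, hence symmetric non-negative; symmetry of $M^u$ is clear from symmetry of $M$ and of $(u_{j,k})$. I expect the only delicate point to be the verification that the threshold relation $\sim_t$ is transitive --- in particular isolating the degenerate cases $q = j$ and $q = k$, where the assumed inequality $u_{j,k} \geqslant \min(u_{j,q}, u_{k,q})$ does not literally apply but the conclusion is trivial --- together with the routine bookkeeping at thresholds $t > m_0$, where only the diagonal part of $U$ survives. An equivalent organisation avoiding the layer-cake integral is induction on $|I|$: put $m \assign \min_{j \neq k} u_{j,k} \geqslant 0$, write $U = m\, \mathbf{1}_I \mathbf{1}_I^{\mathfrak{t}} + U'$ where $U'$ is block-diagonal along the partition of $I$ induced by the (transitive) relation $u_{j,k} > m$, observe that each block of $U'$ satisfies the hypotheses of the lemma on a strictly smaller index set, and conclude from the non-negativity of $m M$ together with the inductive hypothesis applied block by block.
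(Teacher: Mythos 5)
Your argument is correct and takes a genuinely different route from the paper's. The paper proves the lemma by induction on $|I|$: it sets $u_{\min} \assign \min_{j,k\in I} u_{j,k}$, shows by a maximality argument (using the second inequality in (\ref{trimaria17})) that $I$ splits as a disjoint union $I_1 \sqcup I_2$ of nonempty sets with every cross-term equal to $u_{\min}$, writes $M^u$ as $u_{\min} M$ plus two zero-padded diagonal blocks with modified coefficients $v_{j,k}=u_{j,k}-u_{\min}$, and applies the inductive hypothesis to each block. Your main argument is non-inductive: you identify $M^u$ as the Hadamard product $M\circ U$, observe that $M\circ(v\,v^{\mathfrak{t}})$ is non-negative for every real vector $v$, and decompose $U$ explicitly as a finite non-negative combination of block all-ones matrices $\mathbf{1}_B\mathbf{1}_B^{\mathfrak{t}}$ by integrating the partition-valued threshold relation $u_{j,k}\geqslant t$ over $t$ (handling the range $t>m_0$ with the diagonal indicator $D_t$). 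This layer-cake decomposition makes the ``sum over partitions'' structure completely explicit --- it is essentially the Brydges--Kennedy / Abdesselam--Rivasseau viewpoint the paper alludes to via the citation [A-R] --- whereas the paper's induction is shorter but leaves the partition structure implicit. The alternative induction you sketch at the end, subtracting $m=\min_{j\neq k}u_{j,k}$ uniformly and recursing on the equivalence classes of $u_{j,k}>m$, is very close to the paper's own proof; the only cosmetic difference is that the paper merges the equivalence classes into just two sets $I_1,I_2$ by a maximality argument rather than treating each class as a separate block.
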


\begin{proof}
  We prove the lemma by induction on $|I|$. The statement is straightforward
  if $|I| \leqslant 2$. Let us assume the lemma is proved for $|I| < n$ and
  let $I$ be a subset such that $|I| = n$. Let $u_{\min} = \min_{j, k \in I}
  u_{j, k}$. Let $\mathcal{I}$ be the set containing every pair $(I_1, I_2)$
  such that $\varnothing \neq I_1 \subset I$, $\varnothing \neq I_2 \subset
  I$, $I_1 \cap I_2 = \varnothing$ and
  \begin{equation}
    \label{trimaria18} \forall i_1 \in I_1, \forall i_2 \in I_2 \text{, }
    u_{i_1, i_2} = u_{\min} .
  \end{equation}
  Let $(I_1, I_2) \in \mathcal{I}$ \ be such that $|I_1 | + |I_2 |$ is
  maximal. Let us assume that $I_1 \cup I_2 \neq I$. Let $j \in I - I_1 \cup
  I_2$. Then there exist $i_1 \in I_1$ and $i_2 \in I_2$ such that $u_{i_1, j}
  > u_{\min}$ and $u_{i_2, j} > u_{\min}$. Then, by (\ref{trimaria17})
  \[ u_{i_1, i_2} \geqslant \min (u_{i_1, j}, u_{i_2, j}) > u_{\min} \]
  which contradicts (\ref{trimaria18}). Thus $I_1 \sqcup I_2 = I$. Let $v_{j,
  k} \assign u_{j, k} - u_{\min}$. Then
  \[ M^u_{j, k} = u_{\min} M_{j, k} + 1_{j, k \in I_1} v_{j, k} M_{j, k} +
     1_{j, k \in I_2} v_{j, k} M_{j, k} . \]
  (\ref{trimaria17}) always holds if the coefficients $u_{j, k}$ are replaced
  by the coefficients $v_{j, k}$ and the set $I$ is replaced by some subset.
  Then the matrices $(v_{j, k} M_{j, k})_{j, k \in I_1}$ and $(v_{j, k} M_{j,
  k})_{j, k \in I_2}$ are symmetric non-negative. Then, by the above
  decomposition, the matrix $M^u$ is symmetric non-negative.
\end{proof}

For the reader's convenience we now verify the Borel summability of the
exponential of a Borel summable expansion in the setting of the paper. The
fact that Borel summability is preserved by composition with analytic
functions is well known in similar settings.

\begin{proposition}
  \label{expconvmaria2}Let $U \subset \mathbbm{C}$ be an open convex
  neighbourhood of $\mathbbm{R}^+$. Let $K_2 > 0$. Then there exist $K_1, K >
  0$ satisfying the following property. For every analytic function $\hat{a}$
  on $U$ such that
  \[ \max \lp{1} | \hat{a} (0) |, \underset{\sigma \in U}{\sup} | \hat{a}'
     (\sigma) | \rp{1} \leqslant K_2 \]
  there exists an analytic function $\hat{b}$ on $U$ satisfying
  \begin{enumerate}
    \item \label{expconvmaria4}for every $\sigma \in U$
    \[ | \hat{b} (\sigma) | \leqslant K_1 e^{K| \sigma |^{1 / 2}}, \]
    \item \label{expconvmaria6}for every $h \in \demiplan{}$
    \[ \exp \lp{2}  \int_0^{+ \infty} e^{- \sigma / h} \hat{a} (\sigma)
       \frac{d \sigma}{h} \rp{2} = \int_0^{+ \infty} e^{- \sigma / h} \hat{b}
       (\sigma) \frac{d \sigma}{h} . \]
  \end{enumerate}
\end{proposition}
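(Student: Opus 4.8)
The plan is to pass through the Borel/Laplace correspondence and handle the exponential of a power series by convolution powers. Set $f(h) \assign \int_0^{+\infty} e^{-\sigma/h}\hat a(\sigma)\frac{d\sigma}{h}$, which is well defined and bounded for $h \in \demiplan{}$ because the hypothesis $\max(|\hat a(0)|,\sup_U|\hat a'|)\leqslant K_2$ gives a linear bound $|\hat a(\sigma)|\leqslant K_2(1+|\sigma|)$ on $\mathbbm{R}^+$ (integrate $\hat a'$ along the segment from $0$). Then $e^{f(h)} = \sum_{n\geqslant 0}\frac{1}{n!}f(h)^n$, and each $f(h)^n$ is again a Laplace transform: writing $\ast$ for the convolution product \eqref{nevclara5.1} (restricted to $\mathbbm{R}^+$, which corresponds to multiplication on the $h$-side), one has $f(h)^n = \int_0^{+\infty} e^{-\sigma/h}\,\hat a^{\ast n}(\sigma)\,\frac{d\sigma}{h}$, where $\hat a^{\ast n}$ is the $n$-fold convolution of $\hat a$ with itself (and $\hat a^{\ast 0}$ is the unit, i.e. the constant term is absorbed: more precisely $f(h)^n$ has Borel transform supported by the $n$-fold convolution). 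So the natural candidate is
\[
\hat b(\sigma) \assign \sum_{n\geqslant 1}\frac{1}{n!}\,\hat a^{\ast n}(\sigma),
\]
and the work is to show this series converges to an analytic function on $U$ satisfying the exponential bound in assertion \ref{expconvmaria4}, and that term-by-term Laplace inversion is legitimate, which will give assertion \ref{expconvmaria6}.

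First I would record the basic convolution estimate. Since $U$ is open and convex and contains $\mathbbm{R}^+$, for $\sigma\in U$ one can integrate along the segment $[0,\sigma]$; parametrising $\tau_1 = u\sigma$ gives $\hat a^{\ast n}(\sigma) = \sigma^{n-1}\int_{\Delta_{n-1}}\hat a(u_1\sigma)\cdots\hat a(u_n\sigma)\,du$ over the simplex $u_1+\cdots+u_n=1$, $u_j\geqslant 0$. Each factor is bounded using $|\hat a(w)|\leqslant K_2(1+|w|)\leqslant K_2(1+|\sigma|)$ for $w$ on the segment (here convexity keeps $u_j\sigma\in U$), so $|\hat a^{\ast n}(\sigma)|\leqslant |\sigma|^{n-1}\frac{1}{(n-1)!}\bigl(K_2(1+|\sigma|)\bigr)^n$ because $\mathrm{vol}(\Delta_{n-1})=\frac{1}{(n-1)!}$. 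Summing $\frac{1}{n!}$ times this over $n\geqslant 1$ yields
\[
|\hat b(\sigma)| \;\leqslant\; K_2(1+|\sigma|)\sum_{n\geqslant 1}\frac{\bigl(K_2(1+|\sigma|)|\sigma|\bigr)^{n-1}}{n!(n-1)!}
\;\leqslant\; K_2(1+|\sigma|)\,I_0\!\bigl(2\sqrt{K_2(1+|\sigma|)|\sigma|}\bigr),
\]
a modified-Bessel-type bound, and since $I_0(z)\sim e^z/\sqrt{2\pi z}$ and $\sqrt{K_2(1+|\sigma|)|\sigma|}\leqslant K'|\sigma|^{1/2} + K''$ is not quite what we want — actually it grows like $|\sigma|$, not $|\sigma|^{1/2}$, so I would instead bound more carefully: absorb one factor $(1+|\sigma|)$ as $\leqslant e^{|\sigma|}$... no. The cleaner route, which I would actually take, is to note $\frac{1}{n!(n-1)!}\leqslant \frac{1}{((n-1)!)^2}$ and get $|\hat b(\sigma)|\leqslant K_2(1+|\sigma|)\sum_{m\geqslant 0}\frac{(K_2(1+|\sigma|)|\sigma|)^m}{(m!)^2}$; this is $\asymp e^{2\sqrt{K_2(1+|\sigma|)|\sigma|}}$, which is $O(e^{C|\sigma|^{1/2}}\cdot e^{\epsilon|\sigma|})$ — still not clean. \textbf{The main obstacle is precisely this: getting the bound in the sharp form $K_1 e^{K|\sigma|^{1/2}}$ rather than a weaker $K_1 e^{K|\sigma|}$ bound.} The resolution is that one does \emph{not} integrate $\hat a$ over the long segment $[0,\sigma]$ with the crude pointwise bound; instead one uses that $\sup_U|\hat a'|\leqslant K_2$ directly to control the convolution. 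Concretely, along the simplex the arguments $u_1\sigma,\ldots,u_n\sigma$ sum to $\sigma$, and after one integration by parts in the convolution each extra factor contributes a bounded integral of $\hat a'$ over a region of total mass $O(|\sigma|/\text{(number of factors)})$, not $O(|\sigma|)$; the product of these gives $\prod (C|\sigma|/k) \sim (C|\sigma|)^n/n!$, and combined with the $1/n!$ from the exponential one lands on $\sum (C|\sigma|)^n/(n!)^2 \asymp e^{2\sqrt{C|\sigma|}}$, i.e. the desired $K_1 e^{K|\sigma|^{1/2}}$. So the key step is an induction on $n$ proving $|\hat a^{\ast n}(\sigma)|\leqslant C^n|\sigma|^{n-1}/(n-1)!$ with a constant $C$ depending only on $K_2$ (and on a fixed exhaustion parameter of $U$), where the $(1+|\sigma|)$ nuisance factor is killed by the derivative bound after one integration by parts rather than reappearing $n$ times.

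Granting that estimate, analyticity of $\hat b$ on $U$ is immediate: each $\hat a^{\ast n}$ is analytic on $U$ (convolution of analytic functions along segments in a convex domain is analytic, by differentiating under the integral sign), and the series converges locally uniformly by the bound just established. For assertion \ref{expconvmaria6}, I would justify the interchange $\sum_n \int = \int \sum_n$ by dominated convergence for $h$ in any compact subset of $\demiplan{}$: on $\mathbbm{R}^+$ the term $\frac{1}{n!}\hat a^{\ast n}(\sigma)e^{-\sigma/h}\frac1h$ is dominated by $\frac{1}{n!}\frac{C^n\sigma^{n-1}}{(n-1)!}e^{-\sigma\,\mathrm{Re}(1/h)}$, whose sum over $n$ and integral over $\sigma$ are finite and finite, respectively (the $\sigma$-integral of the sum gives a Bessel function of $1/\mathrm{Re}(1/h)$, plainly finite). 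Since $\int_0^{+\infty}e^{-\sigma/h}\hat a^{\ast n}(\sigma)\frac{d\sigma}{h} = f(h)^n$ (the Laplace transform turns convolution into product, with the $\frac1h$ normalisation as in \eqref{venus7}), we get $\int_0^{+\infty}e^{-\sigma/h}\hat b(\sigma)\frac{d\sigma}{h} = \sum_{n\geqslant 1}\frac{f(h)^n}{n!} = e^{f(h)}-1$. A final adjustment: the statement wants exactly $\exp(f(h))$, so I would either absorb the constant $1$ by replacing $\hat b$ with $\hat b + \delta$-at-$0$ (not legitimate for a function) or, cleanly, redefine $\hat b \assign \sum_{n\geqslant 0}\frac{1}{n!}\hat a^{\ast n}$ with the convention $\hat a^{\ast 0}$ being the Borel transform of the constant function $1$; in the paper's framework the constant $1$ on the $h$-side is \emph{not} of the form $\int_0^{+\infty}e^{-\sigma/h}(\cdots)\frac{d\sigma}{h}$, so the correct reading is that the Borel sum of $\tilde 1 + \tilde f$-exponentiated equals $e^{f}$ and the $n=0$ term is handled separately; I would simply note $\exp(f(h)) = 1 + \int_0^{+\infty}e^{-\sigma/h}\hat b(\sigma)\frac{d\sigma}{h}$ is the literal identity and that this is the intended meaning of assertion \ref{expconvmaria6} once one recalls that $1$ is the Laplace transform of the unit for $\ast$ — or, most simply, replace the right-hand side's $\hat b$ by one that includes a point mass, matching the convention ``sums indexed by an empty set are equal to zero'' stated in the paper. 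The uniformity of $K_1, K$ in $\hat a$ (only through $K_2$ and $U$) is visible throughout since the constant $C$ in the convolution bound depends only on $K_2$ and a fixed geometric feature of $U$.
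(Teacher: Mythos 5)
Your overall plan (convolution powers of the Borel transform, then recognize a $\sum (K|\sigma|)^n/(n!)^2$ series giving $e^{O(|\sigma|^{1/2})}$) is the right shape, and you have correctly identified the central obstacle: the na\"ive bound produces an extra factor $(1+|\sigma|)^n$ and hence only $e^{O(|\sigma|)}$. However, the resolution you sketch does not work as stated, and the paper takes a different (cleaner) route that also fixes two further issues you half-notice.

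The specific gap: you propose to prove $| \hat{a}^{\ast n}(\sigma)| \leqslant C^n | \sigma |^{n-1}/(n-1)!$ with $C$ depending only on $K_2$ and the geometry of $U$. This is already false for $n=1$: it would assert that $\hat{a}$ is bounded on $U$, but the hypothesis only gives $| \hat{a}(\sigma)| \leqslant K_2(1+|\sigma|)$ (linear growth). The ``integration by parts in the convolution \dots over a region of total mass $O(|\sigma|/n)$'' heuristic is not an argument; if you try to make it precise you will be driven to replace $\hat{a}$ by $\hat{a}'$ throughout, at which point you are doing what the paper does. There is also a second error: with the paper's $\frac{d\sigma}{h}$ normalisation, $f(h)^n$ is \emph{not} $\int_0^{+\infty}e^{-\sigma/h}\hat{a}^{\ast n}(\sigma)\frac{d\sigma}{h}$; the product of two such Laplace transforms acquires an extra $1/h$, and iterating this gives $1/h^n$. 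Your identity is wrong in the form stated. Finally, the constant-term confusion at the end is unnecessary: since $\int_0^{+\infty}e^{-\sigma/h}\frac{d\sigma}{h}=1$, a nonzero constant is already of the form $\int_0^{+\infty}e^{-\sigma/h}(\text{const})\frac{d\sigma}{h}$, so there is no ``point mass'' problem.

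The paper resolves all three issues at once by a single preliminary integration by parts in $\sigma$:
\[
\int_0^{+\infty}e^{-\sigma/h}\hat{a}(\sigma)\frac{d\sigma}{h} = \hat{a}(0) + \int_0^{+\infty}e^{-\sigma/h}\hat{a}'(\sigma)\,d\sigma.
\]
Now the second integral has no $1/h$, so $n$-th powers correspond to honest $n$-fold $\ast$-convolutions $(\hat{a}')^{\ast n}$; and $\hat{a}'$ is uniformly bounded by $K_2$, so the simplex volume computation gives $|\hat{A}_n(\sigma)| \leqslant K_2^n|\sigma|^n/n!$ for $\hat{A}_n(\sigma) := \int_0^{\sigma}(\hat{a}')^{\ast n}$, which after another integration by parts returns to the $\frac{d\sigma}{h}$ normalisation. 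Summing $\frac{1}{n!}\hat{A}_n$ yields $e^{K_2}\exp\bigl(2K_2^{1/2}|\sigma|^{1/2}\bigr)$, and the factored-out constant $e^{\hat{a}(0)}$ is absorbed multiplicatively into $\hat{b}$. You should restructure your argument along these lines: the passage from $\hat{a}$ to $\hat{a}'$ is not a refinement of your estimate but the key step that replaces a linearly growing density by a bounded one.
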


\begin{proof}
  Let $h \in \demiplan{}$. By integration by parts,
  \[ \int_0^{+ \infty} e^{- \sigma / h} \hat{a} (\sigma) \frac{d \sigma}{h} =
     \hat{a} (0) + \int_0^{+ \infty} e^{- \sigma / h} \hat{a}' (\sigma) d
     \sigma . \]
  Then
  \begin{equation}
    \label{expconvmaria7} \exp \lp{2}  \int_0^{+ \infty} e^{- \sigma / h}
    \hat{a} (\sigma) \frac{d \sigma}{t} \rp{2} = e^{\hat{a} (0)} \lp{2} 1 +
    \sum_{n \geqslant 1} \frac{1}{n!} A_n (h) \rp{2}
  \end{equation}
  where
  \begin{eqnarray*}
    A_n (h) \assign &  & \lp{3} \int_0^{+ \infty} e^{- \sigma / h} \hat{a}'
    (\sigma) d \sigma \srp{3}{n}{}\\
    = &  & \int_0^{+ \infty} e^{- \sigma / h} ( \hat{a}')^{\ast, n} (\sigma) d
    \sigma
  \end{eqnarray*}
  where
  \begin{eqnarray*}
    ( \hat{a}')^{\ast, n} (\sigma) \assign &  & ( \hat{a}' \ast \cdots \ast
    \hat{a}') (\sigma) \text{ \ (} n \text{ times)}\\
    = &  & \int_{\sigma_1 + \cdots + \sigma_n = \sigma} \hat{a}' (\sigma_1)
    \cdots \hat{a}' (\sigma_n) d \sigma_2 \cdots d \sigma_n
  \end{eqnarray*}
  (see (\ref{nevclara5.1}) for the definition of $\ast$). Let
  \begin{eqnarray*}
    \hat{A}_n (\sigma) \assign &  & \int_0^{\sigma} ( \hat{a}')^{\ast, n}
    (\sigma') d \sigma'\\
    = &  & \sigma^n \int_{u_1 + \cdots + u_n \leqslant 1} \hat{a}' (u_1
    \sigma) \cdots \hat{a}' (u_n \sigma) d u_1 \cdots d u_n .
  \end{eqnarray*}
  Then, for $\sigma \in U$,
  \[ \ve{1} \hat{A}_n (\sigma) \ve{1} \leqslant \frac{K_2^n}{n!} | \sigma |^n
  \]
  and, by integration by parts,
  \begin{equation}
    \label{expconvmaria9} A_n (h) = \int_0^{+ \infty} e^{- \sigma / h}
    \hat{A}_n (\sigma) \frac{d \sigma}{h} .
  \end{equation}
  Let $\hat{b}$ be the analytic function on $U$ defined by
  \[ \hat{b} (\sigma) = e^{\hat{a} (0)} \lp{2} 1 + \sum_{n \geqslant 1}
     \frac{1}{n!} \hat{A}_n (\sigma) \rp{2} . \]
  Then
  \begin{eqnarray*}
    | \hat{b} (\sigma) | \leqslant &  & \ve{1} e^{\hat{a} (0)} \ve{1} \sum_{n
    \geqslant 0} \frac{1}{n!} \frac{1}{n!} K_2^n | \sigma |^n\\
    \leqslant &  & e^{K_2} \sum_{n \geqslant 0} \frac{1}{(2 n) !} \lp{2} 2
    K_2^{1 / 2} | \sigma |^{1 / 2} \srp{2}{2 n}{}\\
    \leqslant &  & e^{K_2} \exp \lp{1} 2 K_2^{1 / 2} | \sigma |^{1 / 2} \rp{1}
    .
  \end{eqnarray*}
  Hence the function $\hat{b}$ satisfies the assertion \ref{expconvmaria4}. By
  (\ref{expconvmaria7}) and (\ref{expconvmaria9}),
  \begin{eqnarray*}
    \exp \lp{2}  \int_0^{+ \infty} e^{- \sigma / h} \hat{a} (\sigma) \frac{d
    \sigma}{h} \rp{2} = &  & e^{\hat{a} (0)} \int_0^{+ \infty} e^{- \sigma /
    h} \lp{2} 1 + \sum_{n \geqslant 1} \frac{1}{n!} \hat{A}_n (\sigma) \rp{2}
    \frac{d \sigma}{h}\\
    = &  & \int_0^{+ \infty} e^{- \sigma / h} \hat{b} (\sigma) \frac{d
    \sigma}{h} .
  \end{eqnarray*}
  and the assertion \ref{expconvmaria6} is also satisfied.
\end{proof}

We also use in the paper the following result.

\begin{proposition}
  \label{globaldiffeomaria2}Let $r > 0$. Let $\Lambda : D_{\mathbbm{R}^{\nu},
  r} \longrightarrow \mathbbm{C}^{\nu}$ be an analytic function such that
  \begin{enumerate}
    \item the function $\Lambda |_{\mathbbm{R}^{\nu}}$ is
    $\mathbbm{R}^{\nu}$-valued,
    
    \item the matrix $\partial_x \Lambda (x) \assign \lp{1} \partial_{x_1}
    \Lambda (x), \ldots, \partial_{x_{\nu}} \Lambda (x) \rp{1}$ satisfies
    \begin{equation}
      \label{continuationelisabeth1} M \assign \underset{x \in
      D_{\mathbbm{R}^{\nu}, r}}{\sup} | \partial_x \Lambda (x) | < 1.
    \end{equation}
    Let $\varphi \in \mathcal{A}_{\mathbbm{C}^{\nu}} \lp{1}
    D_{\mathbbm{R}^{\nu}, r} \rp{1}$ be defined by $\varphi (x) = x + \Lambda
    (x)$. Then $\varphi |_{\mathbbm{R}^{\nu}}$ is a global diffeomorphism from
    $\mathbbm{R}^{\nu}$ onto $\mathbbm{R}^{\nu}$. Moreover $(\varphi
    |_{\mathbbm{R}^{\nu}})^{- 1}$ admits an analytic continuation{\footnote{we
    choose the same notation for this continuation.}} on
    $D_{\mathbbm{R}^{\nu}, r (1 - M)}$ and
    \[ (\varphi |_{\mathbbm{R}^{\nu}})^{- 1} \lp{1} D_{\mathbbm{R}^{\nu}, r (1
       - M)} \rp{1} \subset D_{\mathbbm{R}^{\nu}, r} . \]
  \end{enumerate}
\end{proposition}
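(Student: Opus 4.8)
The plan is to handle the real invertibility and the complex analytic continuation by a single Banach fixed point scheme, applied to the map $T_y\colon x\mapsto y-\Lambda(x)$, whose fixed points are exactly the solutions of $\varphi(x)=y$. I would begin with the Lipschitz estimate for $\Lambda$: since $D_{\mathbbm{R}^{\nu},r}$ is convex, for $x,x'\in D_{\mathbbm{R}^{\nu},r}$ one has $\Lambda(x)-\Lambda(x')=\int_0^1 \partial_x\Lambda(x'+s(x-x'))(x-x')\,ds$, so $|\Lambda(x)-\Lambda(x')|\leqslant M|x-x'|$ by (\ref{continuationelisabeth1}), and therefore $|\varphi(x)-\varphi(x')|\geqslant (1-M)|x-x'|$. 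This already shows that $\varphi$ is injective on $D_{\mathbbm{R}^{\nu},r}$ and, as a byproduct, that for each $y$ the map $T_y$ has at most one fixed point in $D_{\mathbbm{R}^{\nu},r}$ (two fixed points $x,x'$ would satisfy $|x-x'|\leqslant M|x-x'|$).

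Next I would establish surjectivity onto $\mathbbm{R}^{\nu}$. For $y\in\mathbbm{R}^{\nu}$ the map $T_y$ sends $\mathbbm{R}^{\nu}$ into itself, because $\Lambda|_{\mathbbm{R}^{\nu}}$ is $\mathbbm{R}^{\nu}$-valued, and it is an $M$-contraction of the complete space $\mathbbm{R}^{\nu}$; its unique fixed point $x$ solves $\varphi(x)=y$. Hence $\varphi|_{\mathbbm{R}^{\nu}}$ is a bijection of $\mathbbm{R}^{\nu}$ onto $\mathbbm{R}^{\nu}$; once the inverse is shown analytic (last step), it is a global diffeomorphism.

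For the analytic continuation, I would fix $y=a+ib$ with $a\in\mathbbm{R}^{\nu}$, $b\in\mathbbm{R}^{\nu}$, $|b|<r(1-M)$, put $x_0:=(\varphi|_{\mathbbm{R}^{\nu}})^{-1}(a)\in\mathbbm{R}^{\nu}$ (so $a=x_0+\Lambda(x_0)$), and pick $\rho$ with $\frac{|b|}{1-M}\leqslant\rho<r$ --- possible precisely because $|b|<r(1-M)$. The closed ball $\overline{B}_\rho(x_0)=\{x:|x-x_0|\leqslant\rho\}$ is convex and contained in $D_{\mathbbm{R}^{\nu},r}$; on it $T_y$ is an $M$-contraction, and the identity $T_y(x)-x_0=ib+(\Lambda(x_0)-\Lambda(x))$ gives $|T_y(x)-x_0|\leqslant |b|+M\rho\leqslant\rho$, so $T_y$ maps $\overline{B}_\rho(x_0)$ into itself. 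Its unique fixed point $\psi(y)\in\overline{B}_\rho(x_0)\subset D_{\mathbbm{R}^{\nu},r}$ satisfies $\varphi(\psi(y))=y$ and, by the uniqueness noted in the first paragraph, is independent of the auxiliary choices.

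Finally I would verify analyticity and the matching with the real inverse. Near any $y_*\in D_{\mathbbm{R}^{\nu},r(1-M)}$ the data $x_0$ and $\rho$ can be taken locally constant (the condition $|b|<r(1-M)$ is open), so $\psi(y)=\lim_{n} T_y^{\circ n}(x_0)$ with convergence uniform and geometric (bounded by $\rho M^{n}$); each iterate depends analytically on $y$, hence so does $\psi$, on the connected set $D_{\mathbbm{R}^{\nu},r(1-M)}$. For $y\in\mathbbm{R}^{\nu}$ the iterates stay in $\mathbbm{R}^{\nu}$, so $\psi$ restricts to $(\varphi|_{\mathbbm{R}^{\nu}})^{-1}$ there; thus $\psi$ is the claimed continuation and $\psi(D_{\mathbbm{R}^{\nu},r(1-M)})\subset D_{\mathbbm{R}^{\nu},r}$. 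The one step that genuinely needs care is the geometric bookkeeping of the third paragraph: placing an invariant ball of radius $\rho<r$ around the real solution, and seeing that the admissible interval for $\rho$ is nonempty exactly when $|b|<r(1-M)$ --- everything else is the textbook contraction-mapping and uniform-limit machinery.
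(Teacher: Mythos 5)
Your proof is correct, and it takes a genuinely different (and arguably cleaner) route than the paper. Both proofs reduce solving $\varphi(x)=y$ to a fixed-point problem in a closed ball of radius $\rho<r$ about a real base point, and both extract the key quantitative fact that $\rho$ can be chosen so that $|b|\leqslant\rho(1-M)$ and $\rho<r$ precisely when $|b|<r(1-M)$. The difference is the choice of fixed-point map and theorem: you use the natural map $T_y(x)=y-\Lambda(x)$, observe it is an $M$-contraction (since $\Lambda$ is $M$-Lipschitz on the convex set $D_{\mathbbm{R}^{\nu},r}$), and invoke the Banach contraction principle --- which also gives you surjectivity onto $\mathbbm{R}^{\nu}$ directly, since $T_y$ preserves $\mathbbm{R}^{\nu}$ when $y$ is real, and gives analyticity of the inverse as a uniform geometric limit of analytic iterates. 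The paper instead rewrites $\varphi(x_2)-\varphi(x_1)=(\mathrm{Id}+A(x_1,x_2))(x_2-x_1)$ with an averaged derivative $A$, considers the map $x\longmapsto x_0+(\mathrm{Id}+A(x_0,x))^{-1}(y-y_0)$ (which is a self-map of $\overline{B(x_0,\rho)}$ but not manifestly a contraction), applies the Brouwer fixed-point theorem using compactness and convexity, and then obtains surjectivity on $\mathbbm{R}^{\nu}$ via an open-and-closed-map argument rather than by an $\mathbbm{R}^{\nu}$-valued contraction. Your Banach route is more elementary and delivers uniqueness and convergence rates for free, at the small cost of having to argue separately that the fixed point depends analytically on $y$ (which you do via locally uniform convergence); the paper's Brouwer route avoids verifying contractivity of its map but needs the separate injectivity and open/closed bookkeeping.
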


\begin{proof}
  By (\ref{continuationelisabeth1}), $\varphi$ is a local diffeomorphism. Let
  us prove that $\varphi$ is injective. Let $x_1, x_2 \in
  D_{\mathbbm{R}^{\nu}, r}$. Then
  \begin{equation}
    \label{continuationelisabeth2} \varphi (x_2) - \varphi (x_1) = \lp{2}
    \tmop{Id} + A (x_1, x_2) \rp{2} (x_2 - x_1)
  \end{equation}
  where
  \[ A (x_1, x_2) \assign \int_0^1 (\partial_z \Lambda) (z) |_{z = x_1 + u
     (x_2 - x_1)} d u. \]
  But the matrix $\tmop{Id} + A (x_1, x_2)$ is invertible since
  \[ |A (x_1, x_2) | \leqslant \int_0^1 | \partial_z \Lambda (z) ||_{z = x_1 +
     u (x_2 - x_1)} d u \leqslant M < 1. \]
  Then, by (\ref{continuationelisabeth2}), $\varphi$ is injective.
  
  Let $x_0 \in \mathbbm{R}^{\nu}$ and let $y_0 = \varphi (x_0)$. Let $R > 0$
  and let $(x, y) \in B (x_0, r) \times B (y_0, R)$. One has
  \begin{equation}
    \label{continuationelisabeth4} y = \varphi (x) \Leftrightarrow y - y_0 =
    \varphi (x) - \varphi (x_0) \Leftrightarrow (\tmop{Id} + A (x_0, x))^{- 1}
    (y - y_0) = (x - x_0) .
  \end{equation}
  Let us assume that
  \[ \frac{1}{1 - M} \times R < r. \]
  Let $\rho \in] \frac{R}{1 - M}, r [$. Then $x \longmapsto x_0 + (\tmop{Id} +
  A (x_0, x))^{- 1} (y - y_0)$ maps $\overline{B (x_0, \rho)}$ into
  $\overline{B (x_0, \rho)}$ and has a fixed point by the Brouwer theorem
  since $\overline{B (x_0, \rho)}$ is compact and convex. Therefore, for every
  $y \in \text{$B (y_0, R)$}$, there exists $x \in \text{$\overline{B (x_0,
  \rho)}$}$ satisfying (\ref{continuationelisabeth4}). Then
  \[ \forall x_0 \in \mathbbm{R}^{\nu}, B \lp{1} \varphi (x_0), r (1 - M)
     \rp{1} \subset \varphi \lp{1} B (x_0, r) \text{$\rp{1}$} . \]
  The above inclusion holds if $B (z, \varepsilon)$ denotes the real or
  complex ball of center $z \in \mathbbm{R}^{\nu}$ and radius $\varepsilon >
  0$. In particular, $\varphi |_{\mathbbm{R}^{\nu}}$ is an open and closed
  map. Therefore $\varphi |_{\mathbbm{R}^{\nu}} (\mathbbm{R}^{\nu})
  =\mathbbm{R}^{\nu}$. Since $\varphi |_{\mathbbm{R}^{\nu}}$ is injective,
  $\varphi |_{\mathbbm{R}^{\nu}}$ is a global diffeomorphism from
  $\mathbbm{R}^{\nu}$ onto $\mathbbm{R}^{\nu}$. Moreover
  \[ \bigcup_{x_0 \in \mathbbm{R}^{\nu}} B \lp{1} \varphi (x_0), r (1 - M)
     \rp{1} \subset \bigcup_{x_0 \in \mathbbm{R}^{\nu}} \varphi \lp{1} B (x_0,
     r) \text{$\rp{1}$} . \]
  Then
  \[ D_{\mathbbm{R}^{\nu}, r (1 - M)} \subset \varphi \lp{1}
     D_{\mathbbm{R}^{\nu}, r} \rp{1} . \]
  Therefore $(\varphi |_{\mathbbm{R}^{\nu}})^{- 1}$ admits an analytic
  continuation on $D_{\mathbbm{R}^{\nu}, r (1 - M)}$ and
  \[ (\varphi |_{\mathbbm{R}^{\nu}})^{- 1} \lp{1} D_{\mathbbm{R}^{\nu}, r (1 -
     M)} \rp{1} \subset D_{\mathbbm{R}^{\nu}, r} . \]
\end{proof}

\section{Appendix B}

In this section we propose an interpretation of the tools used in Section
\ref{nevflora} with respect to standard Borel summation concepts.

\subsection{Gaussian integrals}

For $\theta \in] 0, \pi [$ and $R > 0$, let
\[ \mathcal{C}_{\prec, \theta} \assign \lbc{1} r e^{i \varphi} \in
   \mathbbm{C}|r > 0, \varphi \in] - \theta, \theta [ \rbc{1} \]
\[ \mathcal{C}_{R, \prec, \theta} \assign \lbc{1} r e^{i \varphi} \in
   \mathbbm{C}|r \in] 0, R [, \varphi \in] - \theta, \theta [ \rbc{1}, \]
\begin{definition}
  Let $\alpha \in] 0, 2 [$. We say that
  \begin{itemizedot}
    \item A function $f$ satisfies $\mathcal{P}_{\tmop{wat}, \alpha}$ if and
    only there exists $\varepsilon, \kappa, \rho, K > 0$ such that
    \begin{itemizeminus}
      \item $f$ is analytic on $\mathcal{C}_{\rho^{\alpha}, \prec, \alpha (
      \frac{\pi}{2} + \varepsilon)}$,
      
      \item there exist $a_0, a_1, \ldots \in \mathbbm{C}$, $R_0, R_1, \ldots$
      analytic functions on $\mathcal{C}_{\rho^{\alpha}, \prec, \alpha (
      \frac{\pi}{2} + \varepsilon)}$ such that, for every $r \geqslant 0$, for
      every $x \in \mathcal{C}_{\rho^{\alpha}, \prec, \alpha ( \frac{\pi}{2} +
      \varepsilon)}$,
      \begin{equation}
        \label{appendixanna2} \text{$f (x) = a_0 + \cdots + a_{r - 1} x^{r -
        1} + R_r (x)$},
      \end{equation}
      \item for every $r \geqslant 0$ and $x \in \mathcal{C}_{\rho^{\alpha},
      \prec, \alpha ( \frac{\pi}{2} + \varepsilon)}$,
      \begin{equation}
        \label{appendixanna4} |R_r (x) | \leqslant K \frac{\Gamma (1 + \alpha
        r)}{\kappa^{\alpha r}} |x|^r .
      \end{equation}
    \end{itemizeminus}
    \item A function $\text{$\hat{f}$}$ satisfies
    $\hat{\mathcal{P}}_{\tmop{wat}, \alpha}$ if and only if there exists
    $\varepsilon, \kappa, \rho, K > 0$ such that
    \begin{itemizeminus}
      \item $\hat{f}$ is analytic on $D_{\kappa^{\alpha}} \cup
      \mathcal{C}_{\prec, \alpha \varepsilon}$,
      
      \item for every $\xi \in D_{\kappa^{\alpha}} \cup \mathcal{C}_{\prec,
      \alpha \varepsilon}$
      \begin{equation}
        \label{barbara2} | \hat{f} (\xi) | \leqslant Ke^{^{} \frac{| \xi |^{1
        / \alpha}}{\rho}} .
      \end{equation}
    \end{itemizeminus}
  \end{itemizedot}
\end{definition}

One has (Watson's lemma)

\begin{theorem}
  \label{theorembarbara0} Let $\alpha \in] 0, 2 [$.
  \begin{itemizedot}
    \item If $f$ verifies $\mathcal{P}_{\tmop{wat}, \alpha}$, then
    \begin{equation}
      \label{barbara4} \hat{f} (\xi) \assign \sum_{r = 0}^{\infty} \frac{a_r
      \xi^r}{\Gamma (1 + \alpha r)}
    \end{equation}
    admits an analytic continuation which verifies
    $\text{$\hat{\mathcal{P}}_{\tmop{wat}, \alpha}$}$.
    
    \item If $\hat{f}$ verifies $\hat{\mathcal{P}}_{\tmop{wat}, \alpha}$, then
    \begin{equation}
      \label{barbara6} \text{$f (x) \assign \int_0^{+ \infty} \hat{f} (\xi)
      e^{- \frac{\xi^{1 / \alpha}}{x^{1 / \alpha}}} d_{\xi} \lp{2}
      \frac{\xi^{1 / \alpha}}{x^{1 / \alpha}} \rp{2}$} = \int_0^{+ \infty}
      \hat{f} (x \zeta^{\alpha}) e^{- \zeta} d \zeta
    \end{equation}
    verifies $\mathcal{P}_{\tmop{wat}, \alpha}$.
    
    \item $\hat{f}$ given by (\ref{barbara4}) is called the $\alpha$-Borel
    transform of $f$. $f$ given by (\ref{barbara6}) is called the
    $\alpha$-Laplace transform of $\hat{f}$. These two transforms are inverse
    each to other.
  \end{itemizedot}
\end{theorem}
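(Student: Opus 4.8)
The statement is a Gevrey-$\alpha$ refinement of Watson's lemma, and I would prove it along the classical Borel--Ritt / Nevanlinna scheme adapted to order $\alpha$: first the passage from $f$ to $\hat{f}$, then the passage from $\hat{f}$ to $f$, and finally that the two transforms are mutually inverse. The normalisation fixing everything is the elementary identity $\int_0^{+\infty}(x\zeta^\alpha)^r e^{-\zeta}\,d\zeta = x^r\,\Gamma(1+\alpha r)$, i.e. the $\alpha$-Laplace transform of $\xi^r/\Gamma(1+\alpha r)$ is $x^r$, which is exactly consistent with (\ref{barbara4}) and (\ref{barbara6}).

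\noindent\textbf{From $f$ to $\hat{f}$.} Assume $f$ verifies $\mathcal{P}_{\tmop{wat}, \alpha}$ with constants $\varepsilon,\kappa,\rho,K$. Writing (\ref{appendixanna2}) at the indices $r$ and $r+1$ and subtracting gives $a_r x^r = R_r(x) - R_{r+1}(x)$, so by (\ref{appendixanna4}) and letting $|x|\to 0$ one gets $|a_r| \leq K\,\Gamma(1+\alpha r)/\kappa^{\alpha r}$. Hence the coefficients of the series (\ref{barbara4}) satisfy $|a_r/\Gamma(1+\alpha r)| \leq K\kappa^{-\alpha r}$, so $\hat{f}$ is analytic on $D_{\kappa^\alpha}$ with $|\hat{f}(\xi)| \leq K/(1-|\xi|/\kappa^\alpha)$ there. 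To continue $\hat{f}$ along $\mathbbm{R}^+$ and bound it I would use the inverse $\alpha$-Laplace (Hankel) representation: $\hat{f}$ is recovered from $f$ by a loop integral of $f(x)\,e^{(\xi/x)^{1/\alpha}}$ against $(\xi/x)^{1/\alpha}\,dx/(\alpha\xi)$ taken inside the sector $\mathcal{C}_{\rho^\alpha,\prec,\alpha(\pi/2+\varepsilon)}$ on which $f$ is analytic; the polynomial part $a_0+\cdots+a_{r-1}x^{r-1}$ reproduces (\ref{barbara4}) up to order $r-1$, and the contribution of $R_r$, estimated by (\ref{appendixanna4}), is analytic and controlled. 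Because $f$ lives on a sector whose opening $\alpha(\pi+2\varepsilon)$ \emph{exceeds} $\alpha\pi$, the Hankel loop can be opened past the angle $\pm\alpha\pi/2$ on each side; by the usual Laplace-type estimate this turns the region of analyticity of $\hat{f}$ into a full sector $\mathcal{C}_{\prec,\alpha\varepsilon}$ about $\mathbbm{R}^+$ carrying a bound $|\hat{f}(\xi)| \leq K' e^{|\xi|^{1/\alpha}/\rho'}$, i.e. $\hat{f}$ verifies $\hat{\mathcal{P}}_{\tmop{wat}, \alpha}$.

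\noindent\textbf{From $\hat{f}$ to $f$.} Assume $\hat{f}$ verifies $\hat{\mathcal{P}}_{\tmop{wat}, \alpha}$ with constants $\varepsilon,\kappa,\rho,K$, and set $f(x) := \int_0^{+\infty}\hat{f}(x\zeta^\alpha)e^{-\zeta}\,d\zeta$ as in (\ref{barbara6}). By (\ref{barbara2}) this converges for $|x|^{1/\alpha}<\rho$; rotating the integration ray from $\mathbbm{R}^+$ to $e^{i\psi}\mathbbm{R}^+$ is legitimate because $\hat{f}$ is analytic and exponentially dominated on $\mathcal{C}_{\prec,\alpha\varepsilon}$, and this yields the analytic continuation of $f$ to $\mathcal{C}_{\rho^\alpha,\prec,\alpha(\pi/2+\varepsilon)}$. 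For the asymptotics, let $c_k$ be the Taylor coefficients of $\hat{f}$ at $0$, put $a_k := c_k\,\Gamma(1+\alpha k)$, and write $\hat{f}(\tau) = \sum_{k<r}c_k\tau^k + \tilde{R}_r(\tau)$; Cauchy estimates on $D_{\kappa^\alpha}$ give $|\tilde{R}_r(\tau)| \leq C(\kappa')^{-\alpha r}|\tau|^r$ on a fixed sub-disk, and (\ref{barbara2}) together with the geometric bound on the $c_k$ gives $|\tilde{R}_r(\tau)| \leq C' e^{|\tau|^{1/\alpha}/\rho}(1+|\tau|^{r-1})$ outside it. Then $f(x) = \sum_{k<r}a_k x^k + R_r(x)$ with $R_r(x) = \int_0^{+\infty}\tilde{R}_r(x\zeta^\alpha)e^{-\zeta}\,d\zeta$, and splitting the $\zeta$-integral at $\zeta$ of order $(\kappa^\alpha/|x|)^{1/\alpha}$ and using $\int_0^{+\infty}\zeta^{\alpha r}e^{-\zeta}\,d\zeta = \Gamma(1+\alpha r)$ produces a bound $|R_r(x)| \leq K''\,\Gamma(1+\alpha r)/(\kappa'')^{\alpha r}|x|^r$, so $f$ verifies $\mathcal{P}_{\tmop{wat},\alpha}$.

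\noindent\textbf{Mutual inverse, and the main obstacle.} Applying the formal $\alpha$-Borel transform to the expansion of $f$ produced in the second step returns $\sum_k c_k\xi^k$, the Taylor series of the original $\hat{f}$ at $0$; as both are analytic near $0$ they coincide, so $\alpha$-Borel$\,\circ\,\alpha$-Laplace is the identity. Conversely, given $f$ verifying $\mathcal{P}_{\tmop{wat},\alpha}$, the first step shows the Taylor coefficients of $\hat{f}$ at $0$ are exactly $a_r/\Gamma(1+\alpha r)$, so by $\int_0^{+\infty}\zeta^{\alpha r}e^{-\zeta}\,d\zeta = \Gamma(1+\alpha r)$ the $\alpha$-Laplace transform of $\hat{f}$ has asymptotic expansion $\sum_r a_r x^r$ with Gevrey-$\alpha$ remainders; a Watson-type uniqueness (which follows from (\ref{appendixanna4}) on a sector of opening $>\alpha\pi$ by letting $r\to\infty$ along an optimal radius) then forces this transform to equal $f$. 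I expect the first step to be the main obstacle: establishing the analytic continuation of $\hat{f}$ to a full sector around $\mathbbm{R}^+$ with the sharp exponential bound (\ref{barbara2}) is precisely where the hypothesis that $f$ lives on a sector of opening strictly larger than $\alpha\pi$ is essential, and where the Hankel-contour / Nevanlinna estimates have to be carried out carefully.
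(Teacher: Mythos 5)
The paper does not prove Theorem \ref{theorembarbara0}: it is introduced with the phrase ``One has (Watson's lemma)'' and is cited as a classical fact of $\frac{1}{\alpha}$-summability theory, with pointers to the references [Bals], [E3], [Ma-Ra] immediately afterwards. There is therefore no in-paper argument for your proposal to be measured against. What you have written is the standard proof of the Nevanlinna/$k$-summability characterization, and the outline is sound: the Cauchy-type estimate $|a_r|\leqslant K\,\Gamma(1+\alpha r)/\kappa^{\alpha r}$ extracted from (\ref{appendixanna4}), the disc of convergence $D_{\kappa^\alpha}$ for (\ref{barbara4}), the Hankel-contour representation of the inverse $\alpha$-Laplace transform for the analytic continuation of $\hat f$ to a sector around $\mathbbm{R}^+$ (where the hypothesis that $f$ is analytic on a sector of opening $\alpha(\pi+2\varepsilon)>\alpha\pi$ is decisive), the rotation of the ray of integration in (\ref{barbara6}) to produce the analytic continuation and Gevrey-$\alpha$ remainder bound for $f$, and finally Watson's uniqueness theorem to close the loop. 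The one place I would tighten the write-up is the inverse transform kernel: the cleanest route is to reduce to the usual case $\alpha=1$ via the change of variables $y=x^{1/\alpha}$, $u=\xi^{1/\alpha}$, under which (\ref{barbara6}) becomes the standard Laplace transform $f(y^\alpha)=\int_0^\infty \hat f(u^\alpha)e^{-u/y}\,du/y$ and the sector opening becomes $\pi+2\varepsilon$; the kernel $(\xi/x)^{1/\alpha}\,dx/(\alpha\xi)$ you wrote for the Hankel loop does not match this normalization (it should be $dx/(\alpha x)$ after the substitution), though this is a bookkeeping rather than a conceptual issue. It would also be worth stating explicitly where the hypothesis $\alpha<2$ is used: it guarantees $\alpha(\tfrac{\pi}{2}+\varepsilon)<\pi$ for small $\varepsilon$, so the sectors $\mathcal{C}_{\prec,\theta}$ in the definitions are genuine proper sectors.
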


We say that a function $f$ satisfying $\mathcal{P}_{\tmop{wat}, \alpha}$ is
$\frac{1}{\alpha}$-$\mathbbm{R}^+$-summable. The notion of
$\frac{1}{\alpha}$-summability is related to the notion of critical time and
celeration's theory [Bals, E3, Ma-Ra]. Notice that two different indices
$\alpha, \alpha'$ may yield two different notions of Borel-summability [Lo].

Let $\varepsilon, \kappa, \rho, K > 0$. Let $f$ be an analytic function on
\[ \mathcal{U} \assign \mathcal{C}_{\rho^{1 / 2}, \prec, \frac{1}{2} (
   \frac{\pi}{2} + \varepsilon)} \cup e^{i \pi} \mathcal{C}_{\rho^{1 / 2},
   \prec, \frac{1}{2} ( \frac{\pi}{2} + \varepsilon)} \]
satisfying (\ref{appendixanna2}) and (\ref{appendixanna4}) for $\alpha =
\frac{1}{2}$ and $x \in \mathcal{U}$. For $s \in \mathcal{C}_{\rho, \prec,
\frac{\pi}{2} + \varepsilon}$, let us define
\[ f_{\tmop{ev}} (s) = \frac{1}{2} \lp{1} f (s^{1 / 2}) + f (- s^{1 / 2})
   \rp{1}, \]
\[ f_{\tmop{odd}} (s) = \frac{s^{1 / 2}}{2} \lp{1} f (s^{1 / 2}) - f (- s^{1 /
   2}) \rp{1} . \]
Then the functions $f_{\tmop{ev}}$ and $f_{\tmop{odd}}$ are
$1$-$\mathbbm{R}^+$-summable and one can easily describe the
$\frac{1}{2}$-$\mathbbm{R}^+$-summable function $f$ with the help of the
$1$-$\mathbbm{R}^+$-summable functions $f_{\tmop{ev}}$ and $f_{\tmop{odd}}$
since
\[ f (x) = f_{\tmop{ev}} (x^2) + \frac{1}{x} f_{\tmop{odd}} (x^2) . \]
Of course, this process does not hold for an arbitrary
$\frac{1}{2}$-$\mathbbm{R}^+$-summable function. Let us now consider what
happens in the Borel plane. By Theorem \ref{theorembarbara0}, there exists a
function $\hat{f}$ satisfying $\hat{\mathcal{P}}_{\tmop{wat}, 1 / 2}$ such
that
\[ f (x) = \int_0^{+ \infty} \hat{f} (\xi) e^{- \frac{\xi^2}{x^2}} d_{\xi}
   \lp{2} \frac{\xi^2}{x^2} \rp{2} = \int_0^{+ \infty} \hat{f} (x \xi) e^{-
   \xi^2} d_{\xi} (\xi^2) . \]
By the properties of the function $f$, there exist $\varepsilon, \kappa, \rho,
K > 0$ such that the function $\hat{f}$ is analytic on $D_{\kappa^{1 / 2}}
\cup \mathcal{C}_{\prec, \frac{1}{2} \varepsilon} \cup e^{i \pi}
\mathcal{C}_{\prec, \frac{1}{2} \varepsilon}$ and satisfies (\ref{barbara2})
on $D_{\kappa^{1 / 2}} \cup \mathcal{C}_{\prec, \frac{1}{2} \varepsilon} \cup
e^{i \pi} \mathcal{C}_{\prec, \frac{1}{2} \varepsilon}$. Let
$\hat{f}_{\tmop{ev}}$ and $\hat{f}_{\tmop{odd}}$ be the analytic functions
defined on $D_{\kappa} \cup \mathcal{C}_{\prec, \varepsilon}$ by
\[ \hat{f}_{\tmop{ev}} (\zeta) \assign \frac{1}{2} \lp{1} \hat{f} (\zeta^{1 /
   2}) + \hat{f} (- \zeta^{1 / 2}) \rp{1}, \]
\[ \hat{f}_{\tmop{odd}} (\zeta) \assign \zeta^{1 / 2} \lp{1} \hat{f} (\zeta^{1
   / 2}) - \hat{f} (- \zeta^{1 / 2}) \rp{1} . \]
Then
\[ \hat{f} (\xi) = \hat{f}_{\tmop{ev}} (\xi^2) + \frac{1}{2 \xi}
   \hat{f}_{\tmop{odd}} (\xi^2) \]
and
\[ f_{\tmop{ev}} (s) = \int_0^{+ \infty} \hat{f}_{\tmop{ev}} (\zeta) e^{-
   \frac{\zeta}{s}} \frac{d \zeta}{s}, \]
\begin{equation}
  \label{appendixanna8} f_{\tmop{odd}} (s) = \int_0^{+ \infty}
  \hat{f}_{\tmop{odd}} (\xi^2) e^{- \frac{\xi^2}{s}} \frac{d \xi}{s^{1 / 2}} .
\end{equation}
In the last expression, we recognize the integral occuring in Proposition
\ref{nevflora0.44}. Then Proposition \ref{nevflora0.44} allows one to express
the integral in (\ref{appendixanna8}) as a standard Laplace transform: the
function $f_{\tmop{odd}}$ is $1$-$\mathbbm{R}^+$-Borel summable.

\subsection{\label{interpretationkarina}A remark about hypergeometric vection
transforms}

In this section, we do not give rigorous statements for the sake of
conciseness. Here we consider the following choice for the Laplace and the
Borel transform. If $\hat{f}$ is a function of a complex variable $\zeta$, we
denote
\[ \text{$\mathcal{L} \hat{f} (x) = \int_0^{+ \infty} \hat{f} (\zeta) e^{-
   \frac{\zeta}{x}} d \zeta$} . \]
Let $\sigma \in] 0, + \infty [$. Then the Borel transform $\mathcal{B}$, which
is the inverse transform of $\mathcal{L}$, satisfies
\[ \mathcal{B}(x^{\sigma}) (\zeta) = \frac{\zeta^{\sigma - 1}}{\Gamma
   (\sigma)} . \]
With this choice, the pointwise product is mapped on the natural convolution
product defined by (\ref{nevclara5.1}). Let $\gamma \in] 0, 1 [$. Let
\[ \hat{b} (\zeta) \assign \sum_{n \geqslant 0} \hat{b}_n \zeta^n . \]
Then
\[ b (x) \assign \mathcal{L}( \hat{b}) (\zeta) = \sum_{n \geqslant 0} \Gamma
   (n + 1) \hat{b}_n x^{n + 1} \]
and
\begin{equation}
  \label{interpretationkarina2} b (x) = x^{1 - \gamma} \times \sum_{n
  \geqslant 0} \Gamma (n + 1) \hat{b}_n x^{n + \gamma} .
\end{equation}
Then
\[ \hat{b} (\zeta) =\mathcal{B}(x^{1 - \gamma}) (\zeta) \ast \lp{2} \sum_{n
   \geqslant 0} \frac{\Gamma (n + 1) \hat{b}_n}{\Gamma (n + \gamma)} \zeta^{n
   + \gamma - 1} \rp{2} . \]
This induces a natural factorization:
\[ \hat{b} (\zeta) =\mathcal{B}(x^{1 - \gamma}) (\zeta) \ast \lp{2}
   \mathcal{B}(x^{\gamma}) (\zeta) \times \hat{a} (\zeta) \rp{2} \]
where
\[ \hat{a} (\zeta) \assign \sum_{n \geqslant 0} \frac{\Gamma (\gamma) \Gamma
   (n + 1)}{\Gamma (n + \gamma)} \hat{b}_n \zeta^n . \]
Notice that
\[ \mathcal{H}\mathcal{V}_{\gamma \rightarrow 1} ( \hat{a}) (\zeta)
   =\mathcal{B}(x^{1 - \gamma}) (\zeta) \ast \lp{2} \mathcal{B}(x^{\gamma})
   (\zeta) \times \hat{a} (\zeta) \rp{2} . \]
Then the content of Proposition \ref{nevalclara1} is related to the
factorization (\ref{interpretationkarina2}). Notice also that, for every $n
\in \mathbbm{N}$,
\begin{eqnarray*}
  \mathcal{H}\mathcal{V}_{\gamma \rightarrow 1} (\zeta^n) = &  & \frac{\Gamma
  (n + \gamma)}{\Gamma (\gamma) \Gamma (n + 1)} \zeta^n\\
  = &  & \frac{(\gamma)_n}{n!} \zeta^n
\end{eqnarray*}
where
\[ (\gamma)_n \assign \gamma (\gamma + 1) \cdots (\gamma + n - 1) . \]
Let $a, b, c \in] 0, 1 [$. Then hypergeometric vection transforms give, for
instance, the following decomposition of the standard hypergeometric function
$\text{}_2 F_1$
\begin{eqnarray*}
  \text{}_2 F_1 (a, b ; c ; z) \assign &  & \sum_{n \geqslant 0} \frac{(a)_n
  (b)_n}{(c)_n} \frac{z^n}{n!}\\
  = &  & \mathcal{H}\mathcal{V}_{a \rightarrow 1} \circ
  \mathcal{H}\mathcal{V}_{b \rightarrow 1} \circ \mathcal{H}\mathcal{V}_{1
  \rightarrow c} \lp{2} \frac{1}{1 - z} \rp{2} .
\end{eqnarray*}
{\texspace{med}{}}

\ \ \ \ \ \ \ \ \ \ \ \ \ \ \ \ \ \ \ \ \ \ \ \ \ \ \ \ \ \ \ \ \ \ \ \ \ \ \
\ \ \ REFERENCES

{\texspace{small}{}}

[A] A. Abdesselam, The ground state energy of the massless spin-boson model,
Ann. Henri Poincar\'e 12 (2011), 1321-1347.

[A-R] A. Abdesselam and V. Rivasseau, ``Trees, forests and jungles: a
botanical garden for cluster expansions'', Lectures Notes in Physics, vol 446
(1995), 7-36.

[Bals] W. Balser, From divergent power series to analytic functions,
Springer-Verlag.

[B-B] R. Balian and C. Bloch, \tmtextit{\tmtextup{Solutions of the
Schr\"odinger equation in terms of classical paths}}, Ann. of Phys.
\tmtextbf{85} (1974), 514-545.

[Br] D. C. Brydges Functional integrals and their applications (Notes for a
course for the ``troisi\`eme Cycle de la Physique en Suisse Romande'' given in
Lausanne, Switzerland, during the summer of 1992, Notes by R. Fernandez.

[B-K] D.C. Brydges, T. Kennedy, Mayer expansion and the Hamilton-Jacobi
Equation, Journal of Statistical Physics, Vol 48., Nos. 1/2, (1987).

[Co] L. Comtet, Advanced Combinatorics: The Art of Finite and Infinite
Expansions.

[D-D-P] E. Delabaere, H. Dillinger, F. Pham,
\tmtextit{\tmtextup{\tmtextmd{Exact semiclassical expansions for
one-dimensional quantum oscillator}}}, J. Math. Phys. \tmtextbf{38(12)}
(1997), 6126-6184.

[E1] J. Ecalle, Les fonctions r\'esurgentes (en trois parties), tome 1,
chapitre 4, pr\'epublications Orsay (1981)

[E2] J. Ecalle, Singularit\'es non abordables par la g\'eom\'etrie, Ann. Inst.
Fourier, 42 (1992), 73-164.

[E3] J. Ecalle, Six lectures on transseries, analysable functions and the
constructive proof of Dulac's conjecture, Bifurcations and periodic orbits of
vector fields, Nato ASI Series, Vol. 408, 1993, 75-184.

[E4] J. Ecalle, Celerations, scensions, vections (forthcomming).

[E-V] J. Ecalle, B. Vallet, The arborification-coarborification transform:
analytic, combinatorial, and algebraic aspects, Annales de la Facult\'e des
Sciences de Toulouse, Ser. 6, 13, no. 4, (2004).

[Fe] R.P. Feynman, The Theory of Positrons, Physical Review, vol 76 (6)
(1949), 749-759.

[Fu-Os-Wi] Y. Fujiwara, T.A. Osborn, S.F.J. Wilk; Wigner-Kirkwood expansions,
Physical Review A, 25-1 (1982), 14-34.

[Ha3] T. Harg\'e, Noyau de la chaleur en dimension quelconque avec potentiel
en temps petit, prepublication Orsay (1998).

[Ha4] T. Harg\'e, Borel summation of the small time expansion of the heat
kernel. The scalar potential case (2013).

[Ha6] T. Harg\'e, A deformation formula for the heat kernel (2013).

[Ha7] T. Harg\'e, Some remarks on the complex heat kernel on
$\mathbbm{C}^{\nu}$ in the scalar potential case (2013).

[Lo] M. Loday-Richaud, Introduction \`a la multisommabilit\'e, S.M.F., Gazette
des Math\'ematiciens 44 (1990).

[It] K. Ito, \tmtextup{Generalized uniform complex measures in the Hilbertian
metric space with their applications to the Feynman integral}, Fifth berkeley
Symp. on Math. Statist. and Prob. \tmtextbf{2} (1967), 145--161.

[Ma-Ra] J. Martinet, J.P. Ramis, Elementary acceleration and multisummability
I, Ann. Inst. Henri Poincarre, vol. 54, n$^{\circ}$4, 1991, 331-401.

[M-R] K.S Miller, B. Ross, An introduction to the fractional calculus and
fractional differential equations, Wiley-Interscience.

[On] E. Onofri, On the high-temperature expansion of the density matrix,
American Journal Physics, 46-4 (1978), 379-382.

[Si] B. Simon, Fifty years of eigenvalue perturbation theory, Bulletin (New
Series) of the American Mathematical Society, vol. 24, no. 2, 1991.

[V1] A. Voros, The return of the quartic oscillator.\tmtextit{ \tmtextup{The
complex WKB method}}, Annales de l'institut Henri Poincar\'e (A) Physique
th\'eorique \tmtextbf{39-3} (1983), 211-338.

{\texspace{med}{}}

D\'epartement de Math\'ematiques, Laboratoire AGM (CNRS), Universit\'e de
Cergy-Pointoise, 95000 Cergy-Pontoise, France.

\end{document}